\newcommand*{\addFileDependency}[1]{% argument=file name and extension
  \typeout{(#1)}% latexmk will find this if $recorder=0 (however, in that case, it will ignore #1 if it is a .aux or .pdf file etc and it exists! if it doesn't exist, it will appear in the list of dependents regardless)
  \@addtofilelist{#1}% if you want it to appear in \listfiles, not really necessary and latexmk doesn't use this
  \IfFileExists{#1}{}{\typeout{No file #1.}}% latexmk will find this message if #1 doesn't exist (yet)
}
\newtheorem{theorem}{Theorem}
\newtheorem{lemma}[theorem]{Lemma}
\newtheorem{proposition}[theorem]{Proposition}
\newtheorem{definition}[theorem]{Definition}
\newtheorem{assumption}{Assumption}
\renewcommand{\leq}{\leqslant} 
\renewcommand{\geq}{\geqslant}
\newcommand{\eps}{\varepsilon}
\newcommand{\set}[1]{\left\{#1\right\}}
\def\qed{ \hfill $\blacksquare$}  
\newcommand{\cB}{\mathcal{B}}\newcommand{\cC}{\mathcal{C}}
\newcommand{\cE}{\mathcal{E}}
\newcommand{\cG}{\mathcal{G}}
\newcommand{\vD}{\mathbf{D}}
\newcommand{\vS}{\mathbf{S}}
\newcommand{\vc}{\mathbf{c}}
\newcommand{\vd}{\mathbf{d}}
\newcommand{\vs}{\mathbf{s}}
\newcommand{\vx}{\mathbf{x}}
\newcommand{\vy}{\mathbf{y}} 
\newcommand{\expv}{\mathbb{E}}
\DeclareMathOperator{\var}{Var}
\newcommand{\reals}{\mathbb{R}}
\newcommand{\IND}{{\mathbbm{1}}}
\newcommand{\rfun}[1]{r_{uv}(#1)}
\newcommand{\ffun}[2]{f_{uv}(#1,#2)}
\newcommand{\mfun}[3]{\mu(#1 , #2 | #3)}
\newcommand{\sfun}[3]{\sigma(#1,#2|#3)}
\newcommand{\tsfun}[4]{\tilde\sigma(#1,#2|#3,#4)}
\newcommand{\devnt}[2]{\tilde a_n(#1,#2)}
\newcommand{\ruvdag}{\tilde r_{uv}}
\newcommand{\vpar}{\kappa}
\newcommand{\taun}{b\log n }
\newcommand{\Net}{\mathcal{G}}
\newcommand{\esum}{ \sum_E}
\newcommand{\mP}{\mathbf P}
\newcommand{\mM}{\mathbf M}
\newcommand{\mH}{\mathbf H}
\newcommand{\mPi}{\boldsymbol{\Pi}}
\newcommand{\vpi}{\boldsymbol{\pi}}
\newcommand{\Ualph}{U_\alpha(\;\cdot\;, \Net)}
\newcommand{\bra}{\langle}
\newcommand{\ket}{\rangle}
\newcommand{\prob}{\mathbb{P}}
\newcommand{\degp}{\phi}
\newcommand{\strp}{\psi}
\begin{document}

\title{Significance-based community detection in weighted networks}

\author{\name John Palowitch \email palojj@email.unc.edu \\
	\name Shankar Bhamidi \email bhamidi@email.unc.edu\\
	\name Andrew B.\ Nobel \email nobel@email.unc.edu\\
	\addr Department of Statistics and Operations Research\\
	University of North Carolina at Chapel Hill\\
	Chapel Hill, NC 27599}

\editor{}

\maketitle

\begin{abstract}%   <- trailing '%' for backward compatibility of .sty file
	Community detection is the process of grouping strongly connected nodes in a network. Many community detection methods for \emph{un}-weighted networks have a theoretical basis in a null model. Communities discovered by these methods therefore have interpretations in terms of statistical significance. In this paper, we introduce a null for weighted networks called the continuous configuration model. We use the model both as a tool for community detection and for simulating weighted networks with null nodes. First, we propose a community extraction algorithm for weighted networks which incorporates iterative hypothesis testing under the null. We prove a central limit theorem for edge-weight sums and asymptotic consistency of the algorithm under a weighted stochastic block model. We then incorporate the algorithm in a community detection method called CCME. To benchmark the method, we provide a simulation framework incorporating the null to plant ``background" nodes in weighted networks with communities. We show that the empirical performance of CCME on these simulations is competitive with existing methods, particularly when overlapping communities and background nodes are present. To further validate the method, we present two real-world networks with potential background nodes and analyze them with CCME, yielding results that reveal macro-features of the corresponding systems.
\end{abstract}

\begin{keywords}
	Community detection; Multiple testing; Network models; Weighted networks; Unsupervised Learning
\end{keywords}

%%%%%%%%%%%%%%%%%%%%%%%%%%%%%%%%%%%%%%%%%%%%%%%%%%%%%%%%%%%%%%%%%%%%%%%%%%%%%%

\section{Introduction}\label{Intro}

%For decades, the study of mathematical networks has spurred the analysis of relational data from a wide variety of systems. Network-based data analyses have driven advances in areas as diverse as social networks \citep{palla2007quantifying, ahn2007analysis}, systems biology \citep{barabasi2004network}, life sciences \citep{lusseau2004identifying, guimera2005functional}, marketing \citep{reichardt2005ebay}, computer science \citep{hui2007distributed, andersen2012overlapping}, and the study of the internet \citep{eriksen2003modularity, ebel2002scale}. Thorough reviews of the network analysis literature can be found in \cite{jacobs2014unified} and \cite{newman2003structure}.

For decades, the development of graph theory and network science has produced a wide array of quantitative tools for the study of complex systems. Network-based data analysis methods have driven advances in areas as diverse as social science, systems biology, life sciences, marketing, and computer science \cite[cf.][]{palla2007quantifying, barabasi2004network, lusseau2004identifying, guimera2005functional, reichardt2005ebay, andersen12}. Thorough surveys of the network science and methodology literature have been provided by \cite{newman2003structure} and \cite{jacobs2014unified}, among others.

Community detection is a common exploratory technique for networks in which the goal is to find subsets of nodes that are both strongly intraconnected and weakly interconnected \citep{newman2004detecting}. There are many possible definitions of a community, and a broad selection of community detection methods. Nonetheless, community detection can be an important starting point for further inquiry \citep{danon2005comparing}. For instance, community detection has been used to facilitate recommender systems in online social networks \cite[e.g.][]{sahebi11, xin2014book}, and has been used to ``hone in" on regions of genomes (human and otherwise) for a variety of downstream analyses \cite[e.g.][]{cabreros2015detecting, platig2015bipartite, fan2012secom}. Myriad examples of community detection applications can be found in \cite{porter09} and \cite{fortunato10}, and the references therein.

Many community detection methods are based on a null model, which in this context means a random network model without explicit community structure. For un-weighted networks the most common null is the configuration model \citep{bollobas1980probabilistic, bender1974asymptotic} or a related model like that of \cite{chung2002average, chung2002connected}. Historically, the most common approach involving a null model is the use of a node partition score that is large when nodes within the cells of the partition are highly interconnected, relative to what is expected under the null \citep{fortunato10, newman2006modularity}. Arguably the most famous example of such a criterion is modularity, introduced by \cite{newman04}. Various algorithms have been created to search directly for partitions of a network with large modularity \cite[see][]{clauset04, blondel08}, while other approaches use modularity as an auxiliary criterion \cite[see][]{langone2011modularity}. More recent approaches incorporate community-specific criteria which are large when the community exhibits high connectivity, allowing for community \emph{extraction} algorithms \cite[e.g.][]{zhao2011community, lanc11, wilson14}.

Generally speaking, communities found by null-based community detection methods can be said to have exhibited behavior strongly departing from the null. The results of these methods therefore carry a statistical \emph{testing} interpretation unavailable to alternate approaches to community detection, like spectral clustering \citep{white2005spectral, zhang2007identification} or likelihood-based approaches \citep{nowicki01, karrer2011stochastic}. In particular, recent methods put forth by \cite{lanc11} and \cite{wilson14} for binary networks exploit the theoretical properties of the configuration model to detect ``background" nodes that are not significantly connected to any community. These methods incorporate tail behavior of various graph statistics under the configuration model in a way that modularity-based methods do not.

A significant drawback of null-based community detection methodology is that no explicit null model exists for edge-weighted networks. Edge weights are commonplace in network data, and can provide information that improves community detection power and specificity \citep{newman04_2}. While many existing community detection methods have been established for weighted and un-weighted networks alike, due to the absence of an appropriate weighted-network null model, these methods do not provide rigorous significance assessments of weighted-network communities. For instance, the aforementioned method from \cite{lanc11}, called OSLOM, can incorporate edge weights, but uses an exponential function to calculate nominal tail probabilities for edge weight sums, a testing approach which is not based on an explicit null. As a consequence, communities in \emph{weighted} networks identified by OSLOM may in some cases be spurious or unreliable, especially when no ``true" communities exist.

%Nonetheless, few community detection methods (null-based or otherwise) currently handle edge weights. Thus, a common but naive approach is to discretize the graph with a pre-specified weight threshold, and then apply a standard procedure. However, thresholding has been shown ineffective for, or detrimental to, the discovery of communities \citep{aicher2014learning, thomas2011valued}.

%The few methods that directly incorporate edge weights usually involve an extension of an existing binary-network criteria. For instance, a commonly-used weighted version of modularity replaces edge counts with weight sums \citep{newman04_2}. As another example, the OSLOM method \citep{lanc11} uses an exponential function to calculate nominal tail probabilities for edge weight sums, in conjunction with analogous edge count tail probabilities. However, these extensions are fundamentally ad-hoc, since the new criteria with edge weights do not correspond to an explicit null. As a consequence, communities identified by these extended methods may in some cases be spurious or unreliable, especially when no ``true" communities exist.

The key methodological contributions in this article are as follows: (i) we provide an explicit null model for networks with weighted edges, (ii) we present a community extraction method based on hypothesis tests under the null, and (iii) we analyze the consistency properties of the method's core algorithm with respect to a weighted stochastic block model. These contributions provide the beginnings of a rigorous statistical framework with which to study communities in weighted networks. Through extensive simulations, we show that the accuracy of our proposed extraction method is highly competitive with other community detection approaches on weighted networks with both disjoint and overlapping communities, and on weighted networks with background nodes. Importantly, the weighted stochastic block model employed (in both the theoretical and empirical studies) allows for arbitrary expected degree and weighted-degree distributions, reflecting degree heterogeneity observed in real-world networks. To further validate the method, we apply it to two real data sets with (arguably) potential overlapping communities and background nodes. We show that the proposed method recovers sensible features of the real data, in contrast to other methods.

\subsection{Paper organization}

The rest of the paper is organized as follows. We start by introducing general notation in Section \ref{Notation}. In Section \ref{Model} we motivate and state the continuous configuration model. In Section \ref{core-algo}, we introduce a core algorithm to search for communities using multiple hypothesis testing under the model. In Section \ref{theory}, we prove both a central limit theorem and a consistency result for the primary test statistic in the core algorithm. We describe the implementation and application of the core algorithm in Section \ref{Method}, and evaluate its empirical efficacy on simulations and real data in Section \ref{Simulations} and \ref{Data} (respectively). We close with a discussion in Section \ref{Discussion}. 
%{\color{red} possibly extend this section to include a graphic or clear list displaying the sections and their relationships.}

\subsection{Notation and terminology}\label{Notation}

We denote an undirected weighted network on $n$ nodes by a triple $\mathcal{G}:= (N,A,W)$, where $N := \{1,\ldots,n\}$ is the node set with $u,v$ as general elements, $A$ is the adjacency matrix with $A_{uv} = 1$ if and only if there is an edge between $u$ and $v$, and $W$ is the weight matrix with non-negative entries $W_{uv}$ containing edge weights between nodes $u$ and $v$. Note that $A_{uv} = 0$ implies $W_{uv}=0$, but $W_{uv}$ may be zero even when $A_{uv} = 1$. This allows for networks with potentially zero edge weights; for instance, an online social network from which friendship links are edges and message counts are edge weights. The degree of a node $u$ is defined by
$d(u) := \sum_{v \in N} A_{uv}$, and we denote the vector of node degrees by ${\bf d} = (d_1,\ldots,d_n)$.
In an analogous fashion, we define the {\it strength} of a node by
$s(u) := \sum_{v \in N} W_{uv}$, and the strength vector of the network by ${\bf s} = (s(1),\ldots,s(n))$.
The total degree and strength of $\mathcal{G}$ are given by $d_T := \sum_{v \in  N} d(v)$ and 
$s_T := \sum_{v \in  N} s(v)$, respectively.
%U.S. airport networks \cite{rita} and a weighted network of email addresses newly constructed from the publicly available ENRON email corpus \cite{klimt04}.

% Model

\section{The continuous configuration model}\label{Model}

To motivate the null model, we first explain the intuition behind the binary configuration model for unweighted networks.  The binary configuration model for an $n$-node network is based on a given degree vector ${\bf d}$ corresponding to the nodes. Studied originally in \cite{bollobas1980probabilistic} and \cite{bender1974asymptotic}, the model is equivalent to a process in which each node $u$ receives $d(u)$ half-edges, which are paired uniformly-at-random without replacement until no half-edges remain \citep{molloy1995critical}. In other words, the model guarantees a graph with degrees $\vd$ but otherwise uniformly distributed edges. Therefore, given an observed network with degrees $\vd$, a typical draw from the configuration model under $\vd$ represents that network without any community structure. As a result, many community detection methods proceed by identifying node sets having intra-connectivity significantly beyond what is expected under the model. For instance, the modularity measure, introduced by \cite{newman04}, scores node partitions of binary networks according to the observed versus configuration model-expected edge densities of the communities. The methods OSLOM \citep{lanc11} and ESSC \citep{wilson14} use the configuration model to assess the statistical significance of the deviations graph statistics from their configuration model-expected values.
% If networks with self-loops are excluded, the model results in a uniform probability distribution on unweighted graphs with the given degree sequence ${\bf d}$. 

The degrees $\vd$ of the configuration model can be thought of as the nodes' relative propensities to form ties. Chung and Lu made this notion explicit by defining a Bernoulli-based model for a $n$-node unweighted network with a given expected degree sequence \citep{chung2002connected}. Under this model, the probability of nodes $u$ and $v$ sharing an edge is exactly $d(u)d(v)/d_T$. As null models for community detection, the Chung-Lu and configuration are often interchangeable \citep{durak2013scalable}. Indeed, for sparse graphs it can be shown that the probability of an edge between $u$ and $v$ under the configuration model is approximately the Chung-Lu probability.
The \emph{continuous} configuration model, introduced below, extends the spirit of the configuration and Chung-Lu models by taking both observed degrees $\bf d$ and strengths $\bf s$ as node propensities for (respectively) edge connection and edge weight. 

We use the following notation to concisely express the model. Given a vector $\vx$ of dimension $n$, we define for any indices $u,v\in N$ the ratio
\begin{equation}\label{ratio}
r_{uv}(\vx) := \frac{x(u)x(v)}{\sum_{w\in N}x(w)}
\end{equation}
Define $\ruvdag(\vx) := \min\{1,r_{uv}(\vx)\}$. Note that when $\vx$ is a degree sequence $\vd$, $r_{uv}(\vd)$ is the Chung-Lu probability of an edge between nodes $u$ and $v$. Finally, for a vector $\vy$ of dimension $n$, define $f_{uv}(\vx,\vy) := r_{uv}(\vy)/\ruvdag(\vx)$.

\subsection{Model statement}\label{Model:statement}
The continuous configuration model on $n$ nodes has the parameter triple $\theta := ({\bf d}, {\bf s}, \vpar)$, where ${\bf d} \in \{1, 2, 3, \ldots\}^n$ is a degree vector, ${\bf s} \in [0,\infty)^n$ is a strength vector, and $\vpar > 0$ is a variance parameter. Let $F$ be a distribution on the non-negative real line with mean one and variance $\vpar$. The model specifies a random weighted graph 
${\Net} := (N, A, W)$ on $n$ nodes as follows:
\begin{enumerate}
	\item $\prob(A_{uv} = 1) = \ruvdag(\vd)$ independently for all node pairs $u, v \in N$
	\item{For each node pair $u,v$ with $A_{uv} = 1$, generate an independent random variable $\xi_{uv}$ according to $F$, and assign edge weights by: 
		
		$W_{uv} = 
		\begin{cases} 
		f_{uv}(\vd,\vs)\xi_{uv},  & A_{uv} = 1\\
		0,  & A_{uv} = 0\end{cases}$\\
	}
\end{enumerate}
The edge generation defined by step 1 is equivalent to the Chung-Lu model: edge indicators are Bernoulli, with probabilities adjusted by the propensities $\vd$. The weight generation in step 2 mirrors this process. Edge weights follow the distribution $F$, with means adjusted by the propensities $\vs$, through $f(\vd,\vs)$. If $r_{uv}(
\vd)\leq1$ for all $u,v\in N$ (that is, all probabilities are proper), it is easily derived from the model that
\begin{equation}\label{Model:expv2}
P(A_{uv} = 1) = \frac{d(u)d(v)}{d_T}
\ \ \text{ and } \ \ 
\expv\left({W}_{uv}\right) = \frac{s(u)s(v)}{s_T},
\end{equation}
equations which extend the binary-network notion of null behavior to edge weights. The equations in \eqref{Model:expv2} imply that 
\begin{equation}\label{Model:expv1}
\expv\left({D}(u)\right) = d(u) \ \ \text{ and } \ \  \expv\left({S}(u)\right) = s(u)\; \text{ for all }\; u\in N.
\end{equation}
where $D(u)$ and $S(u)$ are the (random) degree and strength of $u$ under the model. Thus, the continuous configuration model can be thought of as null weighted network with given expected degrees and given expected strengths.

\subsection{Use of the null model}\label{Model:specification}
When the \emph{binary} configuration model is used for community detection, the degree parameter of the model is set to the observed degree distribution of the network. In a sense, this is an \emph{estimation} of the nodes' connection propensities under the null. Similarly, to use the continuous configuration model in practice, we derive the parameter $\theta$ from the data at hand. Given an observed network $\Net$, we straightforwardly use the observed degrees and strengths $\bf d$ and $\bf s$ as the first two parameters of the model. The third parameter of the continuous configuration model, $\vpar$, is also computed from the $\Net$, and meant to capture its observed average edge-weight variance. We use the following method-of-moments estimator to specify $\vpar$:
\begin{equation}
\label{Model:var2}
\hat\vpar(\vd,\vs) := \underset{u,v:A_{uv} = 1}{\sum} \left(W_{uv} - f_{uv}(\vd,\vs) \right)^2
/ \underset{u,v:A_{uv} = 1}{\sum}f_{uv}(\vd,\vs)^2
\end{equation}
This estimator is derived as follows. Under the continuous configuration model with $\vd$ and $\vs$,
\begin{equation}
\label{Model:var1}
\text{Var} \left( W_{uv} \, \big| \, A_{uv} = 1 \right)
\ = \  
f_{uv}(\vd,\vs)^2 \, \text{Var}\left({\xi}_{uv}\right) 
\ = \ 
f_{uv}(\vd,\vs)^2 \, \vpar.
\end{equation}
Therefore
\begin{eqnarray}
\expv \left\{ \underset{u,v:A_{uv} = 1}{\sum} \left( W_{uv} - f_{uv}(\vd,\vs) \right)^2 \, \Big| \ A \right\} 
& = &
\underset{u,v:A_{uv} = 1}{\sum}\text{Var} \left( {W}_{uv} \; \big| \; A_{uv} = 1 \right) \nonumber\\
&\;=\; &
\vpar \underset{u,v:A_{uv} = 1}{\sum}f_{uv}(\vd,\vs)^2,\nonumber
\label{Model:var3}
\end{eqnarray}
Dividing through by $\sum_{u,v:A_{uv}=1}f_{uv}(\vd,\vs)$ motivates equation \ref{Model:var2}.

Strictly speaking, the distribution $F$ is also a parameter of the model. However, for testing purposes we do not require a null specification of $F$. As we discuss in the next section, p-values from the model will be based on a central limit theorem that requires only a third-moment assumption on $F$. While estimating $F$ could improve the model's efficacy as a null, in general this would require potentially costly computational procedures, and additional theoretical assumptions that might be difficult to support or verify in practice. The specification of $F$ will be most useful for applications of the model that involve simulations or likelihood-based analyses.

% Core

\section{Test statistic and update algorithm}\label{core-algo}

In this section we introduce a core testing-based community detection algorithm based on the continuous configuration model. The algorithm allows for a community detection approach which employs iterative node-set updating, following some recently-introduced methods \cite[e.g.][]{lanc11, wilson14}. First, we define a set update as a map $\Ualph: \mathbbm{2}^{N} \mapsto \mathbbm{2}^{N}$, indexed by a parameter $\alpha\in(0,1)$. Given a weighted network $\Net$ and candidate set $B\subseteq N$, the update $U_\alpha(B,\Net)$ outputs a new set $B'$ formed by the nodes from $N$ that have statistically significant association to $B$ at level $\alpha$, after a multiple-testing correction. We now describe $U_\alpha$ in detail.

The connectivity of a single node $u\in N$ to a candidate set $B$ is computed via the simple test statistic
\begin{equation}\label{eq:test_stat}
S(u,B,\Net) := \sum_{v \in B} W_{uv},
\end{equation}
which is the sum of all weights on edges incident with $u$ and $B$. When the observed value of $S(u,B,\Net)$ is much larger than its expectation under the 
continuous configuration model, there is evidence to support an association 
between $u$ and $B$ resulting from some form of ``ground-truth" community structure in the network. We assess the strength of evidence, that is, the significance of $S(u,B,\Net)$, with the p-value
\begin{equation}\label{eq:pvalue}
p(u , B, \Net) := \prob\left( S(u,B,\widetilde{\Net}) > S(u,B,\Net) \right),
\end{equation}
where $\widetilde{\Net}$ is random with respect to $\prob$, the distribution of the continuous configuration model with parameters $\bf d, \bf s$, and $\hat{\vpar}(\bf d, \bf s)$ (see Section \ref{Model:specification}). The update $U_\alpha$ is then:

\vskip0.2cm

\noindent\fbox{\parbox{\linewidth}{
		\textbf{Core update $U_\alpha$}
		\begin{enumerate}[1.]
			\item Given: graph $\Net$ with nodes $N$ and input set $B\subseteq N$
			\item Calculate p-values ${\bf p} := \{p(u,B,\Net):u\in N\}$
			\item Obtain threshold $\tau({\bf p})$ from a multiple-testing procedure
			\item Output set $B'=\{u:p(u,B,\Net)\leq\tau({\bf p})\}$
		\end{enumerate}
	}}
	
\vskip0.2cm

Many methods to compute a multiple-testing threshold $\tau({\bf p})$ are available, the most stringent being the well-known Bonferroni correction. The correction we employ is the false discovery rate (FDR) control procedure of 
\cite{benjamini1995controlling}. Given a set of p-values ${\bf p} := \{p_u\}_{u\in N}$ corresponding 
to $n$ hypothesis tests and a target FDR $\alpha\in(0, 1)$, each p-value $p_u \in {\bf p}$ is associated with an 
adjusted p-value $p_u^\ast := n \, p_u \, / \, j(u)$ where $j(u)$ is the rank of $p_u$ in ${\bf p}$, and $\tau({\bf p}):= \max\{p_u:p_u^\ast\leq\alpha\}$. Benjamini and Hochberg show that, if the p-values corresponding to true null hypotheses are independent, the threshold $\tau({\bf p})$ bounds the expected number of false discoveries at $\alpha$.

The update $U_\alpha$ is an exploratory tool for moving an input set $B$ closer to a ``target" community. Consider that, if the initial set $B$ has a majority group of nodes from some strongly-connected community $C$, the statistic $S(u,B,\Net)$ will be large for $u\in C$, and small otherwise. In this case, $U_\alpha$ applied to $B$ will often return many nodes in $C$, and few nodes in $C^c$. Indeed, ideally, we should expect $U_\alpha(C,\mathbf{D})$ to return $C$, given strong enough signal in the data. This reasoning motivates an algorithm that searches for ``stable communities" $C$ satisfying $U_\alpha(C,\mathbf{D}) = C$. By definition, all interior nodes of a stable community $C$ are significantly connected to $C$, and exterior nodes are not. We define a stable community search procedure, which iteratively applies $U_\alpha$ until convergence:

%The usage of $\Ualph$ in practice is to find ``stable" communities $C$ that satisfy $U_\alpha(C, \Net) = C$. By construction of $\Ualph$, all nodes $u$ in a stable community $C$ have significant observed association to $C$, and all nodes $u\notin C$ do not. Stable communities therefore have the desirable ``strongly intra-connected, weakly inter-connected" property (see the outset of Section \ref{Intro}) in a rigorous, statistical sense that controls for multiple tests. We now define a stable community search (SCS) algorithm which incorporates $\Ualph$:

\vskip0.2cm

\noindent\fbox{\parbox{\textwidth}{
		\textbf{Stable community search (SCS) algorithm}
		\begin{enumerate}[1.]
			\item Given weighted graph $\Net$ with nodes $N$ and initial set $B_1\subseteq N$; set $B_0 := \phi$, $t = 1$
			\item If $B_t = B_{t'}$ for some $t'<t$, terminate. 
			\item Set $B_{t+1} \gets U_\alpha(B_t,\Net)$ and $t \gets t+1$. Return to step 2.
		\end{enumerate}
	}}
	
	\vskip0.2cm
	Since the number of possible node subsets $B_t$ is finite, SCS is guaranteed to terminate. There are some technicalities regarding use of this algorithm, like how to obtain $B_1$, and when in rare cases $t' < t-1$. We relegate resolution of these issues to Section \ref{Method}. For now, the update $U_\alpha$ and SCS raise two theoretical questions:
\begin{enumerate}[1.]
	\item Is the p-value $p(u,B,\Net)$ analytically tractable? If not, is there a useful distributional approximation based on the continuous configuration model?
	\item Consistency: with what power can SCS detect ground-truth community structure?
\end{enumerate}
These questions are the focus of the next section.

% Theory

\section{Theoretical Results}\label{theory}
We now address the theoretical questions raised at the end of the previous section by analyzing the distribution of the test statistic $S(u,B,\Net)$ under the continuous configuration model (for question 1) and an appropriate alternative model with planted community structure (for question 2). Both analyses have an asymptotic setting consisting of a sequence of random weighted networks. Denote this sequence by $\{\Net_n\}_{n>1}$. If $\Net_n$ is a continuous configuration model with parameters $\theta := (\vs, \vd, \vpar)$, the following proposition gives general expressions for the mean and standard deviation of $S(u,B,\Net_n)$:
\begin{proposition}\label{prop:sub}
	\label{sub-mean-sd}
	Let $\Net= (N, A, W)$ be a random network generated by the continuous configuration model with parameters $\theta = (\vs,\vd,\vpar)$. For any $(u,B)\in N\times 2^{ N}$, let $\mfun{u}{B}{\theta}$ and $\sfun{u}{B}{\theta}$ be, respectively, the mean and standard deviation of $S(u,B,\Net)$ under $\Net$. Then
	\begin{equation}\label{eq:mean_fun}
	\mfun{u}{B}{\theta} \equiv \mfun{u}{B}{\vs} = \underset{v\in B}{\sum}r_{uv}(\vs)
	\end{equation}
	and
	\begin{equation}\label{eq:var_fun}
	\sfun{u}{B}{\theta}^2 = \underset{v\in B}{\sum}r_{uv}(\vs)f_{uv}(\vd,\vs)\left(1 - \ruvdag(\vd) + \vpar\right)
	\end{equation}
\end{proposition}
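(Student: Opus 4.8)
The plan is to reduce the computation to a single edge weight $W_{uv}$ and then sum over $v\in B$, exploiting the mutual independence of $\{W_{uv}\}_{v\in B}$ built into the continuous configuration model. Throughout write $f_{uv} := f_{uv}(\vd,\vs) = r_{uv}(\vs)/\ruvdag(\vd)$. By steps 1--2 of the model we may represent $W_{uv} = f_{uv}\,\xi_{uv}\,\indicator{A_{uv}=1}$, where $A_{uv}\sim\mathrm{Bernoulli}(\ruvdag(\vd))$ and $\xi_{uv}\sim F$ are independent with $\expv\xi_{uv}=1$ and $\var(\xi_{uv})=\vpar$; moreover the pairs $(A_{uv},\xi_{uv})$ are independent across $v$, so that the $\{W_{uv}\}_{v\in B}$ are independent.

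\emph{Mean.} Conditioning on $A_{uv}$, $\expv[W_{uv}] = f_{uv}\,\expv[\xi_{uv}]\,\prob(A_{uv}=1) = \frac{r_{uv}(\vs)}{\ruvdag(\vd)}\cdot 1\cdot \ruvdag(\vd) = r_{uv}(\vs)$. By linearity of expectation, $\mfun{u}{B}{\theta} = \sum_{v\in B}\expv[W_{uv}] = \sum_{v\in B} r_{uv}(\vs)$. The factors $\ruvdag(\vd)$ cancel, which is precisely why the mean depends on $\theta$ only through $\vs$; this gives \eqref{eq:mean_fun}.

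\emph{Variance.} First compute $\var(W_{uv})$ by the law of total variance with respect to $A_{uv}$. The conditional mean and variance are $\expv[W_{uv}\mid A_{uv}] = f_{uv}A_{uv}$ and $\var(W_{uv}\mid A_{uv}) = f_{uv}^2\vpar\,A_{uv}$ (both vanish on $\{A_{uv}=0\}$ and equal $f_{uv}$, resp.\ $f_{uv}^2\vpar$, on $\{A_{uv}=1\}$). Since $\var(A_{uv}) = \ruvdag(\vd)\bigl(1-\ruvdag(\vd)\bigr)$,
\[
\var(W_{uv}) = f_{uv}^2\vpar\,\ruvdag(\vd) + f_{uv}^2\,\ruvdag(\vd)\bigl(1-\ruvdag(\vd)\bigr) = f_{uv}^2\,\ruvdag(\vd)\bigl(1-\ruvdag(\vd)+\vpar\bigr).
\]
Using the identity $f_{uv}^2\,\ruvdag(\vd) = r_{uv}(\vs)^2/\ruvdag(\vd) = r_{uv}(\vs)\,f_{uv}(\vd,\vs)$ to recast the leading factor, and then summing over $v\in B$ (valid by independence of $\{W_{uv}\}_{v\in B}$), we obtain $\sfun{u}{B}{\theta}^2 = \sum_{v\in B} r_{uv}(\vs)f_{uv}(\vd,\vs)\bigl(1-\ruvdag(\vd)+\vpar\bigr)$, i.e.\ \eqref{eq:var_fun}.

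There is no genuine obstacle here: the proposition is an exact identity and needs no asymptotics. The only points requiring care are (i) writing $W_{uv}$ in product form so that the conditioning on $A_{uv}$ is transparent, (ii) bookkeeping of which quantities are functions of $\vd$ versus $\vs$, and (iii) the elementary algebraic identity $f_{uv}^2\,\ruvdag(\vd) = r_{uv}(\vs)\,f_{uv}(\vd,\vs)$, which is what puts the variance into the form displayed in the statement.
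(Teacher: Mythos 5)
Your proof is correct and follows essentially the same route as the paper's: compute the conditional mean $f_{uv}(\vd,\vs)A_{uv}$ and conditional variance $\vpar f_{uv}(\vd,\vs)^2 A_{uv}$, apply the law of total variance to get $f_{uv}^2\,\ruvdag(\vd)\bigl(1-\ruvdag(\vd)+\vpar\bigr) = r_{uv}(\vs)f_{uv}(\vd,\vs)\bigl(1-\ruvdag(\vd)+\vpar\bigr)$, and sum over $v\in B$. Your explicit appeal to independence of $\{W_{uv}\}_{v\in B}$ when summing variances is a point the paper leaves implicit, but the argument is the same.
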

The proof, given in Appendix \ref{Prop1proof}, follows from easy calculations with the model's generating procedure (see Section \ref{Model:statement}). All theoretical results will make use of the expressions defined in equations \ref{eq:mean_fun} and \ref{eq:var_fun}.

\subsection{Asymptotic Normality of $S(u,B,\Net)$}\label{Method:CLT}
A central limit theorem under the null model is now established for $S(u,B,\Net)$, yielding a closed-form approximation for the p-value in equation  \eqref{eq:pvalue}. This result is motivated by the fact that, under most non-trivial null parameter specifications, the distribution of $S(u,B,\Net)$ is not analytically tractable. 

In the setting of the theorem, for any $n>1$, a random network $\Net_n$ is generated by a continuous configuration model with parameter $\theta_n := ({\bf d}_n, {\bf s}_n, \vpar_n)$ and common weight distribution $F$. The following regularity conditions are required on the sequence  $\{\theta_n\}_{n>1}$. Let $\lambda_n$ denote the average entry of ${\bf d}_n$, (which is the average 
expected degree of ${\Net}_n$).  For each $r \geq 0$ let 
$L_{n, r} := n^{-1} \sum_{u \in  N} (d_n(u)/\lambda_n)^r$ be the normalized 
$r^{\text{th}}$-moment of ${\bf d}_n$.  Note that $L_{n, 1} = 1$.
The regularity conditions are then as follows:

\noindent
\begin{assumption}\label{assumption:power}
	Define $e_n(u|\beta):=s_n(u)/d_n(u)^{1+\beta}$. There exists $\beta > 0$ such that
	\[
	0 < \underset{n\rightarrow\infty}{\liminf} \; \min_{u\in N}e_n(u|\beta)
	\;\text{ and }\;
	\limsup_{n\rightarrow\infty}\; \max_{u \in  N} e_n(u|\beta) < \infty.
	\]
\end{assumption}

\noindent
\begin{assumption}\label{assumption:moment}
	Let $\beta$ be as in Assumption \ref{assumption:power}. There exists $\eps > 0$ such that, for both $r = 4\beta + 2$ and $r = 4\beta + 2 + \eps$,
	\[
	0 < \underset{n \rightarrow \infty}{\liminf} \; L_{n, r}
	\;\text{ and }\;
	\limsup_{n\rightarrow\infty} \; L_{n, r} < \infty
	\]
\end{assumption}

\noindent
\begin{assumption}\label{assumption:ruv}
	$\underset{n\rightarrow\infty}{\limsup}\;\underset{u,v\in N}{\sup}r_{uv}(\vd_n)<\infty$.
\end{assumption}

\noindent
\begin{assumption}\label{assumption:distribution}
	The sequence $\{\vpar_n\}_{n \geq1}$ is bounded away from zero and infinity, 
	and $F$ has finite third moment.
\end{assumption}

%These parameter conditions have a trade-off relationship. The value of $\beta$ imposes regularity, via the power-law condition, on the relationship between the degrees and strengths; $\beta =0$ corresponds to the regime where $O(s_n(u)) = O(d_n(u))$ for all $u$. As $\beta$ increases, we require the stability of increasingly higher moments for the normalized degree sequences, via the moment condition. 

Assumption \ref{assumption:power} reflects the common relationship between strengths and degrees 
in real-world weighted networks \citep{barrat04, clauset09}.  Assumptions \ref{assumption:moment}-\ref{assumption:ruv} are needed to control
the extremal behavior of the degree distribution. They exclude, for instance, cases with 
a few nodes having $d_n(u)\asymp n$ and the remaining nodes having $d_n(u) = O(1)$. We note that the Assumption \ref{assumption:moment} becomes more stringent as $\beta$ increases, since as $\beta$ increases the strength-degree power law becomes more severe.
% CLT Theorem
\begin{theorem} \label{thm:CLT}
	For each $n>1$, let $\Net_n$ be generated by the continuous configuration model with parameter $\theta_n$ and weight distribution $F$. Suppose $\{\theta_n\}_{n\geq1}$ and $F$ satisfy Assumptions \ref{assumption:power}-\ref{assumption:distribution}.  Fix a node sequence $\{u_n\}_{n\geq1}$ with $u_n\in N$ and a positive integer sequence $\{b_n\}_{n\geq1}$ with $b_n\leq n$. Suppose $d_n(u_n)b_n/n\rightarrow\infty\text{ as }n\rightarrow\infty$. Let $B_n\subseteq N$ be a node set chosen independently of ${\Net}_n$ according to the uniform distribution on all sets of size $b_n$. Then
	\begin{equation}\label{eq:CLT}
	\frac{S(u_n,B_n,\Net_n) - \mu_n(u_n,B_n|\theta_n)}{\sigma_n(u_n,B_n|\theta_n)}\Rightarrow\mathcal{N}(0,1)\;\text{ as }\;n\rightarrow\infty
	\end{equation}
\end{theorem}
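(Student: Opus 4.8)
The plan is to prove the CLT by viewing $S(u_n, B_n, \Net_n)$ as a sum of independent random variables and invoking a Lyapunov-type condition, after first controlling the randomness coming from the uniform choice of $B_n$. Write
\[
S(u_n, B_n, \Net_n) = \sum_{v \in N} \IND\{v \in B_n\}\, W_{u_n v},
\]
where the $W_{u_n v}$ are independent across $v$ (given $u_n$), and $B_n$ is independent of $\Net_n$. Conditionally on $B_n$, this is a sum of $b_n$ independent terms $W_{u_n v}$, $v \in B_n$, each of the form $f_{u_n v}(\vd_n, \vs_n)\,\xi_{u_n v}\,\IND\{A_{u_n v}=1\}$. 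Proposition \ref{prop:sub} already gives the exact mean $\mfunn{u_n}{B_n}{\theta_n}$ and variance $\sfunn{u_n}{B_n}{\theta_n}^2$ of this conditional sum. So the first move is to reduce to a conditional statement: it suffices to show that, for $B_n$ drawn from a set of ``good'' configurations that has probability tending to one, the normalized conditional sum converges to $\cN(0,1)$; then an application of dominated convergence (on characteristic functions) transfers this to the unconditional statement.

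\textbf{Main steps.} First I would establish two-sided bounds, holding with probability $1-o(1)$ over the choice of $B_n$, on the ``aggregate'' quantities $\sum_{v\in B_n} r_{u_n v}(\vs_n)$, $\sum_{v\in B_n} r_{u_n v}(\vs_n) f_{u_n v}(\vd_n,\vs_n)$, and higher analogues $\sum_{v\in B_n} r_{u_n v}(\vs_n) f_{u_n v}(\vd_n,\vs_n)^{k-1}$ for $k=2,3$. Because $B_n$ is uniform of size $b_n$, each such sum has expectation $\tfrac{b_n}{n}$ times the corresponding full-network sum, and concentrates around it (e.g.\ by a Chebyshev/Hoeffding bound for sampling without replacement, or a bounded-differences argument). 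Here Assumption \ref{assumption:power} is used to write $s_n(u) \asymp d_n(u)^{1+\beta}$, so that $f_{u_n v}(\vd_n,\vs_n) = r_{u_n v}(\vs_n)/\ruvdag(\vd_n)$ and the full-network sums become moments of the degree sequence; Assumptions \ref{assumption:moment}–\ref{assumption:ruv} then guarantee these moments are of the right polynomial order in $\lambda_n$ and bounded away from $0$ and $\infty$ after normalization. The upshot should be that, with probability $1-o(1)$,
\[
\sfunn{u_n}{B_n}{\theta_n}^2 \asymp \frac{d_n(u_n)\, b_n}{n}\cdot c_n
\]
for a sequence $c_n$ bounded away from $0$ and $\infty$ (using Assumption \ref{assumption:distribution} for the $1-\ruvdag(\vd_n)+\vpar_n$ factor). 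Crucially, the hypothesis $d_n(u_n) b_n / n \to \infty$ forces $\sfunn{u_n}{B_n}{\theta_n}^2 \to \infty$, which is exactly the ``diverging variance'' needed for a CLT of a triangular array.

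\textbf{Lyapunov condition.} With the variance lower bound in hand, I would verify Lyapunov's condition for the conditional sum: show
\[
\frac{1}{\sfunn{u_n}{B_n}{\theta_n}^{3}}\sum_{v\in B_n}\expv\Bigl|W_{u_n v} - \expv W_{u_n v}\Bigr|^{3}\;\longrightarrow\;0.
\]
Each summand is bounded by a constant times $r_{u_n v}(\vs_n) f_{u_n v}(\vd_n,\vs_n)^2 \cdot \expv|\xi - \expv\xi|^3$, finite by the third-moment assumption on $F$ in Assumption \ref{assumption:distribution}; the numerator is then controlled by the $k=3$ aggregate bound from the previous step, giving numerator $\asymp (d_n(u_n) b_n/n)\cdot(\text{poly in }\lambda_n)$ while the denominator is $\asymp (d_n(u_n)b_n/n)^{3/2}\cdot(\text{poly in }\lambda_n)$. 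One checks that the net power of $d_n(u_n)b_n/n$ is $-1/2 \to 0$ and that the ratio of the $\lambda_n$-polynomials stays bounded — this is where Assumption \ref{assumption:moment} is invoked with both exponents $r=4\beta+2$ and $r=4\beta+2+\eps$, the small extra $\eps$ providing the slack that makes the third-moment ratio vanish rather than merely stay bounded. Lindeberg–Feller (or Lyapunov) then yields the conditional CLT, and the reduction from the first paragraph upgrades it to \eqref{eq:CLT}.

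\textbf{Main obstacle.} The routine parts are the variance computation (already done) and the Lyapunov bookkeeping; the delicate part is the bookkeeping that translates Assumptions \ref{assumption:power}–\ref{assumption:ruv} into the precise polynomial-in-$\lambda_n$ orders of the aggregate sums, and in particular arranging that the Lyapunov ratio involves $L_{n,r}$ at exactly the exponent $r = 4\beta+2$ (hence the need for $r=4\beta+2+\eps$ to close the argument). I expect the main work — and the place where the hypotheses are genuinely needed rather than cosmetic — to be showing that the ``good event'' on $B_n$ (simultaneous concentration of all the aggregate sums and, separately, the lower bound on $\sfunn{u_n}{B_n}{\theta_n}^2$) indeed has probability $1-o(1)$ uniformly in the node sequence $\{u_n\}$, using only that $d_n(u_n)b_n/n\to\infty$ and not a stronger growth rate.
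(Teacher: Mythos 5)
Your overall architecture matches the paper's: condition on $B_n$, note that $S(u_n,B_n,\Net_n)$ is then a sum of independent terms, verify a Lyapunov condition whose constants are governed by sampled degree moments of the form $L_{n,r}(B_n):=b_n^{-1}\sum_{v\in B_n}(d_n(v)/\lambda_n)^r$, control those sampled moments over the uniform draw of $B_n$ by a Chebyshev bound for sampling without replacement (the negative covariance helps), and transfer the conditional CLT back to the unconditional statement (the paper does this by a subsequence argument; your dominated-convergence-on-characteristic-functions route is equally valid). The reduction of the Lyapunov ratio to $(d_n(u_n)b_n/n)^{-\delta/2}$ times a ratio of sampled moments is also exactly the paper's computation.

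The genuine gap is your choice of Lyapunov exponent. You verify the condition with \emph{third} absolute moments, which makes the numerator aggregate behave like $L_{n,3\beta+1}(B_n)$, and you assert that this quantity \emph{concentrates} around its mean. Two-sided concentration of $L_{n,r}(B_n)$ via Chebyshev requires $L_{n,2r}$ bounded, i.e.\ degree moments of order $6\beta+2$; Assumption \ref{assumption:moment} only supplies moments up to order $4\beta+2+\eps$, and $\eps$ may be arbitrarily small, so for $\beta>\eps/2$ your concentration step has no support from the hypotheses. (A bounded-differences argument fares no better, since the increments $d_n(v)^{3\beta+1}$ are unbounded under heavy-tailed degrees.) This is precisely why the paper works with a $(2+\delta)$-moment Lyapunov condition for a \emph{small} $\delta$ chosen so that $2\beta\delta<\eps$: then the numerator involves $L_{n,\beta(2+\delta)+1}(B_n)$, whose Chebyshev control needs only $L_{n,4\beta+2\beta\delta+2}$, which sits strictly inside the range $(0,4\beta+2+\eps)$ guaranteed by Assumption \ref{assumption:moment} (together with an interpolation lemma for intermediate moments). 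You have also slightly misattributed the role of $\eps$: the ratio vanishes because of the factor $(d_n(u_n)b_n/n)^{-\delta/2}$, not because of $\eps$; the $\eps$ exists solely to leave room for a positive $\delta$. Your argument can be salvaged without shrinking $\delta$ if you observe that the numerator aggregate only needs a one-sided bound, $L_{n,3\beta+1}(B_n)=O_P(1)$, which follows from Markov's inequality and $\sup_n L_{n,3\beta+1}<\infty$ (available since $3\beta+1<4\beta+2$), while reserving the two-sided Chebyshev concentration for the denominator term $L_{n,2\beta+1}(B_n)$, whose variance bound needs exactly $L_{n,4\beta+2}$; but as written the step fails.
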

% End CLT Theorem

The proof is given in Appendix \ref{clt-proof}. Essentially, Theorem \ref{thm:CLT} says that $S(u,B,\Net)$ is asymptotically Normal provided that $B$ is ``typical" and that $d(u)$ and $B$ are sufficiently large. The theorem justifies the following approximation of the p-value in \eqref{eq:pvalue}:
\begin{equation}\label{eq:pvalue_approx}
p(u,B) \approx 1 - \Phi\left(\dfrac{S(u,B,\Net) - \mfun{u}{B}{\theta}}{\sfun{u}{B}{\theta}}\right)
\end{equation}
Above, $\theta = (\vd,\vs,\hat\kappa(\vd,\vs))$ is specified from $\Net$, as described in Section \ref{Model:specification}.

\subsection{Consistency of SCS}\label{Consistency}

In this section, we evaluate the ability of the SCS algorithm to identify true communities in a planted-community model. Explicitly, we consider a sequence of networks $\{\Net_n\}_{n>1}$ where each network in the sequence is generated by a weighted stochastic block model (WSBM). The WSBM we employ is similar to that presented in \cite{aicher2014learning}, but is generalized to include node-specific weight parameters. In other words, it is ``strength-corrected" as well as degree-corrected, in a manner analogous to the original degree-corrected SBM \citep{coja2009finding}. The proofs of Theorem \ref{initial-to-community} and Theorem \ref{thm:consistency} are given in Appendix \ref{consistency-proofs}.

\subsubsection{The weighted stochastic block model}\label{WSBM}
For fixed $K > 1$, we define a $K$-block WSBM on $n > 1$ nodes as follows. Let $\vc_n$ be a community partition vector with $c_n(u)\in\{1,\ldots,K\}$ giving the community index of $u$. Denote community $i$ by $C_{i,n}:=\{u:c_n(u) = i\}$. Define $\pi_{i,n} := n^{-1}|C_{i,n}|$ with $\vpi_n$ the associated vector. Let $\mP$ and $\mM$ be fixed $K\times K$ matrices with non-negative entries encoding intra- and inter-community baseline edge probabilities and edge weight expectations, respectively. Let $\phi_n$ and $\psi_n$ be arbitrary $n$-vectors with positive entries, which are parameters giving nodes individual propensities to form edges and assign weight (separately from $\mP$ and $\mM$). To ensure proper edge probabilities, we assume that $\max(\phi_n)^2\max(\mP)\leq 1$. For identifiability, we assume the vectors $\phi_n$ and $\psi_n$ sum to $n$. Finally, let $F$ be a distribution on the positive real line with mean 1 and variance $\sigma^2\geq 0$. The WSBM can then be specified as follows:
\begin{enumerate}
	\item Place an edge between nodes $u$ and $v$ with probability $\mathbb{P}_n(A_{uv} = 1) = r_{uv}(\phi_n)\mP_{c_n(u)c_n(v)}$, independently across node pairs.
	\item{For node pair $u,v$ with $A_{uv} = 1$, generate an independent random variable $\xi_{uv}$ according to $F$. Determine edge weight $W_{uv}$ by:
		\[
		W_{uv} = \begin{cases}
		\ffun{\psi_n}{\phi_n}\mM_{c_n(u)c_n(v)}\xi_{uv},& A_{uv} = 1\\
		0,&A_{uv} = 0\\
		\end{cases}
		\]
	}
\end{enumerate}

The many parameters involved with this model allow for node heterogeneity and community structure. When $\mP$ and $\mM$ are proportional to a $K\times K$ matrix of ones, the WSBM reduces to the continuous configuration model with parameters $\vd \propto \phi$, $\vs \propto \psi$, and $\vpar = \sigma^2$. Community structure is introduced in the network by allowing the diagonal entries of $\mP$ and $\mM$ to be arbitrarily larger than the off-diagonals.

\subsubsection{Consistency theorem}\label{Consistency-theorem}
The consistency analysis of SCS involves a sequence of random networks $\{\mathcal G_n\}_{n>1}$, where $\mathcal G_n$ is generated by a $K$-community WSBM. In this setting, we incorporate an additional parameter $\rho_n$, and let $\mP_n:=\rho_n\mP$ replace $\mP$ for each $n>1$. This lets us distinguish the role of the asymptotic order of the average expected degree, defined $\lambda_n:=n\rho_n$, from the profile of edge densities within and between communities ($\mP$). Importantly, our results require only that $\lambda_n/\log n\rightarrow\infty$, reflecting the sparsity of real-world networks. Throughout this section, we denote the vector of (random) strengths from $\Net_n$ by $\vS_n$.

We now define an explicit notion of consistency in terms of the SCS algorithm. Recall from Section \ref{core-algo} that for fixed FDR $\alpha\in(0,1)$, a stable community in a network $\Net_n$ is defined as a node set $C\subseteq N$ satisfying $U_\alpha(C,\Net_n) = C$.
\begin{definition}
	\label{consistency}
	We say that SCS is consistent for a sequence of WSBM random networks $\{\Net_n\}_{n>1}$ if for any FDR level $\alpha\in(0,1)$, the probability that the true communities $C_{1,n},\ldots,C_{K,n}$ are stable approaches 1 as $n\rightarrow\infty$.
\end{definition}

To assess the conditions that allow a target set $C$ to be a stable community, we seek more general conditions under which the update $\Ualph$ outputs $C$ given any initial set $B$. If $U_\alpha(B,\Net_n) = C$, all nodes $u\in C$ must have significant connectivity to $B$, as judged by the p-value approximation defined in \ref{eq:pvalue_approx}. It is clear from that p-value expression that, for the update to return $C$, the test statistic $S(u,B,\Net_n)$ must be significantly larger than $\mu(u,B|\vS_n)$, its expected value under the continuous configuration model. Therefore, our first result hinges on asymptotic analysis of that deviation, which we denote by
\begin{equation}\label{Dev-def}
A(u,B,\Net_n) := S(u,B, \Net_n) - \mu_n(u,B|\vS_n).
\end{equation}
The asymptotics of $A(u,B,\Net_n)$ depend on its \emph{population} version, in which all random quantities are replaced with their expected values under the WSBM. Let $\vs_n$ be the expected value of $\vS_n$ under $\Net_n$. We define the (normalized) population version of $A(u,B,\Net_n)$ by
\begin{equation}\label{dev-def}
\devnt{u}{B}:=\lambda_n^{-1}\left(\expv S(u,B,\Net_n) - \mu_n(u,B|\bar \vs_n)\right),
\end{equation}
where $\lambda_n$ is the order of the average expected degree. The value $\devnt{u}{B}$ is crucial to the primary condition of Theorem \ref{initial-to-community}. Given a sequence of initial sets $\{B_n\}_{n>1}$ and target sets $\{C_n\}_{n>1}$, Theorem \ref{initial-to-community} establishes that $U_\alpha(B_n,\Net_n) = C_n$ with probability approaching 1 if $\devnt{u}{B}$ is bounded away from zero, and is positive if and only if $u\in C_n$. The theorem requires the following two assumptions:

%%%%%%%%%
%%% Assumptions
%%%%%%%%%

\begin{assumption}\label{bounded-parameter-assumption} There exist constants $m_+>m_->0$ such that, for all $n>1$, the entries of $\phi_n$, $\psi_n$, $\mP$, $\mM$, and $\vpi_n$ are all bounded in the interval $[m_-,m_+]$.
\end{assumption}
\begin{assumption}\label{bounded-weight-assumption} $F$ is independent of $n$ and has support $(0,\eta)$ with $\eta<\infty$.
\end{assumption}
Assumption \ref{bounded-parameter-assumption} is standard in consistency analyses involving block models \cite[e.g.][]{zhao2012consistency, bickel2009nonparametric}. Assumption \ref{bounded-weight-assumption} allows the use of Bernstein's inequality throughout the proof, but may be relaxed if there are constraints on the moments of $F$ allowing the use of a similar inequality. We now state Theorem \ref{initial-to-community}, the proof of which is given in Appendix \ref{consistency-proofs}.

%%%%%%%%%
%%% Lemma
%%%%%%%%%

\begin{theorem}\label{initial-to-community}
	Fix $K>1$. For each $n > 1$, let $\Net_n$ be a $n$-node random network generated by a $K$-community WSBM with parameters satisfying Assumptions \ref{bounded-parameter-assumption} - \ref{bounded-weight-assumption}. Suppose $\lambda_n/\log n\rightarrow\infty$. Let $\{B_n\}_{n>1}$, $\{C_n\}_{n>1}$ be sequences of node sets satisfying the following: there exist constants $q\in(0,1]$ and $\Delta>0$ such that for all $n$ sufficiently large, $|B_n|,|C_n|\geq qn$, and
	\begin{equation}\label{favors}
	\tilde a_n(u,B_n)\geq \Delta,\;\;u\in C_{n},\;\;\text { and }\;\;
	\tilde a_n(u,B_n)\leq -\Delta,\;\;u\notin C_{n}.
	\end{equation}
	Then if the update $U_\alpha$ uses the p-value approximation given in Equation \eqref{eq:pvalue_approx},
	\[
	\mathbb{P}_n\big(U_\alpha(B_n, \cG_n) = C_{n}\big)\rightarrow1\text{ as }n\rightarrow\infty.
	\]
\end{theorem}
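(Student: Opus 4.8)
The plan is to show that, with probability tending to one, the update $U_\alpha(B_n,\cG_n)$ keeps exactly the nodes of $C_n$: every $u\in C_n$ survives the FDR threshold and every $u\notin C_n$ is rejected. The core of the argument is a uniform concentration statement for the test statistic $S(u,B_n,\cG_n)$ around its population mean, good enough to translate the population gap condition \eqref{favors} into a gap at the level of $z$-scores, and then a control of the FDR threshold $\tau(\mathbf p)$ away from the degenerate regimes $0$ and $1$.

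\emph{Step 1: concentration of $S(u,B_n,\cG_n)$.} For each fixed $u$, $S(u,B_n,\cG_n)=\sum_{v\in B_n}W_{uv}$ is a sum of independent bounded random variables (Assumptions \ref{bounded-parameter-assumption}--\ref{bounded-weight-assumption} make each $W_{uv}$ bounded by a deterministic constant times $\rho_n$, and the number of nonzero terms is $O(b_n\rho_n)$ with high probability). Bernstein's inequality then gives, for a suitable constant $c>0$,
\[
\mathbb P_n\big(|S(u,B_n,\cG_n) - \expv S(u,B_n,\cG_n)| > t\lambda_n\big)\leq 2\exp(-c\,t^2\lambda_n)
\]
for $t$ in a bounded range. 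Since $\lambda_n/\log n\to\infty$, choosing $t=\Delta/4$ and union-bounding over the $n$ nodes beats the $n$ factor, so $\max_{u}\lambda_n^{-1}|S(u,B_n,\cG_n)-\expv S(u,B_n,\cG_n)|\to 0$ in probability. Combined with $\lambda_n^{-1}(\mu_n(u,B_n|\vS_n)-\mu_n(u,B_n|\bar\vs_n))\to 0$ uniformly (another Bernstein/union-bound argument on the strengths $\vS_n$, since $\mu_n$ is linear in $\vs$ through $r_{uv}$), this upgrades \eqref{favors} to: uniformly, $\lambda_n^{-1}A(u,B_n,\cG_n)\geq \Delta/2$ for $u\in C_n$ and $\leq -\Delta/2$ for $u\notin C_n$, where $A$ is the deviation in \eqref{Dev-def}.

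\emph{Step 2: from deviations to $z$-scores and p-values.} The p-value approximation \eqref{eq:pvalue_approx} uses $\theta=(\vd,\vs,\hat\kappa)$ \emph{estimated from} $\cG_n$; so I must show the estimated mean $\mu_n(u,B_n|\vS_n)$ and estimated standard deviation $\sigma_n(u,B_n|\vS_n)$ are, uniformly in $u$, within a $1+o(1)$ factor of deterministic quantities of order $\lambda_n$ and $\sqrt{\lambda_n}$ respectively (using Proposition \ref{prop:sub}, the boundedness assumptions, and concentration of $\vS_n$ and of $\hat\kappa$). Then the $z$-score $(S-\mu)/\sigma$ is $\gtrsim (\Delta/2)\lambda_n/(C\sqrt\lambda_n)=c'\sqrt{\lambda_n}\to\infty$ for $u\in C_n$, forcing $p(u,B_n)\leq \exp(-c''\lambda_n)$; and for $u\notin C_n$ it is $\lesssim -c'\sqrt\lambda_n$, forcing $p(u,B_n)\to 1$, indeed $p(u,B_n)\geq 1/2$ eventually. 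So with high probability the p-value vector $\mathbf p$ has $|C_n|\geq qn$ entries below $e^{-c''\lambda_n}$ and the remaining $n-|C_n|$ entries above $1/2$.

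\emph{Step 3: the BH threshold separates the two groups.} This is where I expect the main subtlety. I need $\tau(\mathbf p)$ to fall strictly between $e^{-c''\lambda_n}$ and $1/2$. For the lower side: the BH-adjusted value of the $j$-th smallest p-value is $n p_{(j)}/j$; for the $|C_n|\geq qn$ tiny p-values this is at most $n e^{-c''\lambda_n}/1 = n e^{-c''\lambda_n}\to 0<\alpha$ (again using $\lambda_n/\log n\to\infty$), so all of them are accepted and $\tau(\mathbf p)\geq e^{-c''\lambda_n}$. For the upper side: any rejected p-value must have adjusted value $\le\alpha<1$, so $n p_{(j)}/j\le\alpha$ forces $p_{(j)}\le \alpha j/n\le\alpha<1/2$; hence no p-value $\ge 1/2$ is ever rejected, giving $\tau(\mathbf p)<1/2$. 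Therefore the rejection set is exactly $\{u:p(u,B_n)\le\tau(\mathbf p)\}=C_n$ on the high-probability event, which proves $\mathbb P_n(U_\alpha(B_n,\cG_n)=C_n)\to1$. The genuinely delicate points are (i) making the Bernstein bounds uniform over $u$ while the number of active edges is itself random, which I would handle by first conditioning on a high-probability event controlling $\max_u d_n(u)$ and the edge counts, and (ii) controlling the plug-in estimator $\hat\kappa(\vd,\vs)$ computed from $\cG_n$ — showing it is bounded above and below with high probability so the estimated $\sigma_n$ neither blows up nor collapses — which requires a separate concentration argument on the sum in \eqref{Model:var2}.
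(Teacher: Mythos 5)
Your proposal is correct and follows essentially the same route as the paper: Bernstein concentration (conditional on high-probability degree/strength events) for $S(u,B_n,\cG_n)$, the plug-in mean, the plug-in variance estimator $\hat\kappa$, and hence the $z$-scores, followed by Gaussian tail bounds giving p-values of order $e^{-c\lambda_n}$ inside $C_n$ and tending to $1$ outside, and finally the Benjamini--Hochberg threshold analysis. Your Step 3 is in fact slightly more explicit than the paper's reduction to the event $\{P(u)\leq q\alpha\ \forall u\in C_n,\ P(u)>q\alpha\ \forall u\notin C_n\}$, and the only nits are cosmetic: $W_{uv}$ is bounded by a constant (with variance $O(\rho_n)$) rather than by a constant times $\rho_n$, and the ``$<1/2$'' cutoff in Step 3 should be replaced by a constant exceeding $\alpha$ when $\alpha\geq 1/2$; neither affects the argument.
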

To prove the consistency of SCS, we show that condition \ref{favors}, when it involves the community sequence, is guaranteed by a concise condition on the model parameters. Let $\tilde\pi_{i,n} := \sum_{v\in C_{i,n}}\psi_n(v)$, and let $\tilde \vpi_n$ be the vector of $\tilde \pi_{i,n}$'s. The consistency theorem requires the following additional assumption, an analog to which can be found in \cite{zhao2012consistency} for consistency of modularity under the degree-corrected SBM:
	\begin{assumption}\label{constant-psi} $\tilde \vpi_n \equiv \tilde \vpi$ does not depend on $n$.
	\end{assumption}
Assumption \ref{constant-psi} is made mainly for clarity. Without it, the condition in \eqref{M-favors} of Theorem \ref{thm:consistency} (below) must hold for sufficiently large $n$, something which is inconsequential to the proof. Define $\mH := \mP\cdot\mM$, the entry-wise product. Note that when $\phi$ and $\psi$ are proportional to 1-vectors, $\expv(W_{uv}) = \mH_{c(u)c(v)}$ for all $u,v\in N$. Thus, the interpretation of $\mH$ is as the baseline inter/intra-community weight expectations after integrating out edge presence. Defining $\tilde{\mPi} := \tilde \vpi\tilde \vpi^t$, we state the consistency theorem:
\begin{theorem}
	\label{thm:consistency}
	Fix $K>0$. Let $\{\Net_n\}_{n>1}$ be a sequence of networks generated by a $K$-community WSBM satisfying Assumptions \ref{bounded-parameter-assumption}-\ref{constant-psi}. Suppose that the matrix
	\begin{equation}\label{M-favors}
	\mathcal{M} := \mH - \frac{\mH\tilde{\mPi}\mH}{\tilde \vpi^t\mH\tilde \vpi}
	\end{equation}
	has positive diagonal entries and negative off-diagonal entries. If $\lambda_n/\log n\rightarrow\infty$, SCS is consistent for $\{\Net_n\}_{n>1}$.
\end{theorem}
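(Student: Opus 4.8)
The plan is to derive Theorem~\ref{thm:consistency} from Theorem~\ref{initial-to-community} by feeding each true community to the update $U_\alpha$ as both the initial \emph{and} the target set. Fix $\alpha\in(0,1)$ and an index $i\in\{1,\dots,K\}$, and apply Theorem~\ref{initial-to-community} with $B_n=C_n=C_{i,n}$. The size hypothesis is immediate: Assumption~\ref{bounded-parameter-assumption} gives $\pi_{i,n}\geq m_-$, hence $|C_{i,n}|\geq m_- n$, so $q:=m_-$ works. Since $K$ is fixed, it then suffices to verify the separation condition~\eqref{favors} for $B_n=C_{i,n}$ with a constant $\Delta>0$ that can be chosen uniformly in $i$ and $n$: Theorem~\ref{initial-to-community} will then give $\mathbb{P}_n\big(U_\alpha(C_{i,n},\Net_n)=C_{i,n}\big)\to1$ for each $i$, a union bound over the $K$ communities will show that $C_{1,n},\dots,C_{K,n}$ are all stable with probability tending to one, and as $\alpha$ was arbitrary this is exactly consistency in the sense of Definition~\ref{consistency}. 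So the whole problem reduces to computing $\tilde a_n(u,C_{i,n})$ and verifying~\eqref{favors}.

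The heart of the matter is an exact computation of $\tilde a_n(u,B)$ when $B$ is a community. Write $\ell:=c_n(u)$. From the WSBM generating mechanism — using, for $n$ large, that the truncations in $r_{uv}(\cdot)$ are inactive, and that the edge-weight adjustment $f_{uv}$ is constructed so that the propensities $\phi_n$ and the edge-probability matrix $\mP$ enter only through $\mH=\mP\cdot\mM$ — one obtains $\mathbb{E} W_{uv}=\beta_n\,\psi_n(u)\psi_n(v)\,\mH_{c_n(u)c_n(v)}$ for a deterministic positive scalar $\beta_n$ of order $\rho_n$ (this is the general-propensity form of the identity $\mathbb{E}(W_{uv})=\mH_{c(u)c(v)}$ noted just before the theorem). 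Summing over $v\in C_{j,n}$ gives a closed form for $\mathbb{E} S(u,C_{j,n},\Net_n)$, of order $\lambda_n$ (any self-loop term is of smaller order); summing over all $v$ gives $\bar{s}_n(u)$ and hence $s_{T,n}$, so that $\mu_n(u,C_{j,n}\mid\bar{\vs}_n)=\frac{\bar{s}_n(u)}{s_{T,n}}\sum_{v\in C_{j,n}}\bar{s}_n(v)$, also of order $\lambda_n$. Subtracting, and using $(\mH\tilde{\vpi}_n)_i(\mH\tilde{\vpi}_n)_j=(\mH\,\tilde{\mPi}_n\,\mH)_{ij}$ with $\tilde{\mPi}_n:=\tilde{\vpi}_n\tilde{\vpi}_n^{t}$, the expression collapses: after the $\lambda_n^{-1}$ normalization in~\eqref{dev-def},
\[
\tilde a_n(u,C_{j,n})=g_n(u,j)\,\Big(\mH-\frac{\mH\,\tilde{\mPi}_n\,\mH}{\tilde{\vpi}_n^{t}\mH\tilde{\vpi}_n}\Big)_{\ell j}=g_n(u,j)\,\mathcal{M}^{(n)}_{\ell j},
\]
where $\mathcal{M}^{(n)}$ is the matrix of~\eqref{M-favors} with $\tilde{\vpi}$ replaced by $\tilde{\vpi}_n$, and $g_n(u,j)$ is a positive quantity bounded away from $0$ and from $\infty$ by constants depending only on $m_-,m_+$, by Assumption~\ref{bounded-parameter-assumption}.

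It remains to read off signs. By Assumption~\ref{constant-psi} the vector $\tilde{\vpi}_n$ does not depend on $n$, so $\mathcal{M}^{(n)}=\mathcal{M}$ (in any case $\mathcal{M}^{(n)}$ depends on $\tilde{\vpi}_n$ only through its direction, and without Assumption~\ref{constant-psi} one needs the sign condition on $\mathcal{M}$ only for $n$ large, as remarked after that assumption). Specializing the display to $j=i$: if $u\in C_{i,n}$ then $\ell=i$ and $\tilde a_n(u,C_{i,n})=g_n(u,i)\,\mathcal{M}_{ii}$, positive and bounded away from zero since $\mathcal{M}_{ii}>0$; if $u\notin C_{i,n}$ then $\ell\neq i$ and $\tilde a_n(u,C_{i,n})=g_n(u,i)\,\mathcal{M}_{\ell i}$, an off-diagonal entry, negative and bounded away from zero since $\mathcal{M}_{\ell i}<0$. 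Taking $\Delta$ to be the lower bound on $g_n$ times $\min\{\min_k\mathcal{M}_{kk},\ \min_{k\neq l}(-\mathcal{M}_{kl})\}>0$ (a minimum over the finitely many entries of the fixed matrix $\mathcal{M}$) establishes~\eqref{favors} with $q,\Delta$ independent of $i$ and $n$, and the plan concludes by Theorem~\ref{initial-to-community} together with a union bound over $i=1,\dots,K$.

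The main obstacle is not conceptual — Theorem~\ref{initial-to-community} carries the analytic load — but careful bookkeeping: carrying out the algebra so that $\tilde a_n$ really does collapse to $g_n\,\mathcal{M}^{(n)}$, checking that every constant ($q$, $\Delta$, the two-sided bound on $g_n$) is uniform over the community index and over $n$, and confirming that the small corrections — self-loops in $S(u,C_{i,n},\Net_n)$, the inactive truncations in $r_{uv}(\cdot)$, and the gap between the population strengths $\bar{\vs}_n$ used in $\tilde a_n$ and the observed strengths entering the standardization of the test statistic — are genuinely of lower order after normalizing by $\lambda_n$ (the last being absorbed into Theorem~\ref{initial-to-community} itself).
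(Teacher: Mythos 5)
Your proposal is correct and follows essentially the same route as the paper: both reduce the theorem to verifying condition \eqref{favors} of Theorem \ref{initial-to-community} with $B_n=C_n=C_{j,n}$, compute $\tilde a_n(u,C_{j,n})$ in closed form as a positive factor (the paper's $\psi(u)\tilde\pi_j$, your $g_n(u,j)$, bounded via Assumption \ref{bounded-parameter-assumption}) times the entry $\mathcal{M}_{c(u),j}$, read off the signs from \eqref{M-favors}, and finish with a union bound over the $K$ communities. The only differences are cosmetic bookkeeping (your explicit handling of self-loops and truncations, which the paper elides).
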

%The import of condition \eqref{M-favors} is most clear when $K = 2$, when it reduces to the requirement that the on-diagonals of $\mH$ exceed the (single) off-diagonal. Notice that this signal need not be present in both $\mP$ and $\mM$. For instance, the condition would be satisfied if $\mH$ is a scalar multiple of $\mM$, that is, if $\mP$ is proportional to the $\mathbf{1}$-matrix. This entails that SCS is consistent even when the edge structure of $\Net_n$ is Erd\H{o}s-Renyi, as long as the edge weight signal is assortative. Of course, the opposite also holds, namely that SCS is consistent even when assortative community signal is only present in $\mP$.

The proof of Theorem \ref{thm:consistency} is given in Appendix \ref{consistency-proofs}. Understanding of condition \ref{M-favors} begins with the consideration of the case $K = 2$, when it reduces to the requirement that $\mH_{11}\mH_{22}>\mH_{12}^2$. More generally, and broadly speaking, the matrix $\mathcal{M}$ reveals whether or not appropriate signal exists in the model, with respect to the continuous configuration null. Notice that this signal need not be present in both $\mP$ and $\mM$. For instance, the condition can be satisfied even if $\mH$ is a scalar multiple of $\mM$, that is, if $\mP$ is proportional to the $\mathbf{1}$-matrix. This entails that SCS is consistent even when the edge structure of $\Net_n$ is Erd\H{o}s-Renyi, as long as the edge weight signal (encoded in $\mM$) is properly assortative. Of course, the opposite also holds, namely that SCS is consistent even when assortative community signal is only present in $\mP$.

\subsubsection{Connection to weighted modularity and related work}\label{theory-wm}
The conditions of Theorem \ref{initial-to-community} and Theorem \ref{thm:consistency} have a deep relationship to the modularity measure, discussed in Section \ref{Model}.  Explicitly, let the \emph{weighted} modularity (WM) be the modularity metric with degrees replaced by strengths, as introduced in \citep{newman04_2}. For fixed $n>1$, let $\vc$ be any partition of $ N$. Define $K := \max\{\vc\}$ and $C_u:=\{v:c(v) = c(u)\}$. Then the (random) WM of $\vc$ on $\Net_n$ can be written
\begin{align*}
Q^w(\vc, \Net_n) & := \frac{1}{S_{n, T}}\sum_{uv\in N}\left\{W_{uv} - r_{uv}(\vS_n)\right\}\IND\{c(u) = c(v)\}\\
& = \frac{1}{S_{n, T}}\sum_{i=1}^K\sum_{c(u) = c(v)}W_{uv} - r_{uv}(\vS_n) = \frac{1}{S_{n, T}}\sum_{u\in N}\sum_{v\in C_u}W_{uv} - r_{uv}(\vS_n)\\
& = \frac{1}{S_{n, T}}\sum_{u\in N}S(u,C_u,\Net_n) - \mu_n(u,C_u|\vS_n) = \frac{1}{S_{n, T}}\sum_{u\in N}A(u, C_u,\Net_n)
\end{align*}
Thus, the contribution of $u$ to WM with its assignment $C_u$ is precisely the random association from $u$ to $C_u$. Writing the population WM as $\bar q_n^w(\cC) := n^{-1}\sum_u\tilde a_n(u,C_u)$, it is easily shown that condition \eqref{M-favors} implies $q_n^w$ is maximized by $\cC_n$, the true community partition.

The consistency analysis of the (binary) modularity metric under the degree-corrected SBM, provided by \cite{zhao2012consistency}, similarly hinges on maximization of population modularity. It is unsurprising, then, that the parameter condition for their result can be (analogously) expressed as a fixed $K\times K$ matrix having positive diagonals and negative off-diagonals. In fact, if the WSBM parameter $\mM$ is proportional to a matrix of 1s, and the parameter $\psi$ is a scalar multiple of $\phi$, condition \ref{M-favors} in Theorem \ref{thm:consistency} is equivalent to the parameter assumptions on modularity consistency in \cite{zhao2012consistency}. Furthermore, their analysis also requires that $\lambda_n/\log n\rightarrow\infty$. However, both the definition of consistency and proof approach for the theorems in this section are entirely novel.

\begin{comment}

\begin{proposition}
	\label{favors-prop}
	Fix $K>1$. For each $n>1$, let $\Net_n$ be a random network generated by a $K$-WSBM with $n>1$ nodes, and parameters satisfying Assumptions \ref{community-size-assumption} - \ref{bounded-weight-assumption}. For fixed $i\in[K]$, suppose condition \eqref{favors} holds for the sequence $\{C_{i,n}\}_{n>1}$. Define $\{B_n\}_{n>1}$ by
	\[
	B_n = B'_n\cup C_n
	\]
	with $B'_n\subseteq \cC_{i,n}^c$ and $C_n\subseteq\cC_{i,n}$. If $n^{-1}|B'_n|\rightarrow0$, then there exists $N$ such that for all $n>N$, condition \eqref{favors} holds for the sequence $\{B_n\}_{n>1}$.
\end{proposition}

\end{comment}

% Method

\section{The Continuous Configuration Model Extraction method}\label{Method}
In the previous section, we established an asymptotic result showing that ground-truth communities are, with high probability, fixed points of the SCS algorithm. This result demonstrates the in-principle sensibility of the algorithm. In practice, we must rely on local, heuristic algorithms for initialization and termination, as with other exploratory methods. For instance, $k$-means is often used to initialize the EM algorithm, and modularity can be locally maximized through agglomerative pairing \citep{clauset04}. We incorporate SCS in a general community detection method for weighted networks entitled Continuous Configuration Model Extraction (CCME), written in loose detail as follows:
\vskip0.2cm
\noindent\fbox{\parbox{\textwidth}{
\noindent \textbf{The CCME Community Detection Method for Weighted Networks}
\begin{enumerate}[1.]
\item Given an observed weighted network $\Net$, obtain initial node sets $\mathcal{B}_1\subseteq 2^{N}$.
\item Apply SCS to each node set in $\mathcal{B}_0$, resulting in fixed points $\mathcal{C}$.
\item Remove sets from $\mathcal{C}$ that are empty or redundant.
\end{enumerate}
}}
\vskip0.2cm
\noindent These steps are described in more detail below. Importantly, the method has no connection to any graph-partition criteria. It proceeds solely by the SCS algorithm, which assesses communities independently. This allows CCME to adaptively return communities that share nodes (``overlap"), and, through the multiple testing procedure, ignore nodes not significantly connected to any stable communities (``background").
\subsection{Step 1: Initialization}
Just as principled mixture-models can be initialized with heuristic methods like $k$-means, it is possible to initialize CCME with partition-based community detection method. However, we have observed this approach to perform somewhat poorly in practice. Instead, we initialize with a novel search procedure based on the continuous configuration model. For fixed nodes $u,v \in N$, we define
\[
z_u(v) := \max\left\{\dfrac{W_{uv} - f_{uv}(\vs,\vd)}{\sqrt{\theta}f_{uv}(\vs,\vd)},\;0\right\}
\]
The measure $z_u(v)$ acts like a truncated $z$-statistic, quantifying the extremity of the weight $W_{uv}$. The initial node set corresponding to $u$ is formed by sampling $d(u)$ nodes with 
replacement from $N$ with probability proportional to $z_u(v)$. The intuition behind this procedure is that if $u$ is part of a highly-connected node set $C$, then $z_u(v)$ for nodes $v\in C$ will be larger (on average) than for other nodes.

\subsection{Step 2: Application of SCS}\label{ss:FPS}
Recall that, given an initial set $B_1$, SCS proceeds (via the update $U_\alpha$) along a sequence of sets $B_2,B_3,\ldots,B_t,\ldots$ until $B_t = B_{t'}$ for some $t'< t$. Since the number of possible node subsets is finite, SCS is guaranteed to terminate in one of two states:
\begin{enumerate}[1.]
	\item A stable community $C$, satisfying $U_\alpha(C,\Net) = C$.
	\item A stable sequence of communities $C_1, \ldots, C_J$ satisfying 
	\[U_\alpha(C_1,\Net) = U_\alpha(C_2, \Net) = \ldots = U_\alpha(C_J,\Net) = U_\alpha(C_1,\Net).\]
\end{enumerate}
\noindent In practice, on empirical and simulated data, case 1 is the majority. In case 2, SCS does not result in a clear-cut community. However, a stable sequence may still be of practical interest if the constituent sets have high overlap. In Appendix \ref{app:cycle}, we give a routine to re-initialize or terminate SCS when it encounters a stable sequence.

\subsection{Step 3: Filtering of $\mathcal{C}$}\label{Method:extract}
The CCME community detection method returns a final collection of communities $\mathcal{C}$, containing the results of the SCS algorithm for each initial set in $\mathcal{B}_0$. By default, we remove any empty or duplicate sets from $\mathcal{C}$. In some applications, pairs of sets in $\mathcal{C}$ will have high Jaccard similarity. In Appendix \ref{app:filtering}, we detail a method of pruning these near-duplicates from $\mathcal{C}$. Additionally, in Appendix \ref{app:filtering}, we describe routines to suppress the application of SCS to initial sets that are ``weakly" intra-connected, or with high overlap to already-extracted communities. These routines greatly reduce the runtime of CCME, and, on some simulated networks, improve accuracy.

\textbf{Remark:} We note that the parameter $\alpha$, used in the set update operation $U_\alpha$, must be specified by the user of CCME. Having a natural interpretation as the false-discovery rate for each update, $\alpha$ was set to $0.05$ for all simulations and real data analyses introduced in this paper. We found that $\alpha = 0.05$ was a universally effective default setting, and that CCME's results change negligibly for other values of $\alpha$ within a reasonable window.

% Simulations

\section{Simulations}\label{Simulations}
This section contains a performance analysis of CCME and existing methods on a benchmarking simulation framework. Simulated networks are generated from the Weighted Stochastic Block Model (see Section \ref{WSBM}), with slight modifications to include overlapping communities and background nodes, when necessary. The performance measures, competing methods, simulation settings, and results are described below.

\subsection{Performance measures and competing methods}\label{Simulations:methods}
To assess the performance of a community detection method  the various methods, we use three measures:
\begin{enumerate}
	\item \textbf{Overlapping Normalized Mutual Information (oNMI):} 
	Introduced by \cite{lancichinetti2009detecting}, oNMI is an information-based measure between 0 and 1 
	that approaches 1 as two covers of the same node set become similar and equals 1 when they are the same. 
	From a method's results, we calculate oNMI with respect to the true communities \emph{only} 
	for the nodes the method placed into communities.
	
	\item \textbf{Community nodes in background (\%C.I.B.):} The percentage of true community nodes incorrectly assigned to background.
	
	\item \textbf{Background nodes in communities (\%B.I.C.):} The percentage of true background nodes (if present) incorrectly placed into communities.
\end{enumerate}
In addition to CCME, two other weighted-network methods capable of identifying overlapping nodes are assessed. One of these is OSLOM \citep{lanc11}, described in Section \ref{Intro}. The other is SLPAw, a weighted-network version of an overlapping label propagation algorithm \citep{xie11}. Also included are four commonly used score-based methods implemented in the $\mathtt{R}$ package $\mathtt{igraph}$ \citep{igraph}: Fast-Greedy, which performs approximate modularity optimization via a hierarchical agglomeration \citep{clauset04}; Louvain, an approximate modularity optimizer that proceeds through node membership swaps \citep{blondel08}; Walktrap, an agglomerative algorithm that locally maximizes a score based on random walk theory \citep{pons2006computing}; Infomap, an information-flow mapping algorithm that uses random walk transition probabilities \citep{rosvall2008maps}. 

\textbf{Remark.} Being extraction methods, only CCME and OSLOM naturally specify background nodes, via testing. As such, we will often make direct comparative comments between OSLOM and CCME with respect to background node handling. For other methods, we take as background any nodes in singleton communities. However, these methods almost never returned singleton communities, even when the simulation had weak or non-existent signal.

\subsection{Simulation settings and results}
We now give an overview of the simulation procedure for the benchmarking framework. A complete account is given in Appendix \ref{sim-framework}. We first describe ``default" parameter settings of the WSBM; in the simulation settings below, individual parameters are toggled around their default values, to reveal the dependence of the methods to those parameters. At each unique parameter setting, 20 random networks were simulated. The points in each plot from Figure \ref{fig:sbm} show the average performance measure of the methods over the 20 repetitions. 

The default WSBM setting has the number of nodes at $n = 5,000$. The community memberships were set by obtaining community sizes from a power law, then assigning nodes uniformly at random. This process produced approximately 3 to 7 communities per network. Full details are provided in Appendix \ref{sim-framework}. Recall the parameters $\mP$ and $\mM$, which induce baseline intra- and inter-community edge and weight signal. In the default setting, these matrices have off-diagonals equal to 1 and diagonals equal to constants $s_e = 3$ and $s_w = 3$ (respectively). In some simulation settings, overlapping and background nodes are added (as described later in this section), but the default setting includes neither overlap nor background. 

\textbf{Common parameter settings.} For all simulated networks (regardless of the setting), the node-wise edge parameters $\phi$ were drawn from a power law to induce degree heterogeneity. The parameter $\phi$ is scaled so that the expected average degree of each network was equal to $\sqrt{n}$, which induces sparsity in the network. The parameter $\psi$ is set by the formula $\psi = \phi^{1.5}$ to ensure a non-trivial relationship between expected degrees and expected strengths (see Appendix \ref{sim-framework}).

\subsubsection{Networks with varying signal levels}\label{Simulations:disjoint}
The first simulation setting tested the methods' dependence on $s_e$ and $s_w$. These values were moved along an even grid on the range $[1, 3]$. Plots A-1 and B-1 in Figure \ref{fig:sbm} show the performance measure results when $s_w$ is fixed at 3, plots A-2 and B-2 show results when $s_e = 3$, and plots A-3 and B-3 show results when $s_e$ and $s_w$ are moved along $[1, 3]$ together. Many methods had large oNMI scores in this simulation setting. We transformed the oNMI scores using the function 
\[
\text{t-oNMI}_a(x) := (\tfrac{1}{1-x + a} - \tfrac{1}{1+a})/(\tfrac{1}{a} - \tfrac{1}{1+a})
\]
with $a = 0.05$. This is a monotonic, one-to-one transformation from $[0,1]$ to itself, which stretches the region close to 1, allowing a clearer comparison between the methods' performances. CCME consistently out-performed all competing methods, especially when either the edge or weight signal was completely absent.

The plots in row B show that when either $s_e$ or $s_w$ were near 1, OSLOM and CCME assigned many background nodes. This is consistent with these methods' unique abilities to leave nodes unassigned when they are not significantly connected to communities. That said, \%C.I.B. can be seen as a measure of sensitivity, since ideally no nodes would be assigned to background when any signal is present. In this regard, CCME outperformed OSLOM across the range of model parameters.

\subsubsection{Networks with overlapping communities}\label{Simulations:overlapping}
The second setting involved networks with overlapping nodes. To add overlapping nodes to the default network, two parameters were introduced: $o_n$, the number of overlapping nodes, and $o_m$, the number of memberships for each overlapping node. The particular overlapping nodes and community memberships were chosen uniformly-at-random. This closely follows a simulation approach taken by \cite{lanc11}. Plots C-1 and C-2 show performance results from the setting with $o_n$ moving away from 0 and $o_m = 2$. Plot C-3 shows results from the setting with $o_n = 500$ and $o_m\in\{1,\ldots,4\}$. We find that CCME consistently outperforms all methods in terms of accuracy (oNMI), and outperforms OSLOM in terms of sensitivity (\%C.I.B.).

\subsubsection{Networks with overlapping communities and background nodes}\label{Simulations:background}
The final simulation setting involved networks with both overlap and background nodes. The number of background nodes was fixed at 1,000, and number of community nodes varied from $n = 500$ to $n = 5,000$. For each network, $o_n = n / 4$ nodes were randomly chosen to overlap $o_m = 2$ communities (also chosen at random). Background nodes were created by first simulating the $n$-node community sub-network, and then generating the 1,000-node background sub-network according to the continuous configuration model, using empirical degrees and strengths from the community sub-network. The complete details of this procedure are given in Appendix \ref{sim-framework}.

The results of this simulation setting are shown in row D from Figure \ref{fig:sbm}. From plot D-1, we see that OSLOM and CCME had the highest oNMI scores, 
favoring OSLOM when the number of community nodes decreased. 
Because this simulation setting involved background nodes, the \%B.I.C. metric is relevant, and can be taken as a measure of specificity: ideally, nodes from the background sub-network should be excluded from communities. From plot D-2, we see that methods incapable of assigning background had \%B.I.C. equal to 1. We found that CCME correctly ignored background nodes as the network size increased, 
whereas OSLOM became increasingly \emph{anti}-conservative for larger networks. Furthermore, CCME again had lower \%C.I.B. than OSLOM.

\begin{figure}
	\centering
    \includegraphics[scale = 0.29]{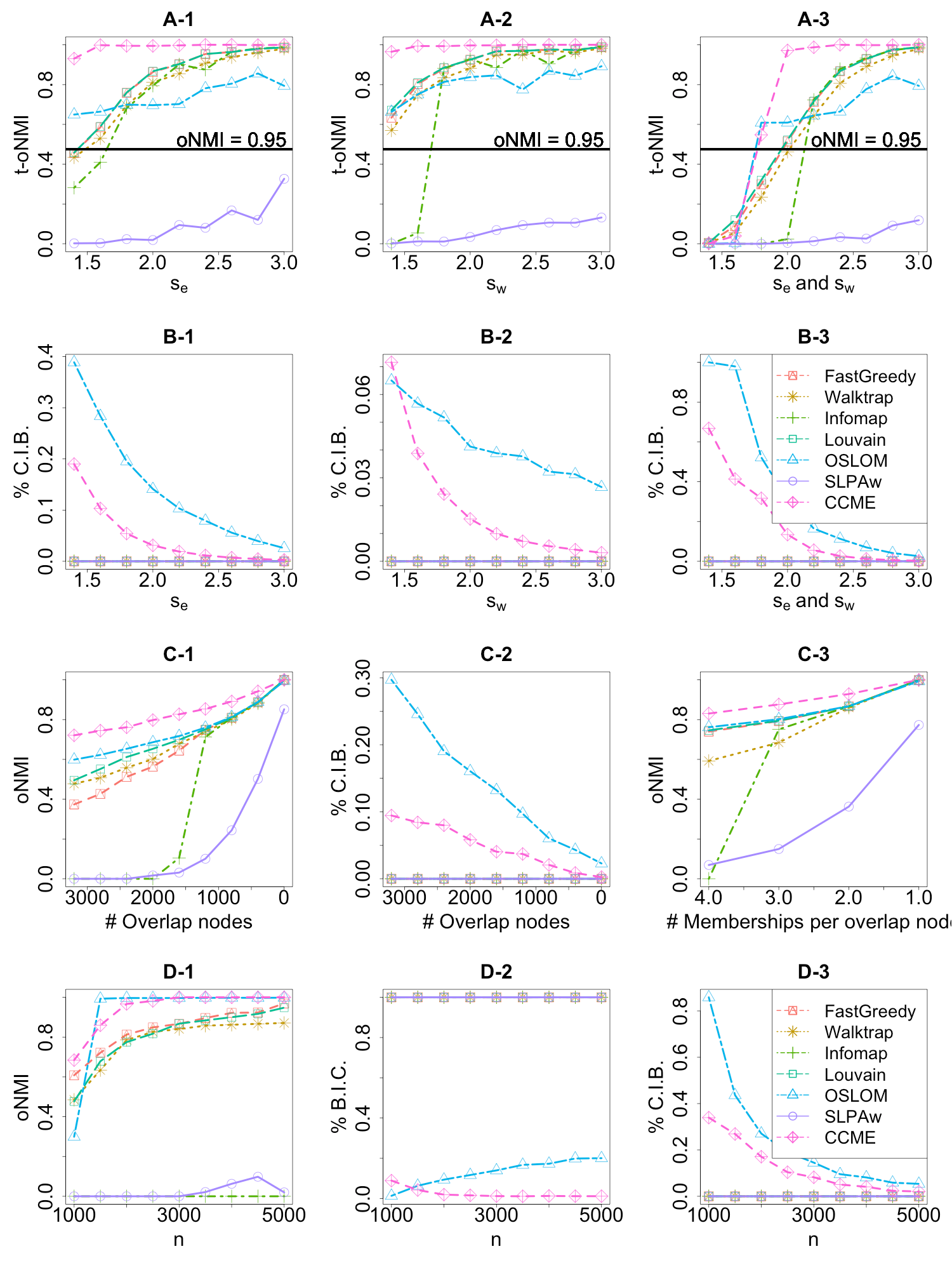}
	\caption{\label{fig:sbm} Simulation results described in Sections \ref{Simulations:disjoint}-\ref{Simulations:background}. Legends refer to all plots.}
\end{figure}

% Data

\section{Applications}\label{Data}

In this section, we discuss applications of CCME, OSLOM, and SLPAw (the methods capable of returning overlapping communities) to two real data sets.

\subsection{U.S.\ airport network data}\label{Data:airports}
The first application involves commercial airline flight data, obtained from the Bureau of Transportation Statistics (www.transtats.bts.gov). For each month from January to July of 2015, we created a weighted network with U.S. airports as nodes, edges connecting airports that exchanged flights, and edges weighted by aggregate passenger count. We also constructed a year-aggregated network, formed simply by taking the union of the month-wise edge sets, and adding the month-wise weights. In Figure \ref{fig:airports_years}, we display the methods' results when applied to the June and year-aggregated data sets from 2015. Each discovered community (within-method) has a unique color and shape. Each overlapping node is plotted multiple times, one for each community in which it was placed. For a clearer visualization of communities, background nodes are not shown.

Overall, the CCME results, in contrast to results from OSLOM and SLPAw, suggest that many airports in the U.S.\ airport system may not participate in meaningful community behavior. The fact that CCME performs multiple testing against an explicit null model gives this result some validity. Furthermore, airports in significant communities tend to be located near large hubs or in geographically isolated areas. We also see that, with the monthly data, OSLOM and CCME tended to find communities consistent with geography, whereas SLPAw placed most of the network into one community. With the year-aggregated data, OSLOM also agglomerated most airports, whereas CCME continued to respect the geography. Since the aggregated data is much more edge-dense, this suggests the performance of OSLOM and SLPA may suffer on weighted graphs with high or homogeneous edge-density, whereas CCME is able to detect proper community structure from the weights alone. This aligns with the simulation results described in Section \ref{Simulations:disjoint}.

\begin{figure}
	\centering
	\makebox{
		\includegraphics[scale = 0.16]{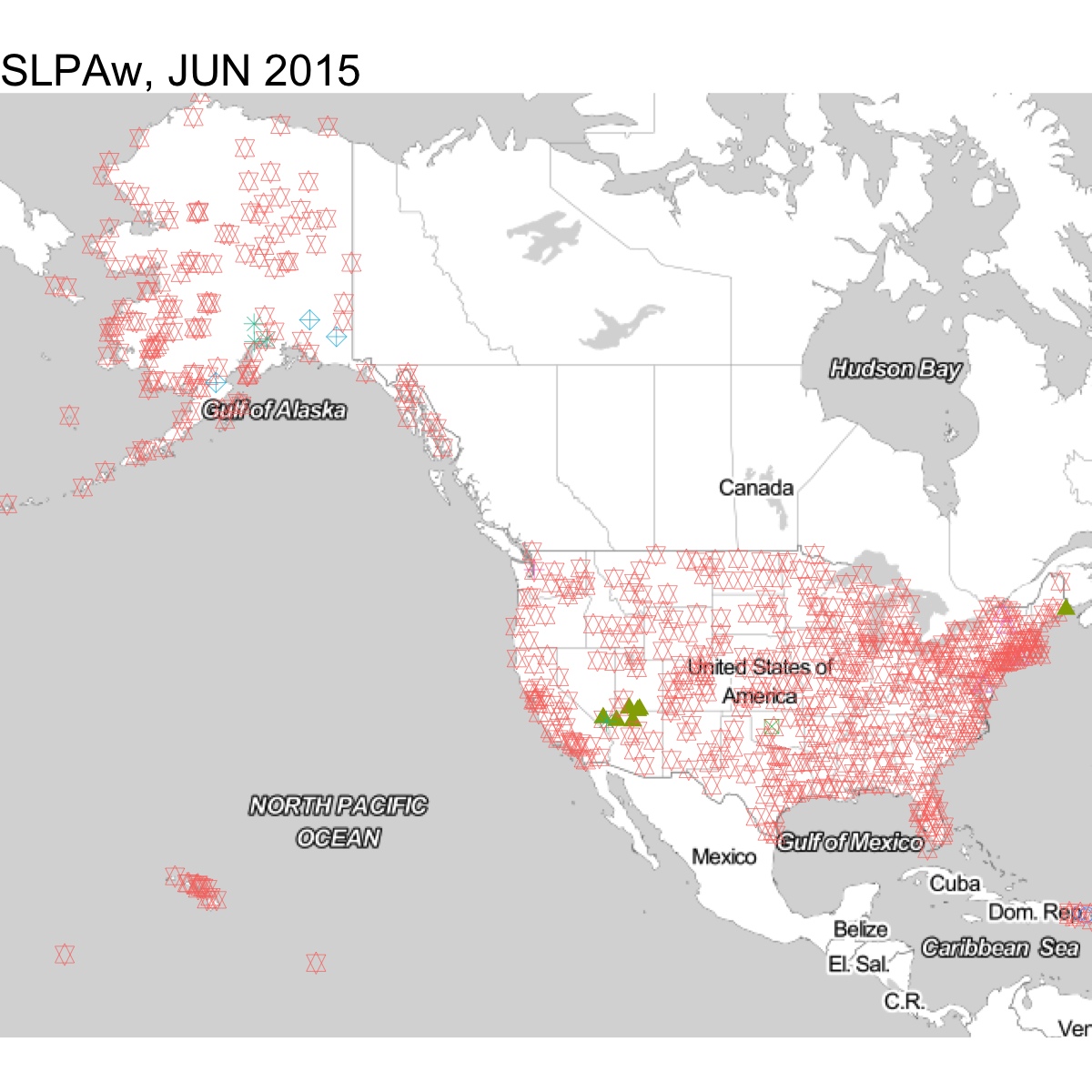}
		\includegraphics[scale = 0.16]{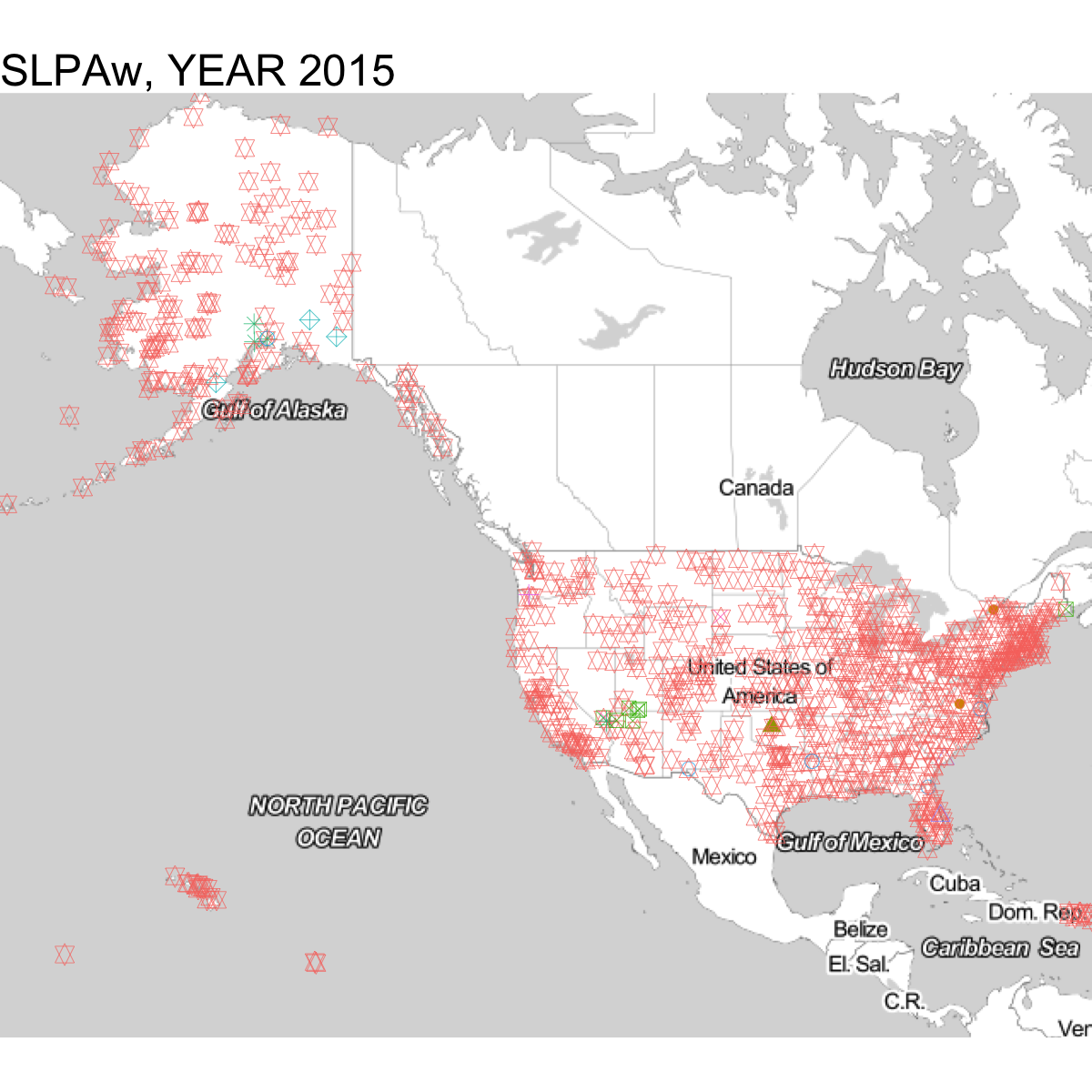}
	}
	\makebox{
		\includegraphics[scale = 0.16]{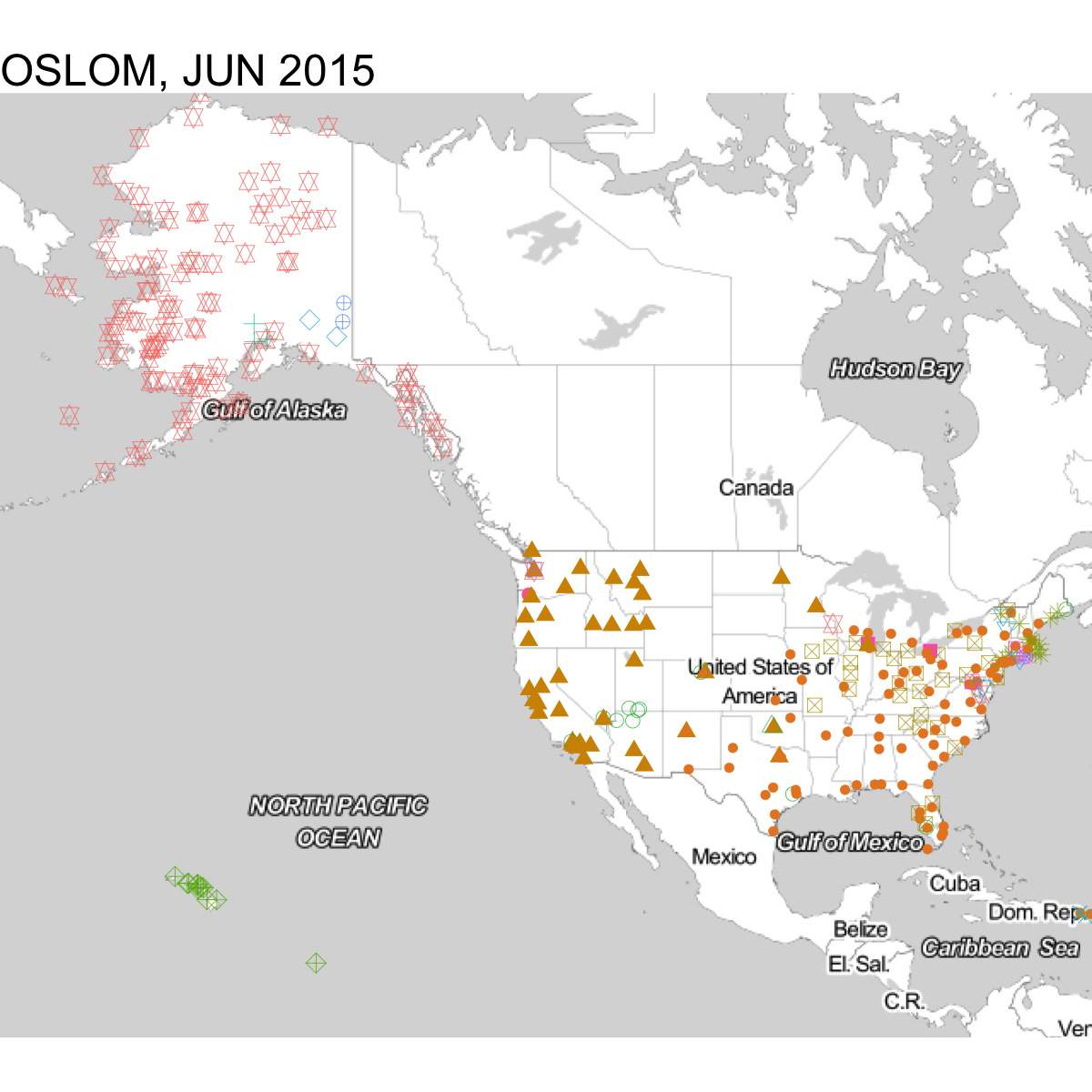}
		\includegraphics[scale = 0.16]{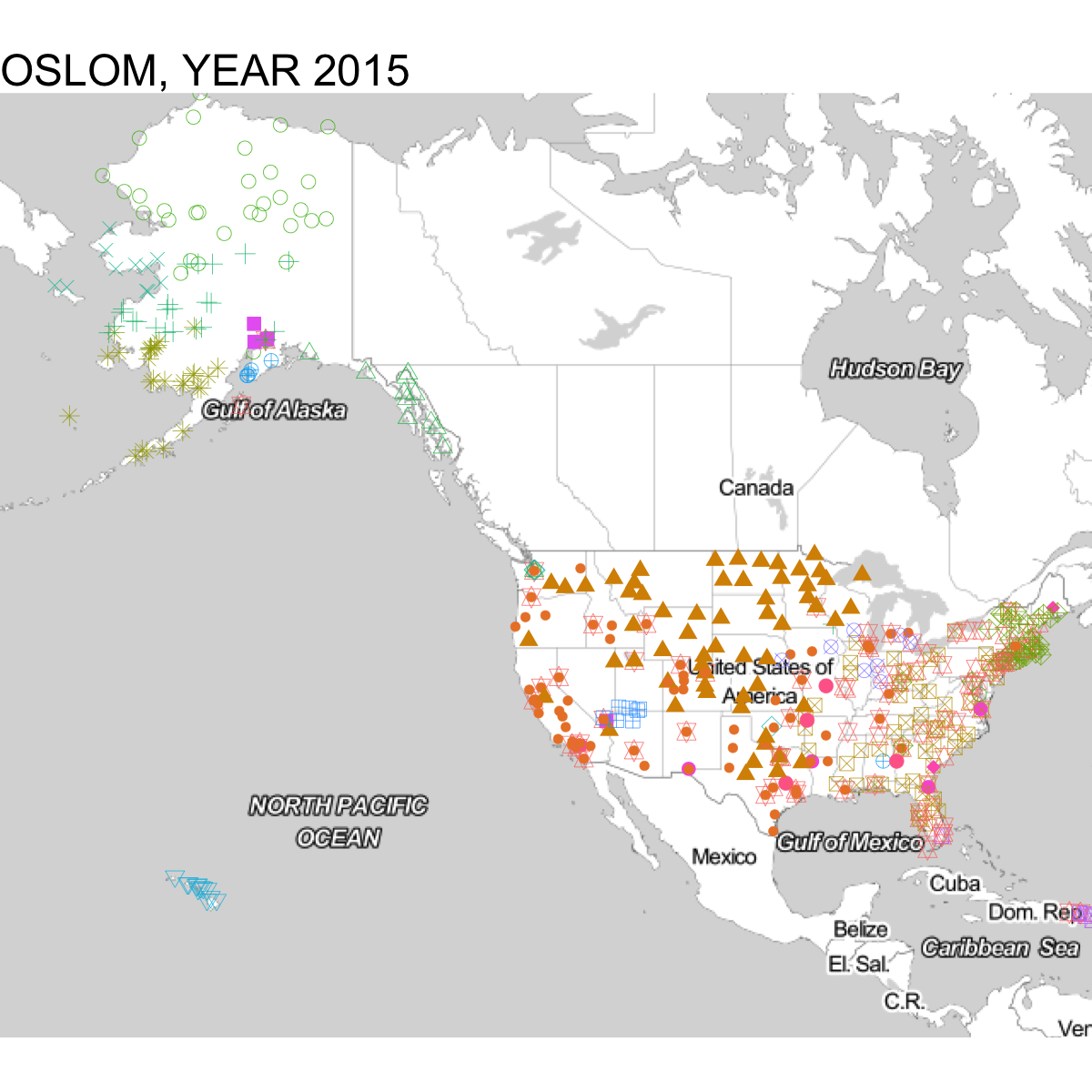}
	}
	\makebox{
		\includegraphics[scale = 0.16]{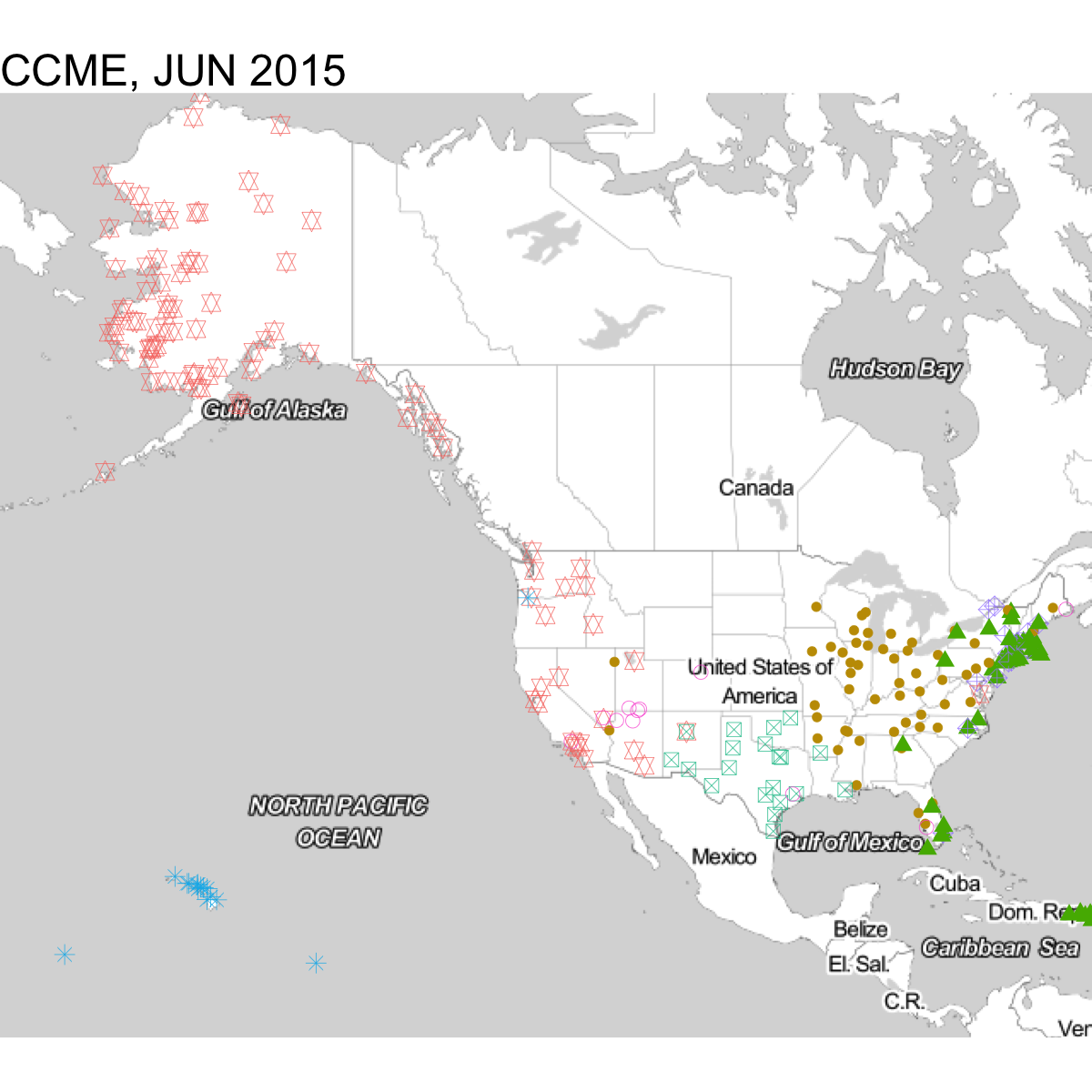}
		\includegraphics[scale = 0.16]{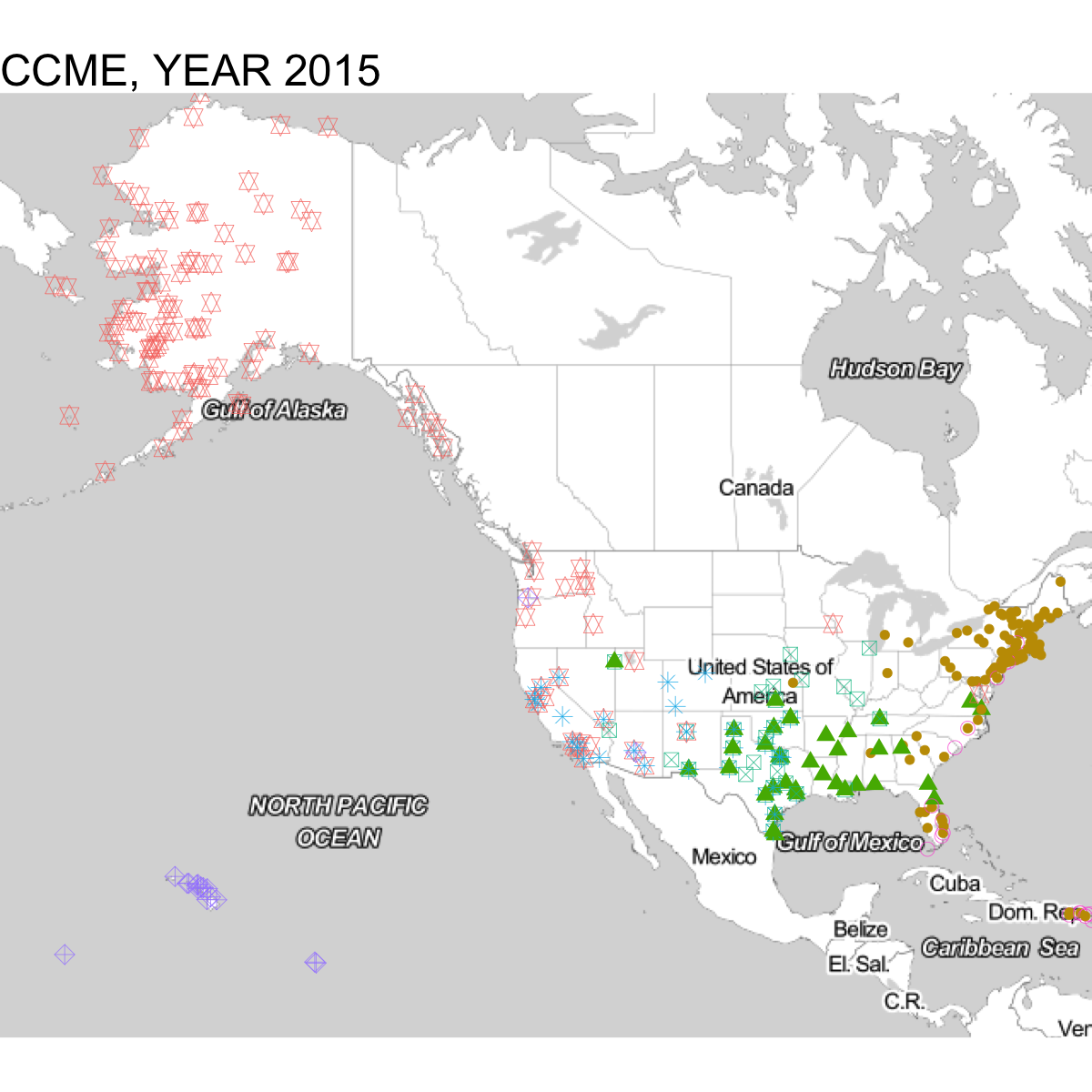}
	}
	\caption{\label{fig:airports_years}SLPAw, OSLOM, and CCME results from June 2015 and 2015-year-aggregated U.S. airport networks. Maps created with $\mathtt{ggmap}$ \citep{ggmap}.}
\end{figure}

\subsection{ENRON email network}\label{Data:enron}

An email corpus from the company ENRON was made available in 2009. The un-weighted network formed by linking communicating email addresses is well-studied; see www.cs.cmu.edu/\url{~}./enron for references and \cite{leskovec2010stanford} for the data. For the purposes of this paper, we derived a \emph{weighted} network from the original corpus, using message count between addresses as edge weights. Though the corpus was formed from email folders of 150 ENRON executives, we made the network from addresses found in \emph{any} message. This full network has 80,702 nodes, comprised of a majority of non-ENRON addresses, and likely many spam or irrelevant senders. Thus, the network has many potential ``true" background nodes. We applied CCME, OSLOM, and SLPAw to the network to see which methods best focused on 
company-specific areas of the data.

Tables \ref{table:email_stats1} and \ref{table:email_stats2} give basic summaries of the results, which show 
noticeable differences between the outputs of the methods.  CCME placed far fewer into nodes 
into communities, but detected larger communities with more overlapping nodes. 
Notably, CCME had the highest percentage of ENRON addresses among nodes it placed into communities 
(see Table \ref{table:domains}). These results suggest that CCME was more sensitive to 
critical relationships in the network.

\begin{table}[!htb]
\caption{\label{table:email_stats1} Metrics from methods' results on ENRON network: number of communities, minimum community size, median community size, maximum community size, count of nodes in any community}
\centering
\fbox{
\scriptsize
\begin{tabular}{*{6}{c}}
 & Num.Comms & Min.size & Med.size & Max.size & Num.Nodes \\ 
\hline
CCME & 185 & 2 & 687 & 5416 & 14552 \\ 
OSLOM & 405 & 2 & 19 & 770 & 17635 \\ 
SLPAw & 2138 & 2 & 4 & 4793 & 79316 \\
\end{tabular}}
\end{table}

\begin{table}[!htb]
\caption{\label{table:email_stats2} Metrics from methods' results on ENRON network: number of overlapping nodes, minimum \# of memberships, median \# of mem'ships, max.\ \# of mem'ships}
\centering
\fbox{
\scriptsize
\begin{tabular}{*{5}{c}}
 & Num.OL.Nodes & Min.mships & Med.mships & Max.mships \\ 
  \hline
CCME & 8104 & 2 & 9 & 78 \\ 
OSLOM & 462 & 2 & 2 & 8 \\ 
SLPAw & 3860 & 2 & 2 & 4 \\ 
\end{tabular}}
\end{table}

\begin{table}[!htb]

\caption{\label{table:domains} Top domains associated with community nodes from each method, by proportion}
\centering

\fbox{
\scriptsize
\begin{tabular}{*{2}{c}}
%  \hline
CCME.Domains & Prop. \\ 
  \hline
enron.com & 0.784 \\ 
aol.com & 0.008 \\ 
cpuc.ca.gov & 0.006 \\ 
pge.com & 0.004 \\ 
socalgas.com & 0.003 \\ 
dynegy.com & 0.003 \\ 
\end{tabular}}
% ------------------------------------------
%\quad
% ------------------------------------------
\fbox{
\scriptsize
\begin{tabular}{*{2}{c}}
%  \hline
OSLOM.Domains & Prop. \\ 
  \hline
enron.com & 0.529 \\ 
aol.com & 0.029 \\ 
haas.berkeley.edu & 0.016 \\ 
hotmail.com & 0.015 \\ 
yahoo.com & 0.009 \\ 
jmbm.com & 0.005 \\ 
\end{tabular}}
% ------------------------------------------
%\quad
% ------------------------------------------
\fbox{
\scriptsize
\begin{tabular}{*{2}{c}}
%  \hline
SLPAw.Domains & Prop. \\ 
  \hline
enron.com & 0.423 \\ 
aol.com & 0.039 \\ 
hotmail.com & 0.023 \\ 
yahoo.com & 0.016 \\ 
haas.berkeley.edu & 0.007 \\ 
msn.com & 0.006 \\
\end{tabular}}
% ------------------------------------------
\end{table}

%\input{Data_2.tex}

% Discussion

\section{Discussion}\label{Discussion}
In this paper, we introduced the continuous configuration model, which is, to the best of our knowledge, the first null model for community detection on weighted networks. The explicit generative form of the null model allowed the specification of CCME, a community extraction method based on sequential significance testing.  We showed that a standardized statistic for the tests is asymptotically normal, a result which enables an analytic approximation to p-values used in the method.  We also proved asymptotic consistency under a weighted stochastic block model for the core algorithm of the method.

On simulated networks the proposed method CCME is competitive with commonly-used community detection methods. 
CCME was the dominant method for simulated networks with large numbers of overlapping nodes. 
Furthermore, on networks with background nodes, CCME was the only method to correctly 
label true background nodes while maintaining high detection power and accuracy for 
nodes belonging to communities. On real data, CCME gave results that were both interpretable and revelatory with respect to the natural system under study.

We expect that the continuous configuration model will have applications outside the setting of this paper, just as the binary configuration model has been studied in diverse contexts. One may investigate the distributional properties of many different graph-based statistics under the model, 
as a means of assessing statistical significance in practice. 
For instance, an appropriate theoretical analysis could yield an approach to the assessment of statistical significance of weighted modularity. Theorem \ref{thm:CLT} may be precedent for this endeavor. Another benefit of an explicit null for weighted networks is the potential for simulation. Using the continuous configuration model, and parts of the framework presented in this paper, one can generate weighted networks having true background nodes with arbitrary expected degree and strength distributions.

\subsection{Acknowledgements and Remarks} The authors thank Dr.\ Peter J.\ Mucha for helpful suggestions about the presentation and contextualization of this paper's contributions. The $\mathtt{R}$ code for the CCME method is available in the github repository `jpalowitch/CCME'. The code for reproducing the analyses in Sections \ref{Simulations} and \ref{Data} is  available at the github repository `jpalowitch/CCME\_analyses'.

% Appendix start

\appendix

% Appendix_theory

% \input{section_files/bw_Appendix_theory.tex}

% Appendix_nontheory

\section{Proof of Proposition \ref{sub-mean-sd}}\label{Prop1proof}

	Equation \ref{eq:mean_fun} follows immediately from the observation in equation \ref{Model:expv1} and the definition of $r_{uv}(\vs)$. Next, note that 
	\[
	\expv\left(W_{uv}|A_{uv}\right) = f_{uv}(\vd,\vs)A_{uv},
	\;\text{ and }\;
	\text{Var}(W_{uv}|A_{uv}) = \vpar f_{uv}(\vd,\vs)^2A_{uv}.
	\]
	Thus, using the law of total variance,
	\begin{align*}
	\text{Var}(W_{uv}) & = f_{uv}(\vd, \vs)^2\text{Var}(A_{uv}) + \vpar f_{uv}(\vd, \vs)^2\expv(A_{uv})\\
	& = f_{uv}(\vd, \vs)^2\ruvdag(\vd)(1 - \ruvdag(\vd)) + \vpar f_{uv}(\vd, \vs)^2\ruvdag(\vd)\\
	& = r_{uv}(\vs)f_{uv}(\vd,\vs)\left(1 - \ruvdag(\vd) + \vpar\right)
	\end{align*}
	Summing over $v\in B$ gives equation \ref{eq:var_fun}.\qed

\section{Proof of Theorem \ref{thm:CLT} and supporting lemmas.}\label{clt-proof}

Here we give the proof of Theorem \ref{thm:CLT} in Section \ref{Method:CLT}. We start with supporting lemmas.  Recall the definition of the average degree parameter 
$\lambda_n$, 
the normalized $r^{\text{th}}$-moment $L_{n,r}$, 
and other associated definitions from Section \ref{Method:CLT}. For the purposes of the results below, we define the following generalization of $L_{n,r}$, given a node set $B_n\subseteq N$ with $b_n := |B_n|$:
\[L_{n,r}(B_n) := b_n^{-1}\underset{u\in B_n}{\sum}\{d_n(u)/\lambda_n\}^r\]
Note that $L_{n, r}(N) = L_{n,r}$. Recall that in the setting of Theorem \ref{thm:CLT}, the node set $B_n$ is chosen uniformly from the node set $N$. The first result involves a \emph{deterministic} sequence $\{B_n\}_{n\geq1}$:

% Deterministic CLT proposition
\begin{lemma}\label{prop:deterministic_CLT} 
For each $n>1$, let $\Net_n$ be generated by the continuous configuration model with parameters $\theta_n = (\vd_n, \vs_n, \vpar_n)$ and common weight distribution $F$. Fix a node sequence $\{u_n\}_{n>1}$ with $u_n\in N$ and a positive integer sequence $\{b_n\}_{n>1}$ with $b_n\leq n$. Suppose the parameter sequence $\{d_n(u_n)\}_{n\geq1}$ satisfies
\[
\frac{d_n(u_n)b_n}{n}\rightarrow\infty\text{ as }n\rightarrow\infty
\]
Fix $\eps > 0$ as in Assumption $\ref{assumption:moment}$, and choose $\delta\in(0,1)$ such that $2\beta\delta < \eps$. Fix a sequence of sets $\{B_n\}_{n>1}$ with $|B_n| = b_n$ for all $n$, and suppose that for $r = 2\beta + 1$ and $r = \beta(2+\delta) + 1$, the sequence $\{L_{n, r}(B_n)\}_{n>1}$ is bounded away from zero and infinity. Then
\[
\frac{S(u_n,B_n,\Net_n) - \mu_n(u_n,B_n|\theta_n)}{\sigma_n(u_n,B_n|\theta_n)}\Rightarrow\mathcal{N}(0,1)\;\text{ as }\;n\rightarrow\infty
\]
\end{lemma}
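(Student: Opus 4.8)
The plan is to apply Lyapunov's central limit theorem. Under the continuous configuration model the weights $\{W_{u_n v}:v\in B_n\}$ are independent, so $S(u_n,B_n,\Net_n)=\sum_{v\in B_n}W_{u_n v}$ is a sum of independent, non-identically-distributed summands whose means $r_{u_n v}(\vs_n)$ and variances are supplied by Proposition \ref{sub-mean-sd}; the whole argument therefore reduces to verifying the Lyapunov condition at the exponent $2+\delta$ from the statement, i.e.\ that $R_n:=\sfunn{u_n}{B_n}{\theta_n}^{-(2+\delta)}\sum_{v\in B_n}\E|W_{u_n v}-r_{u_n v}(\vs_n)|^{2+\delta}\to0$. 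For the numerator, write $W_{uv}=f_{uv}(\vd_n,\vs_n)\xi_{uv}A_{uv}$ with $A_{uv}\sim\mathrm{Bernoulli}(\ruvdag(\vd_n))$ and $\xi_{uv}\sim F$ independent; the $c_r$-inequality and $\E\xi_{uv}^{2+\delta}\le(\E\xi_{uv}^{3})^{(2+\delta)/3}<\infty$ (valid since $\delta<1$ and $F$ has finite third moment, Assumption \ref{assumption:distribution}) yield
\[
\E|W_{uv}-r_{uv}(\vs_n)|^{2+\delta}\ \lesssim\ f_{uv}(\vd_n,\vs_n)^{2+\delta}\ruvdag(\vd_n)\ =\ r_{uv}(\vs_n)f_{uv}(\vd_n,\vs_n)\cdot f_{uv}(\vd_n,\vs_n)^{\delta},
\]
using $f_{uv}(\vd_n,\vs_n)\ruvdag(\vd_n)=r_{uv}(\vs_n)$ and $\ruvdag(\vd_n)\le1$. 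For the denominator, equation \eqref{eq:var_fun} together with $\inf_n\vpar_n>0$ and $\sup_n\vpar_n<\infty$ (Assumption \ref{assumption:distribution}) gives $\sfunn{u_n}{B_n}{\theta_n}^2\asymp\sum_{v\in B_n}r_{u_n v}(\vs_n)f_{u_n v}(\vd_n,\vs_n)$.

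The core of the proof is to reduce both sums to normalized degree moments. Using Assumption \ref{assumption:power} to write $s_n(u)\asymp d_n(u)^{1+\beta}$ uniformly in $u$ (hence $s_{n,T}\asymp n\lambda_n^{1+\beta}$, as $L_{n,1+\beta}$ is bounded by interpolating Assumption \ref{assumption:moment}), and Assumption \ref{assumption:ruv} to reduce the truncation to $\ruvdag(\vd_n)\asymp r_{uv}(\vd_n)=d_n(u)d_n(v)/d_{n,T}$, one computes, with $\hat d(v):=d_n(v)/\lambda_n$ and with constants uniform in $n$ for large $n$,
\[
r_{uv}(\vs_n)f_{uv}(\vd_n,\vs_n)\ \asymp\ \frac{\lambda_n^{2\beta+1}}{n}\hat d(u)^{2\beta+1}\hat d(v)^{2\beta+1},\qquad f_{uv}(\vd_n,\vs_n)^{\delta}\ \asymp\ \lambda_n^{\beta\delta}\hat d(u)^{\beta\delta}\hat d(v)^{\beta\delta}.
\]
Summing over $v\in B_n$, the denominator involves $\sum_{v\in B_n}\hat d(v)^{2\beta+1}=b_nL_{n,2\beta+1}(B_n)$ and the numerator involves $\sum_{v\in B_n}\hat d(v)^{\beta(2+\delta)+1}=b_nL_{n,\beta(2+\delta)+1}(B_n)$ --- exactly the two exponents for which the hypothesis supplies $L_{n,r}(B_n)\asymp1$, the constraint $2\beta\delta<\eps$ being what keeps $\beta(2+\delta)+1$ inside the moment range that Assumption \ref{assumption:moment} ultimately controls. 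Substituting gives $\sfunn{u_n}{B_n}{\theta_n}^2\asymp b_nd_n(u_n)^{2\beta+1}/n$ and numerator $\lesssim b_nd_n(u_n)^{\beta(2+\delta)+1}/n$, and the exponent arithmetic collapses to
\[
R_n\ \lesssim\ \Bigl(\tfrac{n}{d_n(u_n)b_n}\Bigr)^{\delta/2}\ \longrightarrow\ 0
\]
by the hypothesis $d_n(u_n)b_n/n\to\infty$; Lyapunov's theorem then gives the stated convergence.

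I expect the main obstacle to be the bookkeeping in the second step: carrying the exact powers of $d_n(\cdot)$, $\lambda_n$ and $n$ through the ratios $r_{uv}(\vs_n)/\ruvdag(\vd_n)$, confirming each $\asymp$ holds with constants independent of $n$ (this is where the $\liminf/\limsup$ form of Assumptions \ref{assumption:power}--\ref{assumption:ruv} and interpolation bounds on $L_{n,r}$ for intermediate $r$ matter), and verifying that the powers really cancel to leave the clean bound $(n/(d_n(u_n)b_n))^{\delta/2}$. Handling the truncation $\ruvdag(\vd_n)=\min\{1,r_{uv}(\vd_n)\}$ via Assumption \ref{assumption:ruv}, and the third-moment reduction for $F$, are comparatively routine.
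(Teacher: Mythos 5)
Your proposal is correct and follows essentially the same route as the paper's proof: verify the Lyapunov condition at exponent $2+\delta$, bound $\E|W_{u_nv}-r_{u_nv}(\vs_n)|^{2+\delta}\lesssim f_{u_nv}(\vd_n,\vs_n)^{2+\delta}\ruvdag(\vd_n)$ using the finite third moment of $F$, lower-bound the variance via \eqref{eq:var_fun} and the boundedness of $\vpar_n$, convert everything to powers of degrees through Assumption \ref{assumption:power}, and conclude that the Lyapunov ratio is $O\{(d_n(u_n)b_n/n)^{-\delta/2}\}$ using the assumed bounds on $L_{n,2\beta+1}(B_n)$ and $L_{n,\beta(2+\delta)+1}(B_n)$. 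The only cosmetic differences are that the paper treats the truncation $\ruvdag$ by a two-case analysis and carries the ratio $d_T/s_T$ uncancelled to the end, whereas you invoke Assumption \ref{assumption:ruv} to get $\ruvdag\asymp r_{uv}(\vd_n)$ directly and pin down $s_{n,T}$ via an interpolated bound on $L_{n,1+\beta}$; both are valid.
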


% Deterministic CLT Proof
\noindent
\begin{proof} In what follows, the functions $r_{uv}$ and $\tilde r_{uv}$ from Section \ref{Notation} will be used extensively. Note that for any nodes $u,v$, $\expv W_{uv} = r_{uv}(\vs)$. Thus by the classical Lyapunov central limit theorem it suffices to show that
\begin{equation}\label{eq:lyapunov}
\frac{\underset{v\in B_n}{\sum}\expv|W_{u_n,v} - r_{u_nv}(\vs_n)|^{2+\delta}}
{\left(\sqrt{\underset{v\in B_n}{\sum}\expv\left\{(W_{u_n,v} - r_{u_nv}(\vs_n))^2\right\}}\right)^{2+\delta}}\rightarrow0
\end{equation}
as $n$ tends to infinity. The following derivations hold for any fixed $n>1$, so we suppress dependence on $n$ from $u_n$, and $B_n$, and similar expressions. For the numerator of \eqref{eq:lyapunov}, we have
\begin{align}
\expv|W_{u,v} - \rfun{\vs}|^{2+\delta} & = \left(\frac{\rfun{ \vs }}{\ruvdag(\vd )}\right)^{2+\delta}\expv\left(|\xi_{uv}A_{uv} - \ruvdag(\vd)|^{2+\delta}\right)\nonumber\\
& = \ffun{ \vd }{ \vs }^{2+\delta}\cdot\expv\left(|\xi_{uv}A_{uv} - \ruvdag(\vd)|^{2+\delta}\right),\label{eq:num0}
\end{align}
by definition of the model in Section $\ref{Model:statement}$.  Moreover, by the law of total variance,
\begin{align}
\expv(|\xi_{uv}A_{uv} - \ruvdag(\vd)|^{2+\delta}) \;=\;& (1 - \ruvdag(\vd))\ruvdag(\vd)^{2+\delta} + \ruvdag(\vd)\expv|\xi_{uv} - \ruvdag(\vd)|^{2+\delta}\nonumber\\
=\;&\left\{(1 - \ruvdag(\vd))\ruvdag(\vd)^{1+\delta} + \expv|\xi_{uv} - \ruvdag(\vd)|^{2+\delta}\right\}\cdot \ruvdag(\vd)\nonumber\\
\leq\;&C\cdot \ruvdag(\vd)\label{eq:num1}
\end{align}
for some positive constant $C$, by Assumption \ref{assumption:distribution}. 
Next, we note that by Assumption \ref{assumption:power}, there exist positive constants $a < c$ such that for all $v\in N$,
\[a\cdot d_n(v)^\beta\leq\frac{s_n(v)}{d_n(v)}\leq c\cdot d_n(v)^\beta,\]
for $n$ sufficiently large. Thus, if $r_{uv}(\vd)\leq 1$, $\ruvdag(\vd) = \rfun{ \vd }$, and 
\begin{equation}
\ffun{ \vd }{ \vs } = \frac{\rfun{ \vs }}{\ruvdag( \vd )} = \left(\frac{d_T}{s_T}\right)\frac{s(u)s(v)}{d(u)d(v)}\leq c\cdot \left(\frac{d_T}{s_T}\right)\{d(u)d(v)\}^\beta.\label{eq:num2}
\end{equation}
If $r_{uv}(\vd)>1$, $\ruvdag(\vd) = 1$, and by Assumption \ref{assumption:ruv} there exists $c'$ such that
\begin{align}
\ffun{ \vd }{ \vs } = \frac{s(u)s(v)}{s_T} & \leq c\cdot \left(\frac{d(u)d(v)}{s_T}\right)\{d(u)d(v)\}^\beta \nonumber\\
& = c\cdot \left(\frac{d_T}{s_T}\right)r_{uv}(\vd)\{d(u)d(v)\}^\beta\leq c'\cdot \left(\frac{d_T}{s_T}\right)\{d(u)d(v)\}^\beta.\label{eq:num2-2}
\end{align}
Therefore, combining \eqref{eq:num1}-\eqref{eq:num2-2} with \eqref{eq:num0}, there exists $C>0$ such that
\begin{align}
\expv|W_{u,v} - r_{uv}(\vs)|^{2+\delta} \; \leq& \; C\left(\frac{d_T}{s_T}\right)^{2 + \delta}\cdot\{d(u)d(v)\}^{\beta(2+\delta)}\ruvdag(\vd)\nonumber\\
=& \; C\left(\frac{d_T}{s_T}\right)^{2 + \delta}\cdot\{d(u)d(v)\}^{\beta(2+\delta)}\frac{d(u)d(v)}{d_T}\nonumber\\
\leq & \; C\cdot d_T^{1+\delta}s_T^{-(2 + \delta)}\cdot\{d(u)d(v)\}^{\beta(2+\delta) + 1}\label{eq:num_final}
\end{align}
A similar analysis of the summands in the denominator of \eqref{eq:lyapunov} gives
\begin{equation}\label{eq:den_final}
\expv\left\{(W_{u,v} - r_{uv}(\vs))^2\right\} \geq C'\cdot d_Ts_T^{-2}\cdot\{d(u)d(v)\}^{2\beta + 1}
\end{equation}
for appropriately chosen $C'$. Let $b = |B|$. Combining \eqref{eq:num_final} and \eqref{eq:den_final}, with some algebra, we find that the 
left side of \eqref{eq:lyapunov} is (up to a constant) less than
\begin{align}
&
\left(\frac{d(u)}{d_T}\right)^{-\delta/2}
\cdot
\frac{\underset{v\in B}{\sum}d(v)^{\beta(2+\delta) + 1}}{\left(\underset{v\in B}{\sum}d(v)^{2\beta + 1}\right)^{1 + \delta / 2}}\nonumber\\[.1in] % end of line1
\;=\;&
\left(\frac{d(u)}{d_T}b\lambda\right)^{-\delta/2}
\cdot
\frac{b^{-1}\underset{v\in B}{\sum}\left(d(u)/\lambda\right)^{\beta(2+\delta) + 1}}{\left\{b^{-1}\underset{v\in B}{\sum}\left(d(u)/\lambda\right)^{2\beta + 1}\right\}^{1 + \delta/2}}\nonumber\\[.1in] % end of line2
\;=\;&
\left(\frac{d(u)}{d_T}b\lambda\right)^{-\delta/2}
\cdot
\frac{L_{n, \beta(2+\delta) + 1}(B)}{\left(L_{n, 2\beta + 1}(B)\right)^{1 + \delta/2}}
\ = \ 
O\left\{\left(\frac{d(u)}{d_T}b\lambda\right)^{-\delta/2}
\right\}\label{eq:last_one}
% end of line3
\end{align}
where the final term follows from our assumptions on $L_{n, \beta(2+\delta) + 1}(B_n)$ 
and $L_{n, 2\beta + 1}(B_n)$. 
By definition, $d_{n, T} = n\lambda_n$, so the final expression above is 
$O\left\{\left(d_n(u_n)b_n/n\right)^{-\delta/2}\right\} = o(1)$ by assumption. 
Thus \eqref{eq:lyapunov} holds and the result follows.
\end{proof}

\vskip.1in
\noindent 
We now proceed with the proof of Theorem \ref{thm:CLT}. 
%The main idea is as follows. Note that 
Proposition \ref{prop:deterministic_CLT} yields the CLT for $S(u_n,B_n,\Net_n)$ for a deterministic sequence 
of vertex sets $\{B_n\}_{n\geq 1}$ satisfying regularity properties. The remainder of the argument 
shows that if $B_n$ is selected uniformly at random then, under the assumptions of Theorem \ref{thm:CLT}, 
these regularity properties are satisfied with high probability. 
%This is needed to complete the proof (Section \ref{proof_completion}).
We begin with a few preliminary definitions and results.
% Here we give a definition and related result from \cite{van2000asymptotic}:

% Van der Vaart Theorem
\begin{definition}\label{def:aui} 
A sequence of random variables $\{X_n\}_{n\geq1}$ is said to be \emph{asymptotically uniformly integrable} if
\[\underset{M\rightarrow\infty}{\lim}\underset{n\rightarrow\infty}{\limsup}\;\expv\left\{|X_n|\IND(|X_n|>M)\right\} = 0\]
\end{definition}

\begin{theorem}\label{thm:vdv} 
Let $f:\reals^k\mapsto\reals^k$ be measurable and continuous at every point in a set $C$. 
Suppose $X_n\xrightarrow{w}X$ where $X$ takes its values in an interval $C$. Then $\expv f(X_n)\rightarrow \expv f(X)$ if and only if the sequence of random variables $f(X_n)$ is asymptotically uniformly integrable.
\end{theorem}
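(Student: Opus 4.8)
The plan is to reduce Theorem~\ref{thm:vdv} to a convergence-of-expectations statement about the single sequence $Y_n := f(X_n)$ and then settle each implication by a common truncation argument. First I would invoke the continuous mapping theorem --- in the version that permits discontinuities of $f$ outside a set of measure zero under the law of $X$ --- to obtain $Y_n \xrightarrow{w} Y := f(X)$ from $X_n \xrightarrow{w} X$, the continuity of $f$ on $C$, and $\pr(X \in C) = 1$; since $\norm{\cdot}$ is continuous this also gives $\norm{Y_n} \xrightarrow{w} \norm{Y}$. The main device is the ``clipping'' map $\psi_M(y) := y$ on $\{\norm{y} \leq M\}$ and $\psi_M(y) := My/\norm{y}$ otherwise, which is bounded and continuous, satisfies $\norm{\psi_M(y)} = \min(\norm{y}, M)$, and obeys $\norm{y - \psi_M(y)} = (\norm{y} - M)^+ \leq \norm{y}\,\IND(\norm{y} > M)$. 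Since the maps $\psi_M$ and $y \mapsto \min(\norm{y}, M)$ are bounded and continuous, weak convergence gives, for each fixed $M$, $\expv \psi_M(Y_n) \to \expv \psi_M(Y)$ (componentwise) and $\expv \min(\norm{Y_n}, M) \to \expv \min(\norm{Y}, M)$.

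For the direction ``$\{f(X_n)\}$ asymptotically uniformly integrable $\Rightarrow \expv f(X_n) \to \expv f(X)$'' I would first use Definition~\ref{def:aui} to pick $M_0$ and $N$ with $\expv\{\norm{Y_n}\,\IND(\norm{Y_n} > M_0)\} \leq 1$ for all $n \geq N$, so that $\sup_{n \geq N}\expv\norm{Y_n} \leq M_0 + 1$ and hence $\expv\norm{Y} \leq \liminf_n \expv\norm{Y_n} < \infty$ by Fatou's lemma along $\norm{Y_n} \xrightarrow{w} \norm{Y}$. Then, from $\norm{\expv Y_n - \expv\psi_M(Y_n)} \leq \expv\{\norm{Y_n}\,\IND(\norm{Y_n} > M)\}$, the analogous bound with $Y$ in place of $Y_n$, and $\expv\psi_M(Y_n) \to \expv\psi_M(Y)$, one obtains $\limsup_n \norm{\expv Y_n - \expv Y} \leq \limsup_n \expv\{\norm{Y_n}\,\IND(\norm{Y_n} > M)\} + \expv\{\norm{Y}\,\IND(\norm{Y} > M)\}$; letting $M \to \infty$ kills the first term by asymptotic uniform integrability and the second by dominated convergence, which gives $\expv f(X_n) \to \expv f(X)$.

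For the converse I would run the same truncation on the scalar sequence $\norm{Y_n}$, for which convergence of the means is the genuine content (it is automatic when $f \geq 0$, with $\norm{\cdot}$ the $\ell^1$ norm). From $\expv\norm{Y_n} \to \expv\norm{Y} < \infty$ and $\expv\min(\norm{Y_n}, M) \to \expv\min(\norm{Y}, M)$ one gets $\expv\{\norm{Y_n} - \min(\norm{Y_n}, M)\} \to \expv\{(\norm{Y} - M)^+\}$, and since $\norm{y}\,\IND(\norm{y} > 2M) \leq 2(\norm{y} - M)^+$ this forces $\limsup_n \expv\{\norm{Y_n}\,\IND(\norm{Y_n} > 2M)\} \leq 2\,\expv\{(\norm{Y} - M)^+\}$, which tends to $0$ as $M \to \infty$ because $\expv\norm{Y} < \infty$; hence $\{f(X_n)\}$ is asymptotically uniformly integrable. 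The step I expect to be the main obstacle is exactly this converse: $\expv f(X_n) \to \expv f(X)$ on its own cannot rule out signed mass oscillating off to infinity, so the argument has to pass through $\norm{f(X_n)}$ --- harmless for nonnegative $f$, or whenever the absolute first moments $\expv\norm{f(X_n)}$ are themselves under control, which is the only regime in which this direction is invoked downstream. A secondary technicality is the bookkeeping for the continuous mapping step across the discontinuity set of $f$ and the verification that $\expv\norm{f(X)} < \infty$ before any truncation estimate is applied.
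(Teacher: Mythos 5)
Your proof is correct, but be aware that the paper does not actually prove this statement: its entire ``proof'' is the citation \emph{See Asymptotic Statistics (Van der Vaart 2000), page 17}, since this is Theorem 2.20 of that book. What you have written is, in substance, van der Vaart's own truncation argument (the clipping map $\psi_M$, the three-term triangle-inequality decomposition, and the bound $\norm{y}\,\IND(\norm{y}>2M)\leq 2(\norm{y}-M)^+$ for the converse), so you have reconstructed the omitted textbook proof rather than found a genuinely different one; what your version buys is self-containedness, and what the paper's citation buys is brevity. Two substantive remarks. First, your forward direction (asymptotic uniform integrability plus weak convergence implies $\expv f(X_n)\to\expv f(X)$) is complete and is the only direction the paper ever uses: in Lemma \ref{lemma:moment} it is applied to $f(x)=x^{s}$ with $X_n\geq 0$. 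Second, your caveat about the converse is well founded and worth making explicit: as literally stated the ``only if'' direction fails for signed $f$ --- take $Y_n=f(X_n)$ with $\pr(Y_n=n)=\pr(Y_n=-n)=1/(2n)$ and $Y_n=0$ otherwise, so that $Y_n\xrightarrow{w}0$ and $\expv Y_n=0\to 0$ while $\expv\{\abs{Y_n}\IND(\abs{Y_n}>M)\}=1$ for every $n>M$ --- so your repair of routing the argument through $\norm{f(X_n)}$ under the additional hypothesis $\expv\norm{f(X_n)}\to\expv\norm{f(X)}$ (automatic for nonnegative $f$) is exactly the right fix. Since the paper only invokes the forward implication on nonnegative variables, this imprecision in the quoted statement has no downstream consequences.
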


\begin{proof} See Asymptotic Statistics (Van der Vaart 2000), page 17.
\end{proof}

\noindent We now give a technical lemma (needed for a subsequent result) which uses Theorem \ref{thm:vdv}:

% Moment lemma
\begin{lemma}\label{lemma:moment}
Let $X_1, X_2, \ldots$ be non-negative random variables and let $s, \eps >0$. 
If the sequences $\{\expv X_n^s\}_{n \geq 1}$ and $\{\expv X_n^{s + \eps}\}_{n \geq 1}$ are 
bounded away from zero and infinity, then $\{\expv X_n^r\}_{n\geq1}$ is bounded away
from zero and infinity for every $r\in (0, s + \eps)$.
\end{lemma}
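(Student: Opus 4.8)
The plan is to treat the upper bound and the lower bound separately, and within each to split into the regimes $r\in(0,s]$ and $r\in(s,s+\eps)$; the only genuinely delicate piece is the lower bound for $r<s$, and everything else is a direct application of the interpolation inequalities for moments. For the upper bound with $r\le s$, Lyapunov's inequality gives $\E X_n^r\le(\E X_n^s)^{r/s}$, so boundedness of $\{\E X_n^s\}$ from above transfers to $\{\E X_n^r\}$. For $r\in(s,s+\eps)$, set $\lambda:=(r-s)/\eps\in(0,1)$, so that $r=(1-\lambda)s+\lambda(s+\eps)$; log-convexity of $p\mapsto\E X_n^p$ (i.e.\ H\"older's inequality) yields $\E X_n^r\le(\E X_n^s)^{1-\lambda}(\E X_n^{s+\eps})^{\lambda}$, which is bounded above. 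For the lower bound with $r\in[s,s+\eps)$, Lyapunov in the form $(\E X_n^s)^{1/s}\le(\E X_n^r)^{1/r}$ gives $\E X_n^r\ge(\E X_n^s)^{r/s}$, bounded away from zero since $\{\E X_n^s\}$ is.

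It remains to bound $\{\E X_n^r\}$ away from zero for $r\in(0,s)$, and here I would argue by contradiction using Theorem \ref{thm:vdv}. Suppose the sequence is not bounded away from zero, and pass to a subsequence $\{n_k\}$ along which $\E X_{n_k}^r\to0$. Since $X_{n_k}^r\ge0$, Markov's inequality gives $X_{n_k}^r\probc0$, hence $X_{n_k}\probc0$ and therefore $X_{n_k}^s\Rightarrow0$. I then apply Theorem \ref{thm:vdv} with the identity map to $Y_k:=X_{n_k}^s$: the conclusion $\E Y_k\to0$ holds provided $\{Y_k\}$ is asymptotically uniformly integrable, and this is exactly where the $(s+\eps)$-moment bound enters, since for any $M>0$
\[
\E\!\left[X_{n_k}^s\,\IND(X_{n_k}^s>M)\right]\;\le\;M^{-\eps/s}\,\E\!\left[X_{n_k}^{s+\eps}\,\IND(X_{n_k}^s>M)\right]\;\le\;M^{-\eps/s}\,\sup_{j\geq1}\E X_j^{s+\eps},
\]
and the last bound, being independent of $k$, tends to $0$ as $M\to\infty$ because $\{\E X_n^{s+\eps}\}$ is bounded. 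Hence $\{Y_k\}$ is asymptotically uniformly integrable, so $\E X_{n_k}^s\to0$, contradicting the hypothesis that $\{\E X_n^s\}$ is bounded away from zero.

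I expect this last case to be the main obstacle, because it is the only point where probability mass of $X_n$ could in principle ``escape to infinity'' and decouple the small moments from $\E X_n^s$; the role of the $(s+\eps)$-moment hypothesis is precisely to rule this out. As a remark, one can avoid Theorem \ref{thm:vdv} entirely by a one-line truncation: for any $K>0$, splitting at the level $K$ gives
\[
\E X_n^s\;=\;\E\!\left[X_n^s\,\IND(X_n\le K)\right]+\E\!\left[X_n^s\,\IND(X_n>K)\right]\;\le\;K^{s-r}\,\E X_n^r+K^{-\eps}\,\E X_n^{s+\eps},
\]
and choosing $K$ large enough that $K^{-\eps}\sup_{j\geq1}\E X_j^{s+\eps}$ is at most half of $\inf_{n\geq1}\E X_n^s$ forces $\E X_n^r\ge(\inf_{n\geq1}\E X_n^s)/(2K^{s-r})>0$ for all $n$; I would present whichever of the two arguments reads better in context.
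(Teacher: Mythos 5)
Your proof is correct and, for the one nontrivial step (bounding $\E X_n^r$ away from zero when $r<s$), it is essentially the paper's argument: the same contradiction via Theorem \ref{thm:vdv}, driven by the identical asymptotic-uniform-integrability estimate $X_n^s\,\IND(X_n^s>M)\le M^{-\eps/s}X_n^{s+\eps}$; the paper differs only in running that contradiction uniformly over all $t\in(0,s+\eps)$ rather than splitting at $s$, and in handling the upper bound by the even simpler split $\E\{X_n^r\IND(X_n\le1)\}\le1$ and $\E\{X_n^r\IND(X_n>1)\}\le\E X_n^{s+\eps}$ in place of your Lyapunov/H\"older interpolation. Your closing truncation remark, $\E X_n^s\le K^{s-r}\,\E X_n^r+K^{-\eps}\,\E X_n^{s+\eps}$ with $K$ chosen large, is also valid and yields a fully elementary proof of the lower bound that bypasses Theorem \ref{thm:vdv} altogether.
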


\begin{proof}
%First we show that $\underset{n\rightarrow\infty}{\liminf}\;\expv X_n^r$ is non-zero for all $r\in(0, s+\eps)$, by way of contradiction. 
Suppose by way of contradiction that there exists $t \in(0, s+\eps)$ such that $\liminf_n \expv X_n^{t} = 0$. 
Then $\lim_k \expv X_{n_k}^{t} =0$ along a subsequence $\{n_k\}$.  As the random variables
$X_{n_k}^{t}$ are non-negative, $X_{n_k}^{t} \xrightarrow{d}0$, and it follows from the 
continuous mapping theorem that $X_{n_k} \xrightarrow{w} 0$. 
As $M^{\eps/s} \, X_n^s \, \IND(X_n^s > M) \leq X_n^{s+\eps}$, we find that
\[
\underset{M \rightarrow \infty}{\lim}\underset{k\rightarrow\infty}{\limsup}\;\expv\{X_{n_k}^s \IND(X_{n_k}^s > M)\}
\ \leq \  % is less than or equal to
\underset{M \rightarrow \infty}{\lim}M^{-\eps/s} \ 
\underset{k\rightarrow\infty}{\limsup}\;\expv(X_{n_k}^{s+\eps}) = 0
\]
as $\expv(X_n^{s+\eps})$ is bounded by assumption.  It then follows from Theorem $\ref{thm:vdv}$ and
the fact that $X_{n_k}^s \xrightarrow{w} 0$ that $\expv X_{n_k}^s \rightarrow 0$ as $k \rightarrow\infty$,
violating our assumption that $\expv X_n^s$ is bounded away from zero.  We conclude that 
$\expv X_n^r$ is bounded away from zero for $r \in (0,s+\eps)$.
On the other hand, if $r \in (0, s+\eps)$ then for each $n \geq 1$
\[
\expv\{X_n^r \IND(X_n > 1)\}
\ \leq \ 
\expv\{X_n^{s+\eps} \IND(X_n>1)\}
\ \leq \ 
\sup_n \expv\{X_n^{s+\eps}\}
\]
As the last term is finite by assumption and $\expv\{X_n^r \IND(X_n \leq 1)\}$ is at most one, 
it follows that $\expv(X_n^r)$ is bounded.
\end{proof}

% U.A.R. draws lemma
\begin{lemma}\label{lem:uarDraws} 
Suppose a degree parameter sequence $\{{\bf d}_n\}_{n\geq1}$ satisfies Assumption \ref{assumption:moment} 
from Section \ref{Method:CLT}. For each $n$, let $B_n$ be a randomly chosen subset of $N$ of size $b_n$, 
where $b_n \rightarrow \infty$. Fix $\eps>0$ as in Assumption $\ref{assumption:moment}$, 
and choose $\delta$ so that $2\beta\delta < \eps$. 
Then for every $r \in (0,\;\beta(2+\delta)+1]$, there exists an interval $I_r = (a_r,b_r)$ with $0 <a_r < b_r < \infty$ 
such that
$
\prob\{L_{n, r}(B_n) \in I_r\} \rightarrow 1 \text{ as } n \rightarrow \infty 
$.
\end{lemma}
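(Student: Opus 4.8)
The plan is to exploit that, when $B_n$ is a uniformly random subset of $N$ of size $b_n$, the quantity $L_{n,r}(B_n)$ is a sampling-without-replacement average whose mean over the draw of $B_n$ equals the full-population moment $L_{n,r}=L_{n,r}(N)$; concentration around this mean then follows from a variance estimate and Chebyshev's inequality. First, though, I would record that the \emph{deterministic} moments $L_{n,r}$ are well behaved for all exponents in play. Applying Lemma \ref{lemma:moment} to the random variables $X_n:=d_n(U_n)/\lambda_n$, where $U_n$ is uniform on $N$ (so that $\expv X_n^r=L_{n,r}$), with $s=4\beta+2$: Assumption \ref{assumption:moment} supplies exactly the hypotheses, and the conclusion is that $L_{n,r}$ is bounded away from zero and infinity for every $r\in(0,4\beta+2+\eps)$. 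Because $\delta\in(0,1)$ and $2\beta\delta<\eps$, both $\beta(2+\delta)+1$ and its double $2\bigl(\beta(2+\delta)+1\bigr)=4\beta+2\beta\delta+2$ lie strictly below $4\beta+2+\eps$; this is precisely where the standing choice $2\beta\delta<\eps$ gets used.

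Next I would bound the variance of $L_{n,r}(B_n)$ over the random choice of $B_n$. Writing $a_u:=(d_n(u)/\lambda_n)^r$ and $\IND_u:=\IND(u\in B_n)$, we have $L_{n,r}(B_n)=b_n^{-1}\sum_{u\in N}a_u\IND_u$, with $\prob(u\in B_n)=b_n/n$ and $\prob(u,v\in B_n)=b_n(b_n-1)/\bigl(n(n-1)\bigr)$ for $u\ne v$. A direct computation shows $\cov(\IND_u,\IND_v)\leq 0$ for $u\ne v$ (negative association of sampling without replacement), and after collapsing the double sum using $\sum_{u\ne v}a_ua_v=(\sum_u a_u)^2-\sum_u a_u^2$ one obtains
\[
\var\bigl(L_{n,r}(B_n)\bigr)=\frac{n-b_n}{b_n(n-1)}\bigl(L_{n,2r}-L_{n,r}^2\bigr)\ \leq\ \frac{L_{n,2r}}{b_n}.
\]
Since $2r<4\beta+2+\eps$ for every $r\in(0,\beta(2+\delta)+1]$, Step~1 gives that $L_{n,2r}$ is bounded, hence $\var(L_{n,r}(B_n))\to0$ as $b_n\to\infty$.

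Finally, by Step~1 there are constants $0<c_r\leq C_r<\infty$ with $c_r\leq L_{n,r}\leq C_r$ for all sufficiently large $n$. Chebyshev's inequality together with the variance bound yields $\prob\bigl(|L_{n,r}(B_n)-L_{n,r}|>c_r/2\bigr)\to0$, and on the complementary event $L_{n,r}(B_n)\in(c_r/2,\,2C_r)$; so the statement holds with $I_r:=(c_r/2,\,2C_r)$. The variance computation is routine; the only point requiring genuine care is the exponent bookkeeping of Step~1 — checking that passing from a mean to a variance (which doubles the exponent needed) still stays under the threshold $4\beta+2+\eps$ of Assumption \ref{assumption:moment}, which is exactly what the hypothesis $2\beta\delta<\eps$ secures.
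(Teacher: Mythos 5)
Your proof is correct and follows essentially the same route as the paper's: an exact variance computation for sampling without replacement (exploiting the negative covariance of the inclusion indicators), boundedness of $L_{n,r}$ and $L_{n,2r}$ via Lemma \ref{lemma:moment} and the exponent check $2\bigl(\beta(2+\delta)+1\bigr) < 4\beta+2+\eps$ secured by $2\beta\delta<\eps$, and then Chebyshev's inequality. Your version is in fact slightly cleaner in two small respects — you write the variance correctly as $\tfrac{n-b_n}{b_n(n-1)}\bigl(L_{n,2r}-L_{n,r}^2\bigr)$ where the paper has a harmless typo, and your argument covers the full range $r\in(0,\beta(2+\delta)+1]$ claimed in the statement — but the substance is identical.
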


\vskip.1in

\noindent
{\bf Remark:} Note that the function $L_{n,r} (\cdot)$ is non-random.  The probability appearing in the conclusion 
of Lemma \ref{lem:uarDraws} depends only on the random choice of the vertex set $B_n$.

% U.A.R. lemma Proof
\begin{proof}
Let $D_n$ and $D_n'$ be drawn uniformly-at-random from ${\bf d}_n$ without replacement, and 
fix $r \in (0,\;\beta(2+\delta)]$.   A routine calculation gives 
\[
\text{Var}\{L_{n, r}(B_n)\} = b_n^{-1}\lambda_n^{-2r}\left[\text{Var}\{D_n^r\} + \{b_n - 1\}\text{Cov}\{D_n^r, (D_n')^r\}\right].
\]
Note that $\expv(D_n^r) = \lambda_n^rL_{n, r}$ and $\expv(D_n^{2r}) = \lambda_n^{2r}L_{n, 2r}$, so
$\text{Var}(D_n^r) = \lambda_n^{2r}(L_{n,2r} - L_{n,r})$. 
Furthermore, a simple calculation shows that $\text{Cov}\{D_n^r, (D_n')^r\}$ is negative for every $r$, and
therefore $\text{Var}\{L_{n, r}(B_n)\} \leq b_n^{-1}(L_{n, 2r} - L_{n, r})$. 
%ABN: Stopped here
Our choice of $\delta$ ensures that $2r < 4 \beta + 2 + \eps$, and it then follows from
Lemma \ref{lemma:moment} and Assumption \ref{assumption:moment} that 
$L_{n, 2r}$ and $L_{n, r}$ are bounded.  Thus $\text{Var}\{L_{n, r}(B_n)\} = O(b_n^{-1})$. 
Define $\Delta := \liminf_n \; L_{n, r} / 2$, which is positive by Assumption \ref{assumption:moment}, and let
\begin{equation}\label{eq:I_r}
I_r := \left(\underset{n\rightarrow\infty}{\liminf}\;L_{n, r}-\Delta,\;\underset{n\rightarrow\infty}{\limsup}\;L_{n, r} + \Delta\right)
\end{equation}
As $\expv\{L_{n, r}(B_n)\} = L_{n, r}$, an application of Chebyshev's inequality yields the bound
\begin{align*}
\prob\{L_{n, r}(B_n)\notin I_r\}
\ \leq \ &\;\prob\{|L_{n, r}(B_n) - \expv[L_{n, r}(B_n)]| > \Delta / 2\}\\[.06in]  
\ \leq \ &\;\frac{4 \text{Var}\{L_{n, r}(B_n)\}}{\Delta^2}
\ = \ O(b_n^{-1}).
\end{align*}
As $b_n$ tends to infinity with $n$, the result follows.
\end{proof}

\subsection{Completing the proof of Theorem \ref{thm:CLT}.}\label{proof_completion}
Let $\eps$ and $\delta$ be as in Proposition \ref{prop:deterministic_CLT} and Lemma \ref{lem:uarDraws}. Note that since $d_n(u_n)\leq n$ for all $n$, our assumption that $b_nd_n(u_n)/n\rightarrow\infty$ implies $|B_n| = b_n\rightarrow\infty$. Hence by lemma \ref{lem:uarDraws}, we have that for both $r = \beta(2 + \delta) + 1$ and $r = 2\beta + 1$, there exists a positive, finite interval $I_r$ such that $\prob\{L_{n, r}(B_n)\in I_r\}\rightarrow1$ as $n\rightarrow\infty$. Thus given any subsequence $\set{n_k}_{k\geq 1}$ we can find a further subsequence $\set{n_k^{\prime}}_{k\geq 1}$ such that $L_{n_k^{\prime}, r}(B_{n_k^{\prime}})\in I_r$ almost surely as $k\rightarrow\infty$, which means this sequence is bounded away from zero and infinity in $k$. Now using Proposition \ref{prop:deterministic_CLT}, for almost every $\omega$ we have 
\begin{equation}\label{eq:CLT_ss}
\frac{S_{n'_k}(u_{n_{k}^{\prime}},B_{n^{\prime}_k},\Net_{{n^{\prime}_k}}) - \mu_{n_k'}(u_{n_{k}^{\prime}},B_{n^{\prime}_k}|\theta_{n^{\prime}_k})}{\sigma_{n_k'}(u_{n_{k}^{\prime}},B_{n^{\prime}_k}|\theta_{n^{\prime}_k})}\Rightarrow\mathcal{N}(0,1)\;\text{ as }\;k\rightarrow\infty\nonumber
\end{equation}
Applying the subsequence principle completes the proof. \qed

\section{Proof of Theorems \ref{initial-to-community}-\ref{thm:consistency} and supporting lemmas.}\label{consistency-proofs}

%%%%%%%%%
%%% Expectations Lemma
%%%%%%%%%

Throughout this section, notation and conventions from Section \ref{WSBM} will be used, though we suppress dependence on $n$ for convenience. Further recall functions $r$ and $f$ from Section \ref{Notation}. The following additional notation will be used throughout this section:
\begin{itemize}
	\item Define $\phi_T := \sum_{v\in N}\phi(v)$ and $\psi_T := \sum_{v\in N}\psi(v)$. For each $K\geq j \geq 1$, define $\tilde \pi^0_j := \sum_{v\in\cC_j}\phi(v)/\phi_T$ and $\tilde \pi_j := \sum_{v\in\cC_j}\psi(v)/\psi_T$. Let $\tilde \vpi^0$ and $\tilde\vpi$ be the associated vectors.
	\item Let $\bra\cdot,\cdot\ket$ denote the vector dot-product. For a general symmetric matrix $\mathbf{A}$, let $\mathbf{A}_{ij}$ be the $i,j$-th entry, and $\mathbf{A}_i$ the $i$-th column. Define $\mH := \mP\cdot\mM$, the entry-wise product.
	\item Let $D(u),S(u)$ be the random degree, strength of node $u\in N$, let $\tilde d(u)$, $\tilde s(u)$ be the corresponding expectations, and let $\vD,\vS,\bar\vd,\bar\vs$ be the associated $n$-vectors. Define $\bar s_T := \sum_{v\in N}\bar s(v)$ and $\bar d_T := \sum_{v\in N}\bar d(v)$.
\end{itemize}
We now define a \emph{empirical} population version of the variance estimate:
\begin{definition}\label{pop-var}
	Fix $n>1$ and let $A$ and $W$ be the edge and weight matrices from $\Net_n$, the $n$-th random weighted network from the sequence in the setting of Theorem \ref{initial-to-community}. Let $\vx,\vy$ be arbitrary $n$-vectors with positive entries. For nodes $u,v\in N$, define
	\begin{equation*}
	V_{uv}(\vx,\vy):=\left(W_{uv} - \ffun{\vx}{\vy}\right)^2,
	\;\;\;\;\;
	v_{uv}(\vx,\vy) := \expv\left\{V_{uv}(\vx,\vy)\big|A_{uv} = 1\right\}.
	\end{equation*}
	Define the \emph{empirical} population variance estimator as follows:
	\begin{equation*}
	\vpar_\ast(\vx,\vy) := \frac{\sum_{u,v:A_{uv} = 1} v_{uv}(\vx,\vy)}{\sum_{u,v:A_{uv} = 1}f_{uv}(\vx,\vy)^2}
	\end{equation*}
\end{definition}
The estimator $\kappa_\ast(\vx,\vy)$ is called ``empirical" because it depends on the random edge set $E$. Despite this, it has a deterministic bound, a fact which is part of Lemma \ref{expv-lemma}. Throughout the remaining results, denote $\Theta := (\vD,\vS,\hat \kappa(\vD,\vS))$ and $\theta_\ast := (\bar \vd, \bar \vs, \kappa_\ast(\bar \vd, \bar \vs))$, 
where the estimator $\hat\kappa$ is the estimator from Section \ref{Model:specification}. 

Recall the definition of the asymptotic order of the average degree $\lambda_n:=n\rho_n$, from Section \ref{Consistency-theorem} in the main text. With this and the conventions above, Lemma \ref{expv-lemma} establishes basic facts about the WSBM:
\begin{lemma}\label{expv-lemma}
	Fix $n>1$, and  let $\Net_n$ be a random network generated by a WSBM. For all nodes $u,v\in N$, under Assumptions \ref{bounded-parameter-assumption} and \ref{bounded-weight-assumption},
	
	\begin{enumerate}[(1)]
		\item $\bar d(u) = \lambda_n\phi(u)\bra \tilde\vpi^0,\mP[c(u)]\ket$ and $\bar s(u) = \lambda_n\psi(u)\bra \tilde\vpi,\mH[c(u)] \ket$
		\item $m_-^2\leq\bar d(u)/\lambda_n\leq m_+^2$ and $m_-^3\leq \bar s(u)/\lambda_n\leq m_+^3$
		\item $m_-\leq\bar d_T/n\lambda_n\leq m_+$ and $m_-^2\leq \bar s_T/n\lambda_n\leq m_+^2$
		\item $m_-^4/m_+^1\leq r_{uv}(\bar \vd)/\rho_n\leq m_+^4/m_-^1$ and $m_-^6/m_+^2\leq r_{uv}(\bar \vs)/\rho_n \leq m_+^6/m_-^2$
		\item $m_-^2/m_+^2\leq \ffun{\phi}{\psi}\leq m_+^2/m_-^2$ and $m_-^{10}/m_+^3\leq \ffun{\bar \vd}{\bar \vs}\leq m_+^{10}/m_-^{3}$
		\item $0\leq V_{uv}(\bar \vd, \bar \vs)\leq (\eta m_+^2/m_-^2 + m_+^{10}/m_-^{3})^2$
		\item $0\leq \kappa_\ast(\bar \vd, \bar \vs)\leq g(\eta,m_-,m_+)$ where $g$ is a  deterministic function.
		\item{ There exist global constants $0<m_1<m_2<\infty$ independent of $n$ such that for any node set $B\subseteq N$,
		\[
		m_1|B|\rho_n \; 
		\leq\;
		\mu(u,B| \bar\vs),
		\;
		\sigma(u,B|\theta_\ast)^2
		\;
		\leq
		\;
		m_2|B|\rho_n
		\]
		}
	\end{enumerate}
\end{lemma}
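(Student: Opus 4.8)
The plan is to derive all eight assertions from three ingredients: the explicit WSBM generating procedure of Section~\ref{WSBM}, Proposition~\ref{sub-mean-sd} (for the mean and variance formulas needed in item~(8)), and repeated mechanical use of the two-sided bounds in Assumption~\ref{bounded-parameter-assumption} (every entry of $\phi$, $\psi$, $\mP$, $\mM$, $\vpi$ lies in $[m_-,m_+]$) together with Assumption~\ref{bounded-weight-assumption} ($F$ supported on $(0,\eta)$, hence bounded and non-degenerate). I would prove the items in the order listed, since each later one quotes the earlier ones, and read off the explicit constants as outputs of these substitution chains. The argument is essentially bookkeeping; the only step I expect to need an idea rather than algebra is item~(7).

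For item~(1), I would write $\bar d(u)=\sum_v\mathbb{P}_n(A_{uv}=1)$ and $\bar s(u)=\sum_v\mathbb{P}_n(A_{uv}=1)\,\mathbb{E}_n(W_{uv}\mid A_{uv}=1)$ directly from the model, split each sum over $v$ according to the community $\cC_j$ containing $v$, and substitute the definitions of $\tilde\vpi^0$ and $\tilde\vpi$; for the strength, the product of the edge- and weight-baseline parameters collapses into the entrywise product $\mH=\mP\cdot\mM$. Items~(2)--(5) then follow by chaining Assumption~\ref{bounded-parameter-assumption} through those two formulas and through the definitions of $r_{uv}$, $\bar d_T$, $\bar s_T$ and the $f$-factors. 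The one structural point is that $\tilde\vpi^0$ and $\tilde\vpi$ are \emph{probability} vectors ($\sum_j\tilde\pi_j^0=\sum_j\tilde\pi_j=1$), so convex combinations such as $\langle\tilde\vpi^0,\mP[c(u)]\rangle$ and $\langle\tilde\vpi,\mH[c(u)]\rangle$ are automatically trapped in $[m_-,m_+]$ and $[m_-^2,m_+^2]$ — one never needs to control the individual entries of $\tilde\vpi^0$, $\tilde\vpi$. I would also note, using sparsity, that $r_{uv}(\bar\vd)$ and $r_{uv}(\bar\vs)$ are $O(\rho_n)$, so the $\min\{1,\cdot\}$ truncation in $\ruvdag$ is inactive for $n$ large; hence each relevant $f$-factor has the closed form $\bar s(u)\bar s(v)\bar d_T/(\bar d(u)\bar d(v)\bar s_T)$ (or its $\phi,\psi$ analogue), which items~(2)--(3) bound above and below by positive constants.

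For item~(6): on $\{A_{uv}=1\}$ the weight $W_{uv}$ is a bounded $f$-factor times $\mM_{c(u)c(v)}\le m_+$ times $\xi_{uv}<\eta$, hence bounded by a constant depending only on $(\eta,m_-,m_+)$; combined with the upper bound on $f_{uv}(\bar\vd,\bar\vs)$ from item~(5) this bounds $|W_{uv}-f_{uv}(\bar\vd,\bar\vs)|$, hence $V_{uv}(\bar\vd,\bar\vs)$, while on $\{A_{uv}=0\}$ one has $V_{uv}(\bar\vd,\bar\vs)=f_{uv}(\bar\vd,\bar\vs)^2$, again bounded by item~(5). For item~(7): in the ratio $\kappa_\ast(\bar\vd,\bar\vs)=\bigl(\sum_{u,v:A_{uv}=1}v_{uv}(\bar\vd,\bar\vs)\bigr)\big/\bigl(\sum_{u,v:A_{uv}=1}f_{uv}(\bar\vd,\bar\vs)^2\bigr)$, every numerator summand $v_{uv}(\bar\vd,\bar\vs)=\mathbb{E}[V_{uv}(\bar\vd,\bar\vs)\mid A_{uv}=1]$ is at most the constant from item~(6), and every denominator summand $f_{uv}(\bar\vd,\bar\vs)^2$ is at least the \emph{strictly positive} constant from item~(5); since the two sums run over the same index set, the (random) edge count cancels and $\kappa_\ast(\bar\vd,\bar\vs)$ is bounded by the quotient of those two constants, uniformly over every realization of the edge set. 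This cancellation is exactly what makes the empirical variance estimator admit a deterministic bound, and it is the only place Assumption~\ref{bounded-weight-assumption} is genuinely needed. For item~(8): apply Proposition~\ref{sub-mean-sd} with the parameters $(\vs,\vd,\vpar)$ set to $(\bar\vs,\bar\vd,\kappa_\ast(\bar\vd,\bar\vs))$, so that $\mu(u,B\mid\bar\vs)=\sum_{v\in B}r_{uv}(\bar\vs)$ and $\sigma(u,B\mid\theta_\ast)^2=\sum_{v\in B}r_{uv}(\bar\vs)\,f_{uv}(\bar\vd,\bar\vs)\bigl(1-\ruvdag(\bar\vd)+\kappa_\ast(\bar\vd,\bar\vs)\bigr)$; by items~(2)--(3) each $r_{uv}(\bar\vs)\asymp\rho_n$, by item~(5) each $f_{uv}(\bar\vd,\bar\vs)\asymp 1$, and the bracketed factor is $\le 1+g(\eta,m_-,m_+)$ by item~(7) and is bounded below by a positive constant (either $1-\ruvdag(\bar\vd)\ge\tfrac12$ for $n$ large, or, since $F$ is non-degenerate, the conditional variance of $W_{uv}$, and hence $\kappa_\ast(\bar\vd,\bar\vs)$, is bounded below). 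Multiplying these and summing the $|B|$ terms yields $m_1|B|\rho_n\le\mu(u,B\mid\bar\vs),\ \sigma(u,B\mid\theta_\ast)^2\le m_2|B|\rho_n$ with $m_1<m_2$ absolute constants.

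I expect the only real hurdle to be item~(7): showing that the random variance estimate $\kappa_\ast(\bar\vd,\bar\vs)$ admits a bound $g(\eta,m_-,m_+)$ free of both $n$ and the realized edge set. The mechanism is the cancellation of the edge count between numerator and denominator, which hinges on the two-sided control of $f_{uv}(\bar\vd,\bar\vs)$ from item~(5) (the lower bound away from zero being essential) and the almost-sure upper bound on $v_{uv}(\bar\vd,\bar\vs)$ from item~(6), i.e.\ ultimately on the bounded support of $F$. A secondary nuisance, not an obstacle, is bookkeeping around the $\min\{1,\cdot\}$ truncations in $\ruvdag$ (the sparse regime $\rho_n\to0$ makes them harmless) and, if one insists on the constants in items~(2)--(5) and~(8) holding for \emph{every} $n$ rather than just $n$ large, enlarging them to absorb the finitely many small $n$.
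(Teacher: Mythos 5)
Your proposal is correct and follows essentially the same route as the paper: item (1) by direct summation over communities, items (2)--(6) by mechanically chaining Assumption \ref{bounded-parameter-assumption} and the bounded support of $F$, item (7) by the same numerator/denominator cancellation of the random edge count using the two-sided bound on $f_{uv}(\bar\vd,\bar\vs)$, and item (8) by plugging the barred parameters into the formulas of Proposition \ref{sub-mean-sd}. If anything you are slightly more careful than the paper in item (8), where you explicitly justify why the factor $1-\ruvdag(\bar\vd)+\kappa_\ast(\bar\vd,\bar\vs)$ stays bounded away from zero, a point the paper passes over.
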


\begin{proof}
	For (1), we have
	\begin{align*}
	\bar s(u) := \expv S(u) & = \sum_{j=1}^K\sum_{v\in \cC_j}\expv W_{uv}= \sum_{j=1}^K\sum_{v\in \cC_j}\rho_nr_{uv}(\phi)\mH_{c(u)j}\\
	& = \rho_n\sum_{j=1}^K\phi(u)n\tilde\pi_j\mH_{c(u)j}= \lambda_n\phi(u)\bra \tilde\vpi, \mH_{c(u)}\ket
	\end{align*}
	An identical calculation yields the expression for $\bar d(u)$. The inequalities in (2) then follow from Assumption \ref{bounded-parameter-assumption}. For (3), we again apply Assumption \ref{bounded-parameter-assumption} to the equation
	\[
	\bar s_T = \sum_{i=1}^K\sum_{v\in\cC_i}\bar s(u) = \sum_{i=1}^Kn\lambda_n\phi(u)\bra \tilde\vpi,\mH_i\ket = n\lambda_n\tilde\vpi^T\mH\tilde\vpi
	\]
	An identical equation yields the inequality for $\bar d_T$. (2) and (3) directly yield the inequalities in (4). Note that Assumption \ref{bounded-parameter-assumption} implies $m_-^2\leq nr_{uv}(\phi),nr_{uv}(\psi)\leq m_+^2$, which yields the first inequality of (5). The second inequality of (5) follows from (4). For part (6), note that by Assumption \ref{bounded-weight-assumption} and the first inequality in (5), we have
	\begin{equation}\label{Wuv-ineq}
	W_{uv} := f_{uv}(\phi,\psi)\xi_{uv}\leq(m_+^2/m_-^2)\eta
	\end{equation}
	The second inequality in (5) then yields (6). For part (7), recalling the definition of $\kappa_\ast(\bar \vd, \bar \vs)$ from Definition \ref{pop-var}, note first that, by (6), $0\leq v_{uv}(\bar \vd, \bar \vs)\leq(\eta m_+^2/m_-^2 + m_+^{10}/m_-^{3})^2$. Thus, by the second inequality (5),
	\[
	0\leq \kappa_\ast(\bar \vd, \bar \vs) := \frac{ \sum_{u,v:A_{uv} = 1} v_{uv}(\bar \vd, \bar \vs) }{ \sum_{u,v:A_{uv} = 1} \ffun{ \bar \vd}{\bar \vs}} \leq \frac{(\eta m_+^2/m_-^2 + m_+^{10}/m_-^{3})^2}{m_-^{10}/m_+^3}
	\] 
	For part (8), recall that
	\[
	\mu(u,B|\bar \vs) := \sum_{v\in B}r_{uv}(\bar \vs)
	\]
	The first inequality in (8) follows from applying the second inequality in (4). Similarly,
	\[
	\sigma(u,B|\theta_\ast)^2 := \sum_{v\in B}r_{uv}(\bar \vs)\ffun{\bar \vd}{\bar \vs}(1 - \ruvdag(\bar \vd) + \kappa_\ast(\bar \vd, \bar \vs))
	\]
	The second inequality in part (8) follows from parts (4), (5), and (7).
\end{proof}

%%%%%%%%%
%%% Gross Lemma 
%%%%%%%%%

The next lemma shows that, if the degrees and strengths of $\Net_n$ are bounded around their expected values, the empirical estimate of variance is bounded around the conditional population estimate, and the coefficient of variation of $S_n(u,B)$ is bounded around its population value. Define $D_T := \sum_{u\in N} D(u)$ as the (random) total degree. Recall that $\lambda_n$ is the asymptotic order the average of the \emph{expected} degrees $\bar d_T$.

\begin{lemma}\label{gross-lemma}
Fix $n>1$. Suppose Assumption \ref{bounded-parameter-assumption} holds. Define
\begin{equation}\label{deg-str-assump}
M(\vD,\vS) := \max_{u\in N}\left\{|S(u) - \bar s(u)|,|D(u) - \bar d(u)|\right\}.
\end{equation}
Then the following statements hold:
\begin{enumerate}[(1)]
\item{There exists small enough $t>0$ such that if $M(\vD,\vS)\leq \lambda_nt$,
\[
\big|\hat\kappa(\vD,\vS) - \kappa_\ast(\bar\vd,\bar\vs)\big| = \left|\frac{\sum_{u,v:A_{uv} = 1} V_{uv}(\bar \vd,\bar \vs) - v_{uv}(\bar\vd,\bar\vs)}{\sum_{u,v:A_{uv} = 1}f_{uv}(\bar\vd,\bar\vs)^2 + D_T\rho_nO(t)}\right| + \rho_nO(t)
\]
}
\item{Fix a constant $\eps>0$ independent of $n$. Assume $|\hat\kappa(\vD,\vS) - \kappa_\ast(\bar\vd,\bar\vs)|\leq \eps$. Then then there exists small enough $t>0$ (not depending on $\eps$) such that if $M(\vD,\vS)\leq t$, for all $B\subseteq N$, we have
\[
\left|\frac{\mu(u,B|\Theta)}{\sigma(u,B|\Theta)} - \frac{\mu(u,B|\theta_\ast)}{\sigma(u,B|\theta_\ast)}\right| = \sqrt{|B|\rho_n}O(t)
\]
}
\end{enumerate}
\end{lemma}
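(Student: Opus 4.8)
My plan is to prove both parts by a first-order perturbation analysis of ratios-of-sums, exploiting that the hypothesis on $M(\vD,\vS)$ forces each of $D(u),S(u),D_T,S_T$ to agree with its expectation $\bar d(u),\bar s(u),\bar d_T,\bar s_T$ up to a multiplicative factor $1+O(t)$. The engine that converts ``relative $O(t)$'' into the absolute scales ($\rho_n$, $\sqrt{|B|\rho_n}$) in the conclusions is Lemma \ref{expv-lemma}: $\bar d(u),\bar s(u)\asymp\lambda_n$, $\bar d_T,\bar s_T\asymp n\lambda_n$, $r_{uv}(\bar\vs)\asymp\rho_n$, $f_{uv}(\bar\vd,\bar\vs)\asymp 1$, the bound $W_{uv}\le(m_+^2/m_-^2)\eta$ from \eqref{Wuv-ineq}, and $0\le\kappa_\ast(\bar\vd,\bar\vs)\le g(\eta,m_-,m_+)$. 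So I would start, for both parts, by showing: if $M(\vD,\vS)\le\lambda_n t$ with $t$ small and $n$ large, then uniformly in $u,v$, the ratios $D(u)/\bar d(u),S(u)/\bar s(u),D_T/\bar d_T,S_T/\bar s_T$ are all $1+O(t)$ (the lower bounds on $\bar d(u)$, etc., coming from Lemma \ref{expv-lemma}(2)--(3)); hence $r_{uv}(\vD)=r_{uv}(\bar\vd)(1+O(t))\to 0$, so the truncation in $\ruvdag$ is inactive for both $\vD$ and $\bar\vd$ once $n$ is large, and therefore $f_{uv}(\vD,\vS)=f_{uv}(\bar\vd,\bar\vs)(1+O(t))$, $r_{uv}(\vS)=r_{uv}(\bar\vs)(1+O(t))$, and $\ruvdag(\vD)=\ruvdag(\bar\vd)+\rho_nO(t)$. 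Under the stronger hypothesis $M(\vD,\vS)\le t$ of part (2) these relative errors are $O(t/\lambda_n)$, a fortiori $O(t)$.

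For part (1), I would write $\hat\kappa(\vD,\vS)=N_1/D_1$ and $\kappa_\ast(\bar\vd,\bar\vs)=N_0/D_0$ with $N_1=\sum_{A_{uv}=1}(W_{uv}-f_{uv}(\vD,\vS))^2$, $D_1=\sum_{A_{uv}=1}f_{uv}(\vD,\vS)^2$, $N_0=\sum_{A_{uv}=1}v_{uv}(\bar\vd,\bar\vs)$, $D_0=\sum_{A_{uv}=1}f_{uv}(\bar\vd,\bar\vs)^2$, and expand $(W_{uv}-f_{uv}(\vD,\vS))^2=V_{uv}(\bar\vd,\bar\vs)+[\text{cross}]+[\text{square}]$. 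By the first step and the uniform boundedness of $W_{uv}$ and $f_{uv}(\bar\vd,\bar\vs)$, the cross and square contributions sum to a total error $D_T\rho_nO(t)$ in $N_1$, and similarly $D_1=D_0+D_T\rho_nO(t)$. Then I would substitute into the algebraic identity $\hat\kappa(\vD,\vS)-\kappa_\ast(\bar\vd,\bar\vs)=(N_1-N_0)/D_1-N_0(D_1-D_0)/(D_1D_0)$, pull the fluctuation $\sum_{A_{uv}=1}(V_{uv}(\bar\vd,\bar\vs)-v_{uv}(\bar\vd,\bar\vs))$ out of $N_1-N_0$, and use $N_0,D_0,D_1\asymp D_T$ to absorb all remaining pieces into a $\rho_nO(t)$ remainder, obtaining the stated identity.

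For part (2), recall $\mu(u,B|\Theta)=\sum_{v\in B}r_{uv}(\vS)$ and $\sigma(u,B|\Theta)^2=\sum_{v\in B}r_{uv}(\vS)f_{uv}(\vD,\vS)(1-\ruvdag(\vD)+\hat\kappa(\vD,\vS))$, and the analogues for $\theta_\ast$; abbreviating $\mu_1:=\mu(u,B|\Theta)$, $\sigma_1:=\sigma(u,B|\Theta)$ and $\mu_0,\sigma_0$ the $\theta_\ast$-versions. Since $\kappa_\ast(\bar\vd,\bar\vs)\ge0$ and $\ruvdag(\bar\vd)\to0$, the factor $1-\ruvdag(\bar\vd)+\kappa_\ast(\bar\vd,\bar\vs)$ is bounded above and below for $n$ large, and the hypothesis $|\hat\kappa(\vD,\vS)-\kappa_\ast(\bar\vd,\bar\vs)|\le\eps$ does the same for $1-\ruvdag(\vD)+\hat\kappa(\vD,\vS)$; so by Lemma \ref{expv-lemma}(8), $\sigma_1,\sigma_0\asymp\sqrt{|B|\rho_n}$ and $\mu_1,\mu_0\asymp|B|\rho_n$. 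I would then split $\mu_1/\sigma_1-\mu_0/\sigma_0=(\mu_1-\mu_0)/\sigma_1+\mu_0(\sigma_0-\sigma_1)/(\sigma_1\sigma_0)$, bound $|\mu_1-\mu_0|=|\sum_{v\in B}(r_{uv}(\vS)-r_{uv}(\bar\vs))|=|B|\rho_nO(t)$ from the first step, write $\sigma_0-\sigma_1=(\sigma_0^2-\sigma_1^2)/(\sigma_0+\sigma_1)$ with $|\sigma_0^2-\sigma_1^2|=|B|\rho_nO(t)$ (from the perturbations of $r_{uv}$, $f_{uv}$, $\ruvdag$, the $\hat\kappa$-vs-$\kappa_\ast$ term being kept in check by the $\eps$-hypothesis), and combine with the $\asymp\sqrt{|B|\rho_n}$ denominators to reach $\sqrt{|B|\rho_n}O(t)$, uniformly over $B$.

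The hard part will be the bookkeeping in part (1): one has to track how the multiplicative $1+O(t)$ errors in $f_{uv}(\vD,\vS)$ interact with the already-present \emph{random} fluctuation $\sum_{A_{uv}=1}(V_{uv}(\bar\vd,\bar\vs)-v_{uv}(\bar\vd,\bar\vs))$, isolating exactly that fluctuation in the numerator while everything else goes into the remainder --- and doing so without losing the extra $\rho_n$ factor (i.e.\ not settling for the cruder $D_TO(t)$ that a naive count of edges would give). In part (2) the analogous pitfall is that $\sigma^2$ carries $\hat\kappa(\vD,\vS)$, which is \emph{not} a deterministic function of $\vD,\vS$ and hence is not controlled by $M(\vD,\vS)$; the $\eps$-hypothesis is precisely what keeps its contribution from destabilizing the ratio, and separating it cleanly from the genuine $O(t)$ part is the crux.
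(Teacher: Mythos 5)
Your proposal is correct and follows essentially the same route as the paper's proof: the same first-order perturbation bounds $|r_{uv}(\vS)-r_{uv}(\bar\vs)|,|f_{uv}(\vD,\vS)-f_{uv}(\bar\vd,\bar\vs)|=\rho_nO(t)$ derived from Lemma \ref{expv-lemma}, the same expansion of $V_{uv}(\vD,\vS)$ about $V_{uv}(\bar\vd,\bar\vs)$ with the fluctuation isolated via the difference-of-ratios identity, and the same use of the $\eps$-hypothesis on $\hat\kappa$ to stabilize $\sigma(u,B|\Theta)$ in part (2). The only (immaterial) difference is that you pass from $\sigma_0^2-\sigma_1^2$ to $\sigma_0-\sigma_1$ by the identity $\sigma_0-\sigma_1=(\sigma_0^2-\sigma_1^2)/(\sigma_0+\sigma_1)$, whereas the paper uses a Taylor expansion of $\sqrt{\sigma^2+x}$; your version is if anything slightly cleaner.
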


%---

\begin{proof}
$M(\vD,\vS)\leq \lambda_nt$ implies there exists a $n$-vector $\mathbf{a}_t$ with components in the interval $[-1,1]$ such that $S(u) = \bar s(u) + \lambda_nta_t(u)$. Therefore, defining $\bar a_t := n^{-1}\sum_v a_t(v)$,
\begin{align*}
r_{uv}(\vS) - r_{uv}(\bar\vs) & = \frac{\{\bar s(u) + \lambda_na_t(u)t\}\{\bar s(v) + \lambda_na_t(v)t\}}{\bar s_T + n\lambda_n\bar a_tt} - \frac{\bar s(u) \bar s(v)}{\bar s_T}\\
& = \frac{\bar s_T\{\bar s(u)a_t(v) + \bar s(v)a_t(u) + \lambda_na_t(u)a_t(v)t\}\lambda_nt - \bar s(u)\bar s(v)n\lambda_n\bar a_tt}{\bar s_T\{\bar s_T + n\lambda_n\bar a_tt\}}\\
& = \left\{\frac{\bar s(u)a_t(v) + \bar s(v)a_t(u) + \lambda_na_t(u)a_t(v)t - r_{uv}(\bar s)n\bar a_t}{\bar s_T + n\lambda_n\bar a_tt}\right\}\lambda_nt
\end{align*}
Using parts (2)-(4) of Lemma \ref{expv-lemma}, for sufficiently small $t$ we have
\begin{align*}
\big|r_{uv}(\vS) - r_{uv}(\bar \vs)\big| & \leq\frac{2\lambda_nm_+^3 + \lambda_nt + \rho_n(m_+^6/m_-^2)n}{n\lambda_nm_-^2 - n\lambda_nt}\lambda_nt = \frac{2m_+^3 + t + (m_+^6/m_-^2)}{m_-^2 - t}\rho_nt
\end{align*}
Therefore,
\begin{equation}
\label{ruvs-ineq}
|r_{uv}(\vS) - r_{uv}(\bar \vs)| = \rho_nO(t)
\end{equation}
as $t\rightarrow0$. By a similar argument, $|r_{uv}(\vD) - r_{uv}(\bar\vd)|= \rho_nO(t)$. It follows that 
\begin{equation}
\label{fuvsd-ineq}
|f_{uv}(\vD,\vS) - f_{uv}(\bar \vd,\bar \vs)| = \rho_nO(t).
\end{equation}
Therefore, using Equations \ref{ruvs-ineq}-\ref{fuvsd-ineq} and part (7) of Lemma \ref{expv-lemma},
\begin{align*}
V_{uv}(\vD,\vS) & := (W_{uv} - f_{uv}(\vD, \vS))^2\\
& = (W_{uv} - f_{uv}(\bar \vd, \bar \vs))^2 + 2(W_{uv} - f_{uv}(\bar \vd, \bar \vs))(f_{uv}(\bar \vd, \bar \vs) - f_{uv}(\vD, \vS))\\
& + (f_{uv}(\bar \vd, \bar \vs) - f_{uv}(\vD, \vS))^2 \\
& = V_{uv}(\bar \vd, \bar \vs)^2 + 2V_{uv}(\bar \vd, \bar \vs)(f_{uv}(\bar \vd, \bar \vs) - f_{uv}(\vD, \vS)) + (f_{uv}(\bar \vd, \bar \vs) - f_{uv}(\vD, \vS))^2 \\
& \leq V_{uv}(\bar \vd, \bar \vs)^2 + \rho_nO(t) + \rho_n^2O(t^2) = V_{uv}(\bar \vd, \bar \vs)^2 + \rho_nO(t)
\end{align*}
Define the following:
\[
V_T:=\sum_{u,v:A_{uv} = 1}V_{uv}(\vD,\vS),\;\;\;\;\;\bar V_T := \sum_{u,v:A_{uv} = 1}V_{uv}(\bar{\vd},\bar{\vs}).
\]
Since $D_T := \sum_{u\in N}D(u) = \sum_{u,v:A_{uv} = 1}1$, the above inequality implies that $V_T = \bar V_T + D_T\rho_nO(t)$. Define similarly:
\[
g_T := \sum_{u,v:A_{uv} = 1}f_{uv}(\vD, \vS)^2,\;\;\;\;\;\bar g_T := \sum_{u,v:A_{uv} = 1}f_{uv}(\bar\vd, \bar\vs)^2.
\]
Similar logic gives $g_T = \bar g_T + D_T\rho_nO(t)$. Finally, define $\bar v_T := \sum_{u,v:A_{uv} = 1}v_{uv}(\bar\vs,\bar\vd)$. Then
\begin{align*}
\big| \hat\kappa(\vD, \vS) - \kappa_\ast(\bar\vd,\bar\vs) \big| & = \left|\frac{V_T}{g_T} - \frac{\bar v_T}{\bar g_T}\right| = \left|\frac{\bar V_T + D_T\rho_nO(t)}{\bar g_T + D_T\rho_nO(t)} - \frac{\bar v_T}{\bar g_T}\right|\\
& = \left|\frac{\bar V_T + D_T\rho_nO(t) - \frac{\bar v_T}{\bar g_T}\left\{\bar g_T + D_T\rho_nO(t)\right\}}{\bar g_T + D_T\rho_nO(t)}\right|\\
& \leq \left|\frac{\bar V_T - \bar v_T}{\bar g_T + D_T\rho_nO(t)}\right| + \left|\frac{D_T\rho_nO(t) - \frac{\bar v_T}{\bar g_T}D_T\rho_nO(t)}{\bar g_T + D_T\rho_nO(t)}\right|
\end{align*}
Note that $\bar v_T/D_T$ and $\bar g_T/D_T$ are, each, by parts (5) and (6) of Lemma \ref{expv-lemma}, bounded above and below by constants independent of $A$, $t$, and $n$. Therefore, dividing through by $D_T$,
\[
\left|\frac{D_T\rho_nO(t) - \frac{\bar v_T}{\bar g_T}D_T\rho_nO(t)}{\bar g_T + D_T\rho_nO(t)}\right| \leq \frac{\rho_nO(t)}{\bar g_T/D_T + \rho_nO(t)} = \rho_nO(t)
\]
This proves part 1. For part 2, first recall that $\mu(u,B|\Theta)\equiv\mu(u,B|\vS):=\sum_{v\in B}r_{uv}(\vS)$. Therefore by Equation \ref{ruvs-ineq}, we have
\begin{equation}
\label{mu-ineq}
|\mu(u,B|\Theta) - \mu(u,B| \theta_\ast)| = \big|\sum_{v\in B} r_{uv}(\vS) - r_{uv}(\bar \vs)\big| = |B|\rho_nO(t)
\end{equation}
Recall further that
\[
\sigma(u,B| \Theta)^2 := \sum_{v\in B}r_{uv}(\vS)f_{uv}(\vD,\vS)\left(1 - r_{uv}(\vD) + \hat\kappa(\vD,\vS)\right)
\]
Using some straightforward algebra and applying Equations \ref{ruvs-ineq}-\ref{fuvsd-ineq}, we have
\begin{align}
\left|\sigma(u,B|\Theta)^2 - \sigma(u,B|\theta_\ast)^2\right| & = |B|\left(1 + \big|\hat\kappa(\vD,\vS) - \kappa_\ast(\bar\vd,\bar\vs)\big|\right)\rho_nO(t)\nonumber\\
& = |B|\rho_nO(t)\label{sig2-ineq}
\end{align}
where the second line follows from the assumption that $|\hat\kappa(\vD,\vS) - \kappa_\ast(\bar\vd,\bar\vs)|\leq \eps$. We will now bound $\sigma(u,B|\Theta)$ close to $\sigma(u,B|\theta_\ast)$ using Equation \ref{sig2-ineq} and a Taylor expansion. Define the function $h(x,\sigma) := \sqrt{\sigma^2 + x}$. For fixed $\sigma$, a Taylor expansion around $x = 0$ gives $h(x,\sigma) = \sigma + \sum_{k = 1}^\infty (-1)^k\frac{x^k}{k!\sigma^{2k-1}}$. 
Setting $x = \sigma(u,B|\Theta)^2 - \sigma(u,B|\theta_\ast)^2$ and $\sigma = \sigma(u,B|\theta_\ast)$ and applying Equation \ref{sig2-ineq}, we obtain 
\begin{align}
\sigma(u,B|\Theta) & = h(x, \sigma(u,B|\theta_\ast))\nonumber \\
& = \sigma(u,B|\theta_\ast) + \sum_{k = 1}^\infty(-1)^k\frac{|B|^k\rho_n^kO(t^k)}{k!\sigma(u,B|\theta_\ast)^{2k-1}}\label{sig-ineq0}
\end{align}
Part (8) of Lemma \ref{expv-lemma} implies that $\sigma(u,B|\theta_\ast) \asymp \sqrt{|B|\rho_n}$. Equation \ref{sig-ineq0} therefore gives
\begin{equation}
\label{sig-ineq}
\sigma(u,B|\Theta) = \sigma(u,B|\theta_\ast) + \sqrt{|B|\rho_n}O(t)
\end{equation}
using Equations \ref{mu-ineq} and \ref{sig-ineq}, we write
\begin{equation}\label{mu-sig-approx1}
\left|\frac{\mu(u,B|\Theta)}{\sigma(u,B|\Theta)} - \frac{\mu(u,B|\theta_\ast)}{\sigma(u,B|\theta_\ast)}\right| = \left|\frac{\mu(u,B|\theta_\ast) + |B|\rho_nO(t)}{\sigma(u,B|\theta_\ast) + \sqrt{|B|\rho_n}O(t)} - \frac{\mu(u,B|\theta_\ast)}{\sigma(u,B|\theta_\ast)}\right|
\end{equation}
As shorthands, define $\bar\mu_n := \mu(u,B|\theta_\ast)/|B|\rho_n$ and $\bar\sigma_n:=\sigma(u,B|\theta_\ast)/\sqrt{|B|\rho_n}$. Part (8) of Lemma \ref{expv-lemma} implies that $\bar \mu_n,\bar \sigma_n\asymp 1$. Thus, using Equation \ref{mu-sig-approx1} and dividing through by the appropriate factors,
\begin{align*}
\left|\frac{\mu(u,B|\Theta)}{\sigma(u,B|\Theta)} - \frac{\mu(u,B|\theta_\ast)}{\sigma(u,B|\theta_\ast)}\right| & = \sqrt{|B|\rho_n}\left|\frac{\bar \mu_n + O(t)}{\bar \sigma_n + O(t)} - \frac{\bar \mu_n}{\bar \sigma_n}\right|\\
& = \sqrt{|B|\rho_n}O(t)
\end{align*}
This completes part 2.
\end{proof}

The proof of Lemma \ref{initial-to-community} from the main text (below) makes use of Lemma \ref{gross-lemma} by showing that its assumption holds with high probability, for appropriate $t$.

%% Main Theorem Proof

\subsection{Proof of Theorem \ref{initial-to-community}}
Throughout, we will sometimes suppress dependence on $n$ for notational convenience. Recall that $A(u,B,\Net) := S(u,B,\Net) - \mu(u,B|\vS)$, the deviation of the CCME test statistic from its expected value under the continuous configuration model. Recalling that $\Theta := (\vD,\vS,\hat\kappa(\vD,\vS))$, define also the random $Z$-statistic
\begin{equation}\label{random-zstat}
Z(u,B,\Net|\Theta):= \frac{A(u,B,\Net)}{\sigma(u,B|\Theta)}.
\end{equation} Define the random p-value
\begin{equation}
P(u,B,\Net|\Theta) := 1 - \Phi(Z(u,B,\Net|\Theta)).
\end{equation}
The random variable $P(u,B,\Net|\Theta)$ is the random version of the p-value $p(u,B_n|\theta)$ obtained from the approximation in Equation \eqref{eq:pvalue_approx}. As a consequence of the Benjamini-Hochberg procedure, the event $\{U_\alpha(B_n,\Net) = C_n\}$ will occur if
\begin{align}
P(u,B_n,\Net_n|\Theta) & \leq q\alpha,\;\text{ for all }u\in C_n,\;\text{ and}\nonumber\\
P(u,B_n,\Net_n|\Theta) & > q\alpha,\;\text{ for all }u\notin C_n,\label{red-sea}
\end{align}
since by assumption $|C_n|>qn$. Let $h$ be the density function of a standard-Normal. By a well-known inequality for the CDF of a standard-Normal, if $Z(u,B_n,\Net_n|\Theta)>0$,
\begin{equation}\label{normal-conc-ineq}
P(u,B_n,\Net_n|\Theta)\leq \frac{1}{Z(u,B_n,\Net_n|\Theta)}h(Z(u,B_n,\Net_n|\Theta)).
\end{equation}
By symmetry, if $Z(u,B_n,\Net_n|\Theta)<0$, then
\begin{equation}\label{normal-conc-ineq2}
P(u,B_n,\Net_n|\Theta)\geq 1 + \frac{1}{Z(u,B_n,\Net_n|\Theta)}h(Z(u,B_n,\Net_n|\Theta)).
\end{equation}
We therefore analyze the concentration properties of $Z(u,B_n,\Net_n|\Theta)$ and apply Inequalities \ref{normal-conc-ineq} and \ref{normal-conc-ineq2} to show that for sufficiently large $n$, the event in Equation \ref{red-sea} occurs with high probability. We will focus on the first line of \ref{red-sea} first; the second is shown similarly. Recall that $\theta_\ast$ is the empirical population null parameters of $\Net_n$, defined after Definition \ref{pop-var}. For the derivation below we use the following shorthands: $Y\equiv S(u,B_n,\Net_n)$, $\mu \equiv \mu(u,B_n|\vS_n)$, $\sigma := \sigma(u,B_n|\Theta)$, $\bar y \equiv \expv Y$, $\bar \mu \equiv \mu(u,B_n|\theta_\ast)$, and $\bar \sigma := \sigma(u,B_n|\theta_\ast)$. Note
\begin{align}
Z(u,B_n,\Net_n|\Theta) := \frac{Y - \mu}{\sigma} = \frac{Y - \bar{\mu}}{\bar{\sigma}} - \left(\frac{\mu}{\sigma} - \frac{ \bar \mu }{ \bar \sigma }\right) & = \frac{ \bar y - \bar \mu}{ \bar \sigma} + \frac{Y - \bar y}{\bar \sigma} - \left(\frac{\mu}{\sigma} - \frac{ \bar \mu }{ \bar \sigma }\right)\nonumber\\
& \geq \frac{ \bar y - \bar \mu}{ \bar \sigma} - \left|\frac{Y - \bar y}{\bar \sigma}\right| - \left|\frac{\mu}{\sigma} - \frac{ \bar \mu }{ \bar \sigma }\right|\label{pivotal-inequality}
\end{align}
% Messier version of the derivation above.
\begin{comment}
\begin{align*}
\frac{S(u:B_n) - \mfun{u}{B_n}{\vS}}{\sfun{u}{B_n}{\vS}{\vD}} & = \frac{S(u:B_n) - \mfun{u}{B_n}{\tilde\vs}}{\tsfun{u}{B_n}{\tilde\vs}{\tilde\vd}} - \left(\frac{\mfun{u}{B_n}{\vS}}{\sfun{u}{B_n}{\vS}{\vD}} - \frac{\mfun{u}{B_n}{\tilde\vs}}{\tsfun{u}{B_n}{\tilde\vs}{\tilde\vd}}\right)\\
& = \frac{\expv S(u:B_n) - \mfun{u}{B_n}{\tilde\vs}}{\tsfun{u}{B_n}{\tilde\vs}{\tilde\vd}} + \frac{S(u:B_n) - \expv S(u:B_n)}{\tsfun{u}{B_n}{\tilde\vs}{\tilde\vd}}\\
& - \left(\frac{\mfun{u}{B_n}{\vS}}{\sfun{u}{B_n}{\vS}{\vD}} - \frac{\mfun{u}{B_n}{\tilde\vs}}{\tsfun{u}{B_n}{\tilde\vs}{\tilde\vd}}\right)
\end{align*}
Therefore,
\begin{align}
\frac{S(u:B_n) - \mfun{u}{B_n}{\vS}}{\sfun{u}{B_n}{\vS}{\vD}} & \geq \frac{\expv S(u:B_n) - \mfun{u}{B_n}{\tilde\vs}}{\tsfun{u}{B_n}{\tilde\vs}{\tilde\vd}} - \left|\frac{S(u:B_n) - \expv S(u:B_n)}{\tsfun{u}{B_n}{\tilde\vs}{\tilde\vd}}\right|\nonumber\\
& - \left|\frac{\mfun{u}{B_n}{\vS}}{\sfun{u}{B_n}{\vS}{\vD}} - \frac{\mfun{u}{B_n}{\tilde\vs}}{\tsfun{u}{B_n}{\tilde\vs}{\tilde\vd}}\right|\label{ineq1}
\end{align}
\end{comment}
Define
\[
\bar z(u,B_n|\theta_\ast):=\frac{\bar y - \bar \mu}{\bar \sigma} = \lambda_n\frac{\tilde a(u,B_n|\bar \vs)}{\sigma(u,B_n|\theta_\ast)}
\]
where $\tilde a(u,B_n|\bar \vs)$ is the normalized population version of $A(u,B_n|\vS)$, as defined in Equation \ref{dev-def} from the main text. The definition above works with Equation \ref{pivotal-inequality} to produce the illustrative inequality
\begin{equation}\label{pivotal-inequality2}
Z(u,B_n,\Net_n|\Theta) \geq \bar z(u,B_n|\theta_\ast) - \left|\frac{Y - \bar y}{\bar \sigma}\right| - \left|\frac{\mu}{\sigma} - \frac{ \bar \mu }{ \bar \sigma }\right|.
\end{equation}
Inequality \ref{pivotal-inequality2} exemplifies that, if the right-hand terms vanish, $Z(u,B_n,\Net_n|\Theta)$ can be approximated by a population version. Our analysis therefore reduces to bounding the right-hand order terms in probability.  

Explicitly, consider that by part (8) of Lemma \ref{expv-lemma}, there exists $m_2>0$ such that $\sigma(u,B_n|\theta_\ast)^2\leq m_2 n\rho_n = m_2\lambda_n$. Combining this with the crucial assumption on $\tilde a(u,B_n)$ from line \ref{favors} from the main text, we have that for all $u\in C_n$, 
\begin{equation}\label{eq:population-z-rate}
\bar z(u,B_n|\theta_\ast)= \lambda_n\frac{\tilde a(u,B_n|\bar \vs)}{\sigma(u,B_n|\theta_\ast)} \geq \sqrt{\lambda_n}\frac{\Delta}{\sqrt m_2}
\end{equation}
Therefore, the rest of the proof is mainly dedicated to showing that the final two terms in line \eqref{pivotal-inequality2} are $o_P(\sqrt{\lambda_n})$. This will imply that $Z(u,B_n,\Net_n|\Theta) = \Omega_P(\sqrt{\lambda_n})$ and, using Inequality \ref{normal-conc-ineq}, that $\{P(u,B_n,\Net_n|\Theta)\leq q\alpha,\;\forall\;u\in C_n\}$ has probability approaching 1.\\

%%% Step 1
\noindent \emph{Step 1: $|\frac{\mu}{\sigma} - \frac{ \bar \mu }{ \bar \sigma }| = O_P(\sqrt{\log n})$}\\

For $t>0$, define the event
\begin{equation}\label{event1}
\cE_1(t) := \left\{\underset{u\in N}{\max}\left|S(u) - \bar s(u)\right|,\underset{u\in N}{\max}\left|D(u) - \bar d(u)\right|\leq \lambda_nt\right\}
\end{equation}
Fix arbitrary $b > 0$ independent of all other quantities and define $t_n(b):=\sqrt{\frac{\taun}{\lambda_n}}$. Note that $t_n(b)\rightarrow0$ for any $b$, by the assumptions of the Theorem. Recall that $D_T:=\sum_{u\in N}D(u)$, the (random) total degree. For notational convenience, let $E := \{\text{pairs } u,v:A_{uv} = 1\}$. By part 1 of Lemma \ref{gross-lemma}, the event $\cE_1(t_n(b))$ implies
\begin{equation}\label{lemma3-recall1}
\left|\hat\kappa(\vD,\vS) - \kappa_\ast(\bar\vd,\bar\vs)\right| = \left|\frac{\sum_{E} V_{uv}(\bar \vd,\bar \vs) - v_{uv}(\bar\vd,\bar\vs)}{\sum_{E}f_{uv}(\bar\vd,\bar\vs)^2 + D_T\rho_nO(t_n(b))}\right| + \rho_nO(t_n(b))
\end{equation}
By Lemma \ref{expv-lemma} part (5),
\[
0\leq V_{uv}(\bar \vd, \bar \vs)\leq (\eta m_+^2/m_-^2 + m_+^{10}/m_-^{3})^2.
\]
Recall that $v_{uv}(\bar \vd, \bar \vs) := \expv V_{uv}(\bar \vd, \bar \vs)$, and that the edge weights that comprise the (upper-triangle of the) weight matrix $W$ are independent. For a fixed adjacency matrix $A$, Bernstein's Inequality therefore gives
\begin{equation}\label{bern0}
\mathbb{P}\left(\left|\sum_E V_{uv}(\bar \vd, \bar \vs) -  v_{uv}(\bar\vd,\bar\vs)\right| > \sqrt{\taun}\;\Bigg|\; A\right)\leq 2\exp\left\{\frac{-2\taun}{2a_1 + \frac{2}{3}a_2\sqrt{\taun }}\right\}
\end{equation}
Now by Lemma \ref{expv-lemma} part (6), $\sum_E\ffun{\bar\vd}{\bar\vs}^2\geq D_T\frac{m_-^{10}}{m_+^3}$. Thus
\[
\sum_E\ffun{\bar\vd}{\bar\vs}^2 + D_T\rho_nO(t_n(b))\geq D_T\frac{m_-^{10}}{m_+^3}/2
\]
for large enough $n$, since $\rho_nt_n(b)\rightarrow 0$. Therefore there exist constants $a_1,a_2>0$ depending only on $m_+$, $m_-$, and $\eta$ such that
\begin{equation}\label{bern1}
\mathbb{P}\left(\left|\frac{\sum_E V_{uv}(\bar \vd, \bar \vs) -  v_{uv}(\bar\vd,\bar\vs)}{\sum_E\ffun{\bar\vs}{\bar\vd}^2 + D_T\rho_nO(t_n(b))}\right| > \sqrt{\tfrac{\taun}{D_T}}\;\Bigg|\; A\right)\leq 2\exp\left\{\frac{-2\taun}{2a_1 + \frac{2}{3}a_2\sqrt{\frac{\taun}{D_T}}}\right\}
\end{equation}
The above expression is conditional on a fixed adjacency matrix $A$. We now bound in probability the functionals of $A$ on which the expression depends. It is easily derivable from the statement of the WSBM and Assumption \ref{bounded-parameter-assumption} that there exist constants $a_3, a_4$ depending on $m_+$ and $m_-$ such that $\expv(D_T) = a_3n\lambda_n$ and $\var(D_T) = a_4n\lambda_n$. Therefore, by another application of Bernstein's Inequality,
\begin{equation}\label{bern2}
\mathbb{P}\left(\left|D_T - a_3n\lambda_n\right| > \sqrt{n\lambda_n\taun}\right)\leq 2\exp\left\{\frac{-2\taun}{2a_4 + \frac{2}{3}\sqrt{\frac{\taun}{n\lambda_n}}}\right\}
\end{equation}
Applying this to inequality \eqref{bern1}, the law of total probability gives
\begin{align}
&\mathbb{P}\left(\left|\frac{\esum V_{uv}(\bar \vd, \bar \vs) -  v_{uv}(\bar\vd,\bar\vs)}{\esum\ffun{\bar\vs}{\bar\vd}^2 + D_T\rho_nO(t_n(b))}\right| > \sqrt{\frac{\taun}{a_3n\lambda_n - \sqrt{n\lambda_n\taun}}}\right)\nonumber\\
& \leq 2\exp\left\{\frac{-2\taun}{2a_1 + \frac{2}{3}a_2\sqrt{\frac{\taun}{a_3n\lambda_n - \sqrt{n\lambda_n\taun}}}}\right\} + 2\exp\left\{\frac{-2\taun}{2a_4 + \frac{2}{3}\sqrt{\frac{\taun}{n\lambda_n}}}\right\} = O(n^{-b})\label{theta-conc-ineq}
\end{align}
for sufficiently large $n$. Along with Equation \eqref{lemma3-recall1}, this implies there exists a constant $A_0$ depending on parameter constraints such that
\begin{equation}\label{final-theta-ineq1}
\mathbb{P}\left\{\left|\hat\kappa(\vD,\vS) - \kappa_\ast(\bar\vd,\bar\vs)\right| \leq A_0\left(\sqrt{\frac{b\log n}{n\lambda_n}} + \rho_nt_n(b)\right)\right\} \geq \mathbb{P}(\cE_1(t_n(b))) - O(n^{-b})
\end{equation}
for sufficiently large $n$. We now assess $\mathbb{P}(\cE_1(t_n(b)))$. Note that for all $u\in N$, $\text{Var}(S(u)) = O(\lambda_n)$. Furthermore, recall from Inequality \ref{Wuv-ineq} (in the proof of Lemma \ref{expv-lemma}) that $W_{uv}\leq m_+^2\eta/m_-^2$ for all $u,v\in N$. For fixed $b>0$, Bernstein's Inequality therefore gives, for any $u\in N$,
\begin{equation}\label{bern0S}
\mathbb{P}\left(\left|S(u) - \bar s(u)\right|>\sqrt{\taun\lambda_n}\right)\leq 2\exp\left\{\frac{-2a_0\taun}{2 + \frac{2}{3}\sqrt{\frac{\taun}{\lambda_n}}}\right\},
\end{equation}
where $a_0$ is a constant independent of $n$. The constant $a_0$ may be chosen so that, similarly,
\begin{equation}\label{bern0D}
\mathbb{P}\left(\left|D(u) - \bar d(u)\right|>\sqrt{\taun\lambda_n}\right)\leq 2\exp\left\{\frac{-2a_0\taun}{2 + \frac{2}{3}\sqrt{\frac{\taun}{\lambda_n}}}\right\}
\end{equation}
Applying a union bound, equations \eqref{bern0S} and \eqref{bern0D} give
\begin{align}
\mathbb{P}(\cE_1(t_n(b))) & \geq 1 - 2n\exp\left\{\frac{-2a_0\taun}{2 + \frac{2}{3}\sqrt{\frac{\taun}{\lambda_n}}}\right\} - 2n\exp\left\{\frac{-2a_0\taun}{2 + \frac{2}{3}\sqrt{\frac{\taun}{\lambda_n}}}\right\}\nonumber\\
& = 1 - O(n^{-b + 1})\label{bern0}
\end{align}
for sufficiently large $n$. Returning to the inequality in \eqref{final-theta-ineq1}, we therefore have
\begin{align}
\mathbb{P}\left\{\left|\hat\kappa(\vD,\vS) - \kappa_\ast(\bar\vd,\bar\vs)\right| \leq A_0\left(\sqrt{\tfrac{b\log n}{n\lambda_n}} + \rho_nt_n(b)\right)\right\} & \geq \mathbb{P}(\cE_1(t_n(b))) - O(n^{-b})\nonumber\\
& \geq 1 - O(n^{-b + 1})\label{theta-subfinal-bern}
\end{align}
for sufficiently large $n$. Recall that by assumption, $\lambda_n/\log n\rightarrow\infty$. Thus $t_n(b)\rightarrow0$, and
\[
\sqrt{\tfrac{b\log n}{n\lambda_n}} + \rho_nt_n(b) = t_n(b)/\sqrt{n} + \rho_nt_n(b)\leq 1/\sqrt{n} = o(1).
\]
Thus, Inequality \ref{theta-subfinal-bern} implies that 
\begin{equation}\label{theta-final-bern}
\mathbb{P}\left(\left|\hat\kappa(\vD,\vS) - \kappa_\ast(\bar\vd,\bar\vs)\right|\leq \eps\right)\geq 1 - O(n^{-b + 1}),
\end{equation}
for sufficiently large $n$. For $\eps>0$, define the event $\cE_2(\eps) := \left\{\left|\hat\kappa(\vD,\vS) - \kappa(\bar\vd,\bar\vs)\right|\leq \eps\right\}$. By part 2 of Lemma 3, the event $\cE_1(t_n(b))\cap\cE_2(\eps)$ implies
\begin{align}
\left|\frac{\mu}{\sigma} - \frac{\bar{\mu}}{\bar \sigma}\right|:=\left|\frac{\mfun{u}{B_n}{\vS}}{\sfun{u}{B_n}{\Theta}} - \frac{\mfun{u}{B_n}{\bar\vs}}{\sfun{u}{B_n}{\theta_\ast}}\right| & = \sqrt{|B_n|\rho_n}O(t_n(b))\nonumber\\
& \leq \sqrt{\lambda_n} O(t_n(b))\nonumber.\\
& = O(\sqrt{\taun})\label{rbound1}
\end{align}
Therefore, there exists a constant $A_2>0$ such that, by Inequalities \ref{bern0} and \ref{theta-final-bern},
\begin{align}
\mathbb{P}\left(\left|\frac{\mu}{\sigma} - \frac{\bar{\mu}}{\bar \sigma}\right| \leq A_2\sqrt{\taun}\right) & = 1 - O(n^{-b + 1})\label{third-term-bound}
\end{align}
for sufficiently large $n$. This completes Step 1.\\

%%% Step 2
\noindent \emph{Step 2: $|\frac{Y - \bar y}{\bar \sigma}| = O_P(\sqrt{\log n})$.}\\

Note that, as for Inequality \ref{bern0S}, Bernstein's Inequality gives
\begin{equation}\label{bernSuC}
\mathbb{P}\left(\left|S(u,B_n,\Net_n) - \expv S(u,B_n,\Net_n)\right|>\sqrt{\taun\lambda_n}\right)\leq 2\exp\left\{\frac{-2a_0\taun}{2 + \frac{2}{3}\sqrt{\frac{\taun}{\lambda_n}}}\right\}
\end{equation}
By Lemma \ref{expv-lemma} part (8), there exists $m_2>0$ such that $\sfun{u}{B_n}{\theta_\ast}^2\leq m_2\lambda_n$. Thus,
\[
\left|\frac{ Y - \bar y }{\bar \sigma}\right| := \left|\frac{S(u,B_n,\Net_n) - \expv S(u,B_n,\Net_n)}{\sfun{u}{B_n}{\theta_\ast}}\right| \geq \left|\frac{S(u,B_n,\Net_n) - \expv S(u,B_n,\Net_n)}{m_2\sqrt{\lambda_n}}\right|,
\]
so by Inequality \ref{bernSuC}, we have for sufficiently large $n$ that
\begin{equation}\label{rbound2}
\mathbb{P}\left(\left|\frac{ Y - \bar y }{\bar \sigma}\right|\leq \sqrt{\frac{b\log n}{m_2}}\right)\geq 1 - O(n^{-b}).
\end{equation}
This completes Step 2.\\

We now recall inequality \ref{pivotal-inequality2}:
\[
Z(u,B_n,\Net_n|\Theta) \geq \bar z(u,B_n|\theta_\ast) - \left|\frac{Y - \bar y}{\bar \sigma}\right| - \left|\frac{\mu}{\sigma} - \frac{ \bar \mu }{ \bar \sigma }\right|.
\]
In step 1, we showed that there exists a constant $A_2$ depending only on the fixed WSBM model parameters such that for any fixed $b>1$, for large enough $n$, $\left|\frac{\mu}{\sigma} - \frac{ \bar \mu }{ \bar \sigma }\right|\leq A_2\sqrt{b\log n}$ with probability $1 - O(n^{-b + 1})$. In step 2, we showed that there exists a constant $m_2$ depending only on the fixed WSBM model parameters such that for any fixed $b>1$, for large enough $n$, $|\frac{Y - \bar y}{\bar \sigma}|\leq \sqrt{b\log n/m_2}$ with probability $1 - O(n^{-b})$. Recall furthermore from inequality \ref{eq:population-z-rate} that $\bar z(u, B_n|\theta_\ast) \geq \Delta\sqrt{\lambda_n/m_2}$, where $\Delta$ is from condition \ref{favors} in the statement of the Theorem. We can therefore write that for any fixed $b>1$, for large enough $n$,
\[
Z(u,B_n,\Net_n|\Theta) \geq \Delta\sqrt{\lambda_n/m_2} - \sqrt{b\log n/m_2} - A_2\sqrt{b\log n} = A_3\sqrt{\lambda_n} - A_4\sqrt{b\log n}
\]
with probability at least $1 - O(n^{-b + 1})$. Now, by assumption, $|C_n|\geq qn$. Therefore, using Inequality \ref{normal-conc-ineq} and a union bound, we can write that for any fixed $b>1$, for large enough $n$,
\begin{equation}\label{inC-pval-ineq}
\underset{u\in C_n}{\max}\;P(u,B_n,\Net_n|\Theta)\leq \exp\{-(A_3\sqrt{\lambda_n} - A_4\sqrt{b\log n})^2\}
\end{equation}
with probability at least $1 - O(n^{-b + 2})$. Note that for any fixed $b$, the right-hand-side of inequality \ref{inC-pval-ineq} vanishes, due to the assumption that $\lambda_n/\log n\rightarrow\infty$. Thus, for $b>2$, inequality \ref{inC-pval-ineq} implies that for large enough $n$ (now depending on choice of $b$), the event $\{P(u,B_n,\Net_n|\Theta)\leq q\alpha,\;\forall\;u\in C_n\}$ has probability $1 - O(n^{-b + 2})\rightarrow1$.

It can be similarly shown that the second half of the event in \eqref{red-sea} has probability approaching 1. Instead of Inequality \ref{pivotal-inequality2} we (similarly) derive
\begin{equation}
\label{pivotal-inequality2b}
Z(u,B_n,\Net_n|\Theta) \leq \bar z (u,B_n|\theta_\ast) + \left|\frac{Y - \bar y}{\bar \sigma}\right| + \left|\frac{\mu}{\sigma} - \frac{ \bar \mu }{ \bar \sigma }\right|
\end{equation}
This is useful because if $u\notin C_n$, assumption \eqref{favors} ensures that $\devnt{u}{B_n}{\bar\vs}<-\Delta$, and hence 
\[
\bar z(u,B_n|\theta_\ast):=\frac{\bar y - \bar \mu}{\bar \sigma} = \lambda_n\frac{\tilde a(u,B_n|\bar \vs)}{\sigma(u,B_n|\theta_\ast)} \leq \lambda_n\frac{-\Delta}{\sigma(u,B_n|\theta_\ast)} \leq \sqrt{\lambda_n}\frac{-\Delta}{\sqrt m_1}
\]
where the last inequality follows from part (8) of Lemma \ref{expv-lemma}. Steps 1 and 2 therefore work to show that for any fixed $b>1$, for large enough $n$,
\[
Z(u,B_n,\Net_n|\Theta) \leq -\Delta\sqrt{\lambda_n/m_2} + \sqrt{b\log n/m_2} + A_2\sqrt{b\log n} = A_3\sqrt{\lambda_n} - A_4\sqrt{b\log n}
\]
With probability $1 - O(n^{-b + 1})$. Inequality \ref{normal-conc-ineq2} then implies that
\begin{equation}\label{inC-pval-ineq2}
\mathbb{P}\left(\underset{u\notin C_n}{\max}\;P(u,B_n,\Net_n|\Theta)\geq 1-\exp\{-(A_3\sqrt{\lambda_n} - A_4\sqrt{b\log n})^2\}\right)\geq 1 - O(n^{-b + 2})
\end{equation}
With reasoning identical to the result for $u\in C_n$, this implies that for any $b>2$, for large enough $n(b)$, the event $\{P(u,B_n,\Net_n|\Theta)>q\alpha,\;\forall\;u\notin C_n\}$ has probability at least $1 - O(n^{-b + 2})\rightarrow1$. Applying a union bound to the event in \eqref{red-sea} completes the proof.\qed

\subsection{Proof of Theorem \ref{thm:consistency}}

	We will show that if the condition in \eqref{M-favors} holds, then the condition in \eqref{favors} from Theorem \ref{initial-to-community} holds when $B_n = C_n = C_{j,n}$ simultaneously across all $j\in\{1,2,\ldots,K\}$. This involves representing \eqref{favors} in terms of the model parameters when $B_n = C_n = C_{j,n}$. Specifically, we derive the normalized population deviation $\tilde a(u, C_{j,n}|\bar \vs):= (\expv S(u,C_{j,n},\Net_n) - \mu(u,C_{j,n}|\bar \vs))/\lambda_n$. First, note that for any fixed $j\leq K$, part (1) of Lemma \ref{expv-lemma} gives
	\[
	\sum_{v\in C_{j,n}}\bar s(v)\; = \;\lambda_n\bra\tilde \vpi,\mH_j\ket\cdot\sum_{v\in C_{j,n}}\psi(u)\; = \;n\lambda_n\bra\tilde \vpi,\mH_j\ket \tilde\pi_j
	\]
	and thus
	\[
	\bar s_T := \sum_{v\in N}\bar s(v) = \sum_{j=1}^K\sum_{v\in C_{j,n}}\bar s(v) = n\lambda_n\sum_{j=1}^K\bra\tilde \vpi,\mH_j\ket\tilde \pi_j = n\lambda_n\tilde \vpi^t \mH \vpi.
	\]
	Therefore, again applying part (1) of Lemma \ref{expv-lemma},
	\begin{align*}
	\mu(u,C_{j,n}|\bar \vs) := \sum_{v\in C_{j,n}} r_{uv}(\bar \vs) = \bar s(u) \sum_{v\in C_{j,n}}\frac{\bar s(v)}{\bar s_T} & = \bar s(u)\frac{\bra\tilde \vpi,\mH_j\ket\tilde \pi_j}{\tilde \vpi^t \mH \tilde \vpi}\\
	& = \lambda_n\psi(u)\frac{\bra\tilde \vpi,\mH_{c(u)}\ket\bra\tilde \vpi,\mH_j\ket\tilde \pi_j}{\tilde \vpi^t \mH \tilde \vpi}.
	\end{align*}
	Secondly,
	\begin{align*}
	\expv S(u,C_{j,n},\Net_n) & = \sum_{v\in C_{j,n}}\expv W_{uv}= \sum_{v\in C_{j,n}}\rho_nr_{uv}(\psi)\mH_{c(u)j} = \lambda_n\psi(u)\mH_{c(u)j}\tilde\pi_j.
	\end{align*}
	Thus,
	\begin{align}
	\tilde a(u, C_{j,n}|\bar \vs) & :=\frac{\expv S(u,C_{j,n},\Net_n) - \mu(u,C_{j,n}|\bar \vs)}{\lambda_n} \nonumber\\
	& = \psi(u)\tilde \pi_j\left(\mH_{c(u)j} - \frac{\bra\tilde \vpi,\mH_{c(u)}\ket\bra\tilde \vpi,\mH_j\ket}{\tilde \vpi^t \mH \tilde \vpi}\right).\label{npd-condition}
	\end{align}
	If $u\in C_{i,n}$, the expression in the parentheses from the right-hand-side of \eqref{npd-condition} is the $i,j$-th element of the matrix $\mH - \mH\tilde{\mPi}\mH/\tilde \vpi^t\mH\tilde \vpi$, with $\tilde \mPi:=\tilde \vpi\tilde \vpi^t$. By Assumption \ref{bounded-parameter-assumption}, $\psi(u)\geq m_-$ for all $u\in N$ and $i\leq K$, and $\tilde{\pi}_j$ is fixed. Thus, \eqref{M-favors} ensures that \eqref{favors} holds when $C_n = C_{j,n}$, simultaneously for $j\leq K$. Assumption \ref{bounded-parameter-assumption} also ensures that there exists $q>0$ such that for all $j\leq K$ and $n>1$, $|C_{j,n}|>qn$. This allows us to apply Theorem \ref{initial-to-community} to the sequences $B_n = C_n = C_{j,n}$, for each $j\leq K$. A union bound proves the result.\qed

\section{Cycles in Fixed Point Search} \label{app:cycle}
As remarked in Section \ref{ss:FPS}, it is possible for the SCS algorithm to reach a stable sequence $C_1,\ldots, C_J$ that is traversed by the update $U_\alpha(\cdot,\Net)$. If this happens, we apply the following routine to re-start the algorithm, or return the union of the sequence:
\begin{enumerate}[1.]
\item If $C_{i}\cap C_{i + 1} = \phi$ for any $i\leq J$, or if $C_J\cap C_1 = \phi$, terminate the iterations and do not extract a community.
\item{Otherwise, define $C^\ast = \cup_{i = 1}^J C_{i}$, and:
  \begin{enumerate}[(a)]
  \item If $C^\ast$ has been visited previously by SCS, extract $C^\ast$ into $\cC$.
  \item Otherwise, re-initialize with $C^\ast$.
  \end{enumerate}
}
\end{enumerate}

\section{Filtering of $\mathcal{B}_0$ and $\mathcal{C}$} \label{app:filtering}
To filter through $\mathcal{B}_0$ and $\mathcal{C}$, we use an inference procedure based on a set-wise $z$-statistic, analogous to the node-set $z$-statistic presented in Section \ref{theory}. Define $S(B):=\sum_{v\in B}S(v,B)$. Note that $S(B)$ has an easily derivable expectation and standard deviation under the continuous configuration model, which we denote (respectively) by $\mu(B|\theta)$ and $\sigma(B|\theta)$. We define the corresponding $z$-statistic and an approximate p-value by
\[
z(B|\theta) := \frac{S(B) - \mu(B|\theta)}{\sigma(B|\theta)},\;\;\;\;\;\;\; p(B|\theta):= 1 - \Phi(z(B|\theta))
\]
Before initializing the SCS algorithm on sets in $\cB_0$, we compute the p-value above for each member set, and remove any that are not significant at FDR level $\alpha = 0.05$. This greatly reduces the number of extractions CCME must perform, and reduces the probability of convergence on small, spurious communities.

We also use $z(B|\theta)$ to filter near-matches in $\mathcal{C}$, once all SCS extractions have terminated and empty sets removed. To do so, we require an overlap ``tolerance" parameter $\tau\in[0, 1]$. First, we create a (non-symmetric) $|\mathcal{C}|\times|\mathcal{C}|$ matrix $O$ with general element $O_{ij} := |C_i\cap C_j|/|C_i|$, which measures the proportional overlap of $C_i$ into $C_j$. After setting the diagonal of $O$ to zero, the filtering proceeds as follows:
\begin{enumerate}[1.]
	\item Find indices $i\ne j$ corresponding to the maximum entry of $O$.
	\item If $O_{ij}<\tau$, terminate filtering.
	\item Remove either $C_i$ or $C_j$ from $\mathcal{C}$, whichever has the smaller $z(B|\theta)$.
	\item Re-compute $O$, set its diagonal to zero, and return to step 1.
\end{enumerate}
For all simulations and real-data analyses in this paper, we employed this algorithm with $\tau = 0.9$.  To further decrease the computation time of CCME, as we proceed through $\mathcal{B}_0$, we skip sets that were formed from nodes that have already been extracted into $\mathcal{C}$. We find that, in practice, none of these adjustments harm CCME's ability to find statistically significant overlapping communities. Indeed, the simulation results mentioned in Section \ref{Simulations:overlapping} show that CCME outperforms competing methods with overlap capabilities.

%--------------------------------------------------------------------------------------------------------------------

\section{Simulation framework}\label{sim-framework}

Here we describe the benchmarking simulation framework used in Section \ref{Simulations}. In Table \ref{tab:sim_params}, we list and name parameters controlling the network model:

\begin{table}[!htb]
\caption{\label{tab:sim_params} Simulation model parameters}
\centering
\fbox{
\scriptsize
\begin{tabular}{l|l}
$n$: Number of nodes in communities &$n_b$: Number of nodes in background\\\hline
$m_{\max}$: Max community size &$m_{\min}$: Min community size\\\hline
$\tau_1$: Power-law for degree parameters &$\tau_2$: Power-law for community sizes\\\hline
$k$: Mean of degree parameter power-law &$k_{\max}$: Maximum degree parameter\\\hline
$s_e$: Within-community edge signal &$s_w$: Within-community weight signal\\\hline
$o_n$: Number of nodes in multiple communities &$o_m$: Number of memberships for overlap nodes\\\hline
$F$: Distributions of edge weights &$\sigma^2$: Variance parameter
 for $F$\\\hline
$\beta$: Power-law for strength parameters\ &
\end{tabular}}
%\caption{Parameters controlling the generation of a random synthetic network}
\end{table} 

\subsection{Simulation of community nodes}\label{comm_nodes}

The framework is capable of simulating networks with or without background nodes. We first describe the simulation procedure without background nodes, i.e.\ with $n_b = 0$. Later, we describe how to simulate a network with background nodes, which involves a slight modification to the procedure in this subsection. Regardless of the presence of background nodes, the first step is to determine community sizes and node memberships.

\subsubsection{Community structure and node degree/strength parameters}\label{comm-sizes}
Here we describe how to obtain a cover $\mathcal{C} := \{C_1, \ldots C_K\}$ of $n$ nodes. The following steps to obtain $\mathcal{C}$ are almost exactly as those from the LFR benchmark in \cite{lancichinetti2009benchmarks}, used extensively in \cite{lanc11} and \cite{xie13}:
\begin{enumerate}
\item Each of the $o_n$ overlapping nodes will have $o_m$ memberships. Let $n_m := n + o_n(o_m - 1)$ be the number of node \emph{memberships} present in the network.
\item Draw community sizes from a power law with maximum value $m_{\max}$, minimum value $m_{\min}$, and exponent $-\tau_2$, until the sum of community sizes is greater than or equal to $n_m$. If the sum is greater than $n_m$, we reduce the sizes of the communities proportionally until the sum is equal to $n_m$. 
\item Form a bipartite graph of community markers on one side and node markers on the other. Each community marker has number of empty node slots given by step (b), and each node has a number of memberships given by step (a). Sequentially pair node memberships and community node slots uniformly at random, without replacement, until every node membership is paired with a community.
\end{enumerate}
With the community assignments in hand, simulation of the network proceeds according to the Weighted Stochastic Block Model as outlined in Section \ref{Simulations}. We describe choices for particular components of this model in the following subsection.

\subsubsection{Simulation of edges and weights}\label{comm_model}
As described in Section \ref{Simulations}, we set the $\mP$ and $\mM$ matrices to have diagonals equal to $s_e$ and $s_w$ (respectively, see Table \ref{tab:sim_params}), and off-diagonals equal to 1. We note that this homogeneity facilitates creating networks with overlapping communities. With variance in the diagonal of $\mP$, for example, it would not be obvious with what probability to connect overlapping nodes that overlap to two of the same communities, simultaneously. It remains to obtain the strength and degree propensity parameters $\psi$ and $\phi$; we do so analogously to the simulation framework in \cite{lanc11}. We first draw $\phi$ from a power law with exponent $\tau_1$, mean $k$, and maximum $k_{\max}$ (see Table \ref{tab:sim_params}). Next we set $\psi$ by the formula $\psi(u) = \phi(u)^{\beta + 1}$. 

It is worth noting here that, under the model given below, the expected degree of node $u$ is \emph{approximately} $\phi(u)$ and the expected strength \emph{approximately} $\psi(u)$. Therefore, heterogeneity/skewness in $\phi$ and $\psi$ induce heterogeneity/skewness in the degrees and strengths of the simulated networks. However, by scaling $\phi$ and $\psi$, we can force the total expected degree and total expected strength of the simulated networks to exactly match $\phi_T$ and $\psi_T$, respectively. The scaling constants depend on $\mP$ and $\mM$ and are easily derivable from the model's generative algorithm (described in Section \ref{WSBM}).

\subsubsection{Parameter settings}\label{settings}
Here we list the ``default" settings of the simulation model, mentioned in Section \ref{Simulations}. The following choices for parameters were made regardless of the simulation setting: $\tau_2 = -2$, $k = \sqrt{n}$, $k_{\max} = 3k$ (three settings which make the degree/strength distributions skewed and the network sparse), $\beta = 0.5$ (to induce a non-trivial power law between strengths and degrees), $\tau_1 = -1$, $m_{min} = n / 5$, $m_{max} = 3m_{max}/2$ (settings which produce between about 3 and 7 communities per network with skewed size distribution), and $\sigma^2 = 1/2$. Other parameter choices are specific to the simulation settings described in Section \ref{Simulations}.

\subsection{Background node simulation}\label{bg_nodes}
If $n_b > 0$, we generate a network with $n$ community nodes, and then add $n_b$ background nodes, generating all remaining edges and weights according to the continuous configuration null model introduced in the main text. First, we obtain node-wise parameters for all $n + n_b$ nodes, yielding vectors $\phi$ and $\psi$ as in subsection \ref{comm_nodes}. In a simulated network without background, $\phi(u)$ and $\psi(u)$ are approximately $\expv[{d}(u)]$ and $\expv[{s}(u)]$, respectively. To ensure that this remains the case in a network for which background nodes are added after the simulation of community nodes, we must split up each $\degp(u)$ and $\strp(u)$ into community and background portions. A few other adjustments must also be made after the simulation of community nodes. To this end, define  

\begin{itemize}
\item{ $N_C := \{1, \ldots, n\}$; $\; N_B := \{n + 1, \ldots, n+n_b\}$ (community and background node sets)
%  \begin{itemize}
%  \item[$\rightarrow$] community and background %node sets
%  \end{itemize}
}
\item{ $\degp_{C, T} := {\sum}_{{N_C}}\degp(u)$; $\; \degp_{B, T} := {\sum}_{N_B}\degp(u)$ (target total degrees of community and background nodes)
%  \begin{itemize}
%  \item[$\rightarrow$] target total degrees of %community and background nodes
%  \end{itemize}
}
\item{ $\degp_C(u) := \frac{\degp_{C, T}}{\degp_T}\degp(u)$; $\; \degp_{B}(u) := \frac{\degp_{B, T}}{\degp_T}\degp(u)$ (target edge-counts between $u$ and the community and background nodes)
%  \begin{itemize}
%  \item[$\rightarrow$] target edge-counts between %$u$ and the community and background nodes
%  \end{itemize}
}
\item{ $\degp_{1, T} := {\sum}_{{N_C}}\degp_C(u)$; $\; \degp_{2, T} := {\sum}_{N_B}\degp_B(u)$ (target total degrees of community and background \emph{subnetworks})
%  \begin{itemize}
%  \item[$\rightarrow$] target total degrees of %community and background \emph{subnetworks}
%  \end{itemize}
}
\item{ $d_C^o(u) := {\sum}_{v\in{N_C}}A_{uv}$; $\; d_B^o(u) := {\sum}_{v\in{N_B}}A_{uv}$ (observed edge-counts between $u$ and the community and background nodes)
%  \begin{itemize}
%  \item[$\rightarrow$] observed edge-counts %between $u$ and the community and background nodes
%  \end{itemize}
}
\end{itemize}
The above definitions exist analogously for the strength parameters $\strp$ (replacing ``$d$" with ``$s"$ where appropriate). The word ``target" above indicates that we will set up the background simulation model so that these values are the approximate expected values of the graph statistics they represent. 

\subsubsection{Adjusted community-node simulation model}
The only adjustment to be made to the simulation of community nodes, described in subsection \ref{comm_model}, is that the degree and strength parameters are set to a certain \emph{fraction} of their original values. This accounts for the eventual addition of background nodes, where the remaining (random) part of each nodes degree and strength is to be simulated. So, the community-node simulation (if background nodes are to be added later) follows the process described in subsection \ref{comm_nodes} with degree parameters $\{\degp_C(1), \ldots, \degp_C(n)\}$ and strength parameters $\{\strp_C(1)\ldots \strp_C(n)\}$. 

\subsubsection{Edges and weights for background}
For the simulation of the background nodes (following the community nodes) our goal is to specify adjusted degree/strength parameters $\degp'$ and $\strp'$ given the observed edge-sums $\{d_C^o(1), \ldots, d_C^o(n)\}$ and weight-sums $\{s_C^o(1), \ldots, s_C^o(n)\}$ from the community nodes. In what follows we describe this specification for $\degp'$ only; the specification for $\strp'$ is exactly analogous. We first represent $\degp_T'$, which we have yet to determine, into community and background totals:
\[\degp_T' = \degp_{C, T}' + \degp_{B, T}'\]
Since the background subnetwork has not yet been generated, we make the specification $\degp'(u) := \degp(u)$ for all $u\in N_B$, and hence $\degp_{B, T}' = \degp_{B, T}$ is known. To address $\degp_{C, T}'$, note that for each community node $u\in N_C$, $\degp'(u)$ may be represented similarly:
\[
\degp'(u) = \degp_C'(u) + \degp_B'(u)
\]
This reduces the problem of specifying $\degp'(u)$ to specifying $\degp_C'(u)$ and $\degp_B'(u)$. Since the community node subnetwork has already been generated, we set $\degp_C'(u) \gets d_C^o(u)$. Next, recalling that $\degp_B(u) := \frac{\degp_{B, T}}{\degp_T}\degp(u)$, we make the specification $\degp_B'(u) := \frac{\degp_{B, T}}{\degp_T'}\degp(u)$ (which must be solved for via $\phi'_T$, in the following). So, in total, we have
\[
\degp'(u) = \begin{cases}d_C^o(u) + \frac{\degp_{B, T}}{\degp_T'}\degp(u),&u\in N_C\\
\degp(u),&u\in N_B\end{cases}
\]
Therefore we can solve for $\degp_T'$ with the equation
\begin{align*}
\degp_T' :=& \;\sum_{u\in N_C\cup N_B} \degp'(u)\\
=& \;\sum_{u\in N_C}\left[d_C^o(u) + \frac{\degp_{B, T}}{\degp_T'}\degp(u)\right] + \sum_{u\in N_B}\degp(u)\\
=& \;d_{C, T}^o + \frac{\degp_{B, T}}{\degp_T'}\degp_{C, T} + \degp_{B, T}
\end{align*}
Where $d_{C, T}^o:=\sum_{u\in N_C}d_C^o(u)$. The solution for $\degp_T'$ from this quadratic is
\begin{equation}
\label{eq:doT}
\degp_T' = \dfrac{\degp_{B, T} + d_{C, T}^o}{2} + \sqrt{\dfrac{(\degp_{B, T} + d_{C, T}^o)^2}{4} + \degp_{C, T}\degp_{B, T}}
\end{equation}
which then immediately gives the full vector ${\bf \degp}'$. We can now simulate the remaining edges in the network. Specifically, for each $u\in N_B$ and each $v\in N_C\cup N_B$, we simulate an edge according to
\begin{equation}\label{eq:ccm_append}
\prob\left(A_{uv} = 1\right) = \frac{\degp'(u)\degp'(v)}{\degp_T'}\text{ independent across node pairs}
\end{equation}
We solve for $\strp'$ analogously. Then for each $u\in N_B$ and each $v\in N$, we simulate an edge weight according to
\[
W_{uv} = \begin{cases}f_{uv}({\degp'},{\strp'})\xi_{uv}, & A_{uv} = 1\\
0, & A_{uv} = 1
\end{cases}
\]
where $\xi{\sim}F$, is as it was for the generation of the community node subnetwork.

The above simulation steps correspond precisely to the continuous configuration model with parameters $(\degp',\strp',F, \sigma)$. Some basic computational trials have shown that, for large networks, the solution for $\phi_T'$ is quite close to $\phi_T$. Therefore, for each $u\in N_B$, $\expv(d(u))$ is almost exactly $\degp(u)$, i.e.\ what it would be under the model in \ref{comm_model}, without background nodes. The same holds for the strengths and expected strengths. Together with equation \ref{eq:ccm_append}, this implies the background nodes are behaving according to the continuous configuration model, even as they are a sub-network within a larger network with communities. 

To illustrate these points, we simulated a sample network from the default framework with parameters $n = 5,000$, $n_b = 1, 000$, $s_e = s_w = 3$, disjoint communities, and other parameters specified by \ref{settings}. These settings are akin to what was used in subsection \ref{Simulations} of the main text. First we plotted $\phi'$ and $\psi'$ against the empirical strengths and degrees with lowess curves to check the match. Figure $\ref{fig:str_deg}$ shows the fit is essentially linear. 
\begin{figure}[!htb]
\centering
\makebox{
\includegraphics[scale = 0.20]{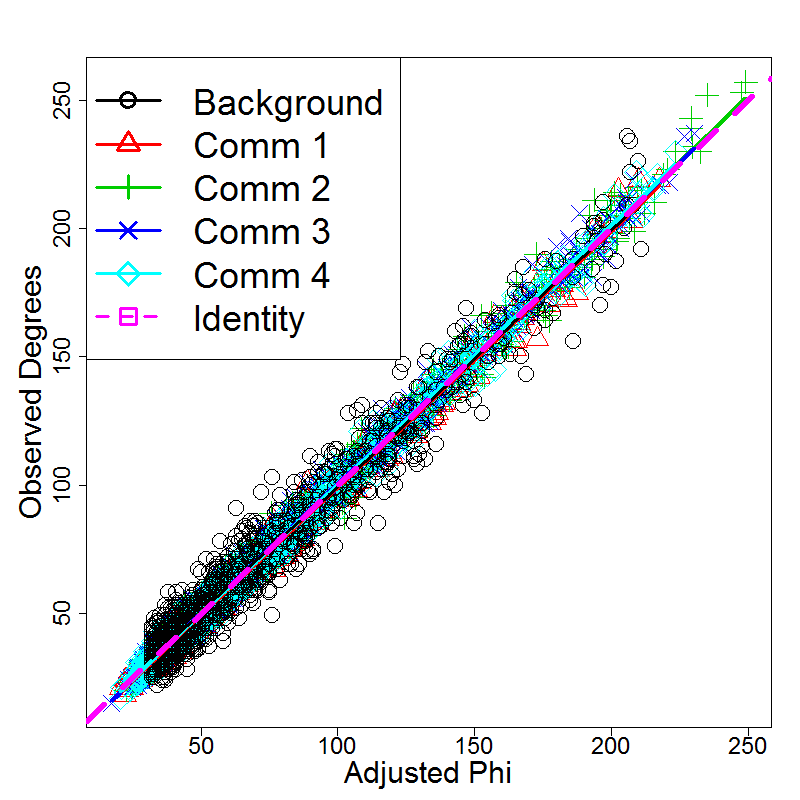}
\includegraphics[scale = 0.20]{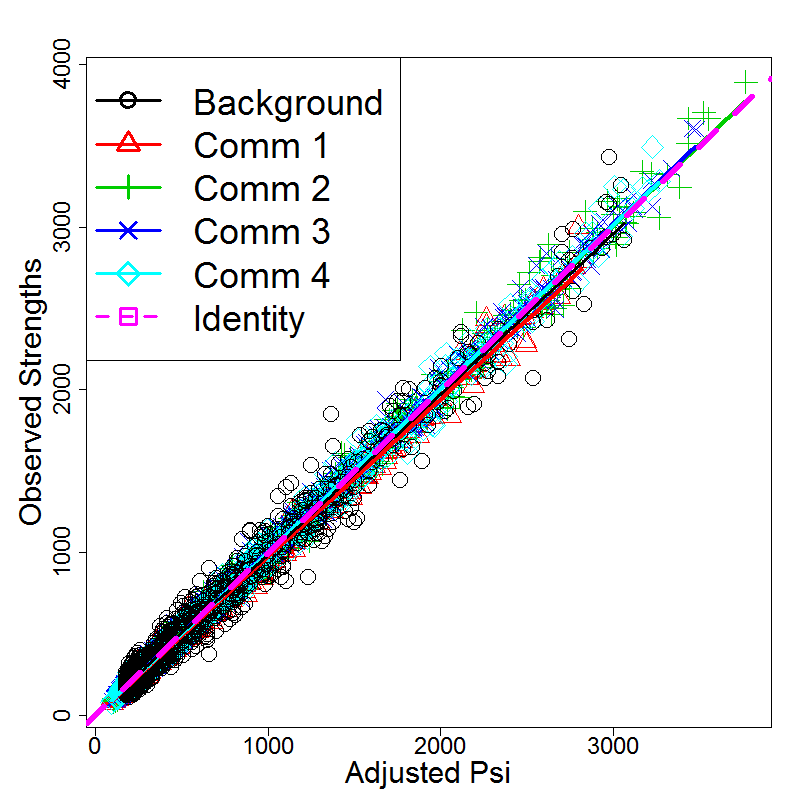}
}
\caption{\label{fig:str_deg} Empirical degrees/strengths vs.\ adjusted parameters for the example network}
\end{figure}
Second, for each node $u\in N$ and for each node block $B$ (either a true community or the background node set) we may calculate an empirical $z$-score for $S(u,B,\mathcal{G})$, as described in subsection \ref{Method:CLT} of the main text. The $z$-score for $S(u,B,\Net)$ is a measure of connection significance, with respect to the continuous configuration model (and also modularity, see Section \ref{theory-wm}) between $u$ and $B$. Let $K$ be the number of true communities in the network. For each $i,j = 1,\ldots,K+1$, where $K+1$ is the index of the background node block, we computed the empirical average of $z$-statistics between nodes $u$ from node block $i$ the node block $B$ corresponding to index $j$. Theses empirical averages can be arranged in a $(K+1)\times(K+1)$ matrix showing the average inter-block connectivities of the network. In Figure \ref{fig:avg_stats} we display a visualization of this matrix, which shows preferential connection within communities, and roughly null connection between the background nodes and all blocks.

\begin{figure}[!htb]
\centering
\includegraphics[scale = 0.6]{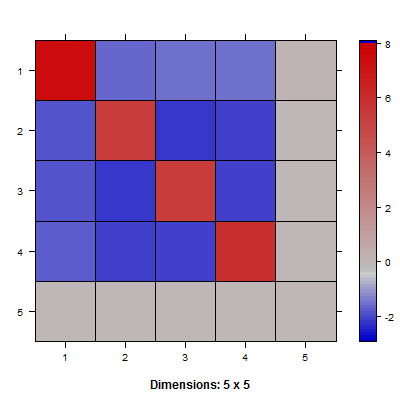}
\caption{\label{fig:avg_stats} Average empirical $z$-statistics between nodes and node blocks}
\end{figure}

\singlespacing
\bibliography{Refs}

\end{document}